\DeclareMathOperator{\argmin}{argmin}
\newcommand{\bPsi}{\boldsymbol{\Psi}}
\newcommand{\bEta}{\boldsymbol{\eta}}
\newcommand{\eps}{\varepsilon}
\newtheorem{proposition}{Proposition}
\newtheorem{assumption}[proposition]{Assumption}
\newtheorem{theorem}[proposition]{Theorem}
\theoremstyle{remark}
\newtheorem{remark}[proposition]{Remark}
\newtheorem{lemma}[proposition]{Lemma}
\begin{document}

\title{The validity of bootstrap testing in the threshold framework}
\author[1]{Simone Giannerini}
\author[1,2]{Greta Goracci}
\author[3]{Anders Rahbek}

\affil[1]{Department of Statistical Sciences, University of Bologna, Italy}
\affil[2]{Faculty of Economics and Management, Free University of Bozen-Bolzano, Italy}
\affil[3]{Department of Economics, University of Copenhagen, Denmark}

\maketitle
\begin{abstract}
We consider bootstrap-based testing for threshold effects in non-linear threshold autoregressive (TAR) models. It is well-known that classic tests based on asymptotic theory tend to be oversized in the case of small, or even moderate sample sizes, or when the estimated parameters indicate non-stationarity, as often witnessed in the analysis of financial or climate data. To address the issue we propose a supremum Lagrange Multiplier test statistic (sLMb), where the null hypothesis specifies a linear autoregressive (AR) model against the alternative of a TAR model. We consider a recursive bootstrap applied to the sLMb statistic and establish its validity. This result is new, and requires the proof of non-standard results for bootstrap analysis in time series models; this includes a uniform bootstrap law of large numbers and a bootstrap functional central limit theorem. These new results can also be used as a general theoretical framework that can be adapted to other situations, such as regime-switching processes with exogenous threshold variables, or testing for structural breaks. The Monte Carlo evidence shows that the bootstrap test has correct empirical size even for small samples, and also no loss of empirical power when compared to the asymptotic test. Moreover, its performance is not affected if the order of the autoregression is estimated based on information criteria. Finally, we analyse a panel of short time series to assess the effect of warming on population dynamics.
\end{abstract}



\section{Introduction}\label{sec:intro}
The problem of testing for a linear time series model versus its threshold extension has attracted considerable attention for a number of reasons. First and foremost, threshold autoregressive models (TAR) are among the simplest nonlinear specifications and retain a good interpretability. Second, they can encompass many complex features such as jumps, limit-cycles, time irreversibility and chaos, see e.g. \cite{Ton90,Ton11}. \cite{Pet92} proved that TAR models approximate a wide range of nonlinear autoregressive processes.
Moreover, they have been proven to describe successfully many real-world phenomena in economics and finance, see e.g. \cite{Cha17b,Han11,Ton17}. For population biology and climate studies see e.g. \cite{Ste98,Yao00}.
\par
Seminal works on asymptotic quasi-likelihood ratio tests for threshold autoregression include \citet{Cha90a}, \citet{Cha90b}, \citet{Cha91}. Other contributions include those of \citet{Pet86}, \cite{Su17} and that of \citet{Tsa98} for the multivariate case. Tests based upon Lagrange Multipliers were proposed in \citet{Luu88} for the smooth transition case and \citet{Won97, Won00} for TAR models with conditional heteroscedasticity, see also \cite{Ton11} for a review.
\par
The main theoretical problem associated with testing for threshold autoregression is the nuisance parameter (the threshold) being present only under the alternative hypothesis, as adduced in \cite{Dav77,Dav87} and \cite{And93}. In the present context, one solution is to derive the test statistic as a random function of the nuisance parameter. Then, the overall test statistic is the supremum (or some other convenient function) of the statistic over the grid of values of the nuisance parameter. The derivation of the null distribution of the overall test statistic requires proving the stochastic equicontinuity (tightness) of the sequence of random functions, see e.g. \cite{Vaa98}, and this is often the most challenging task.
\par
One key issue with asymptotic tests is the sample size requirement to deliver a good performance. Typically, the rate of convergence towards the asymptotic null distribution depends upon the true parameters values of the data generating process and might produce a size bias that can be severe, for instance when the processes are close to non-stationarity and/or non-invertibility, see e.g. \cite{Gor21}. Furthermore, the null distribution, which has no closed analytical form, depends both upon the threshold range and the number of tested parameters, so that one has to make use of simulated critical values for each combination of the threshold grid and number of parameters, see \cite{And03}. One way to overcome the aforementioned problems is to resort to resampling methods. \cite{Han96} proposes tests based on a stochastic permutation device where the score function is randomly perturbed through an auxiliary random variable. The same approach has been deployed in \cite{Li11} to test a linear model against its threshold ARMA extension by means of a quasi likelihood ratio statistic. More recently, \cite{Hil21} adopts a similar approach to introduce robust conditional moment tests of omitted nonlinearity. To the best of our knowledge, to date, there are no available results on the validity of the classical bootstrap (both parametric and nonparametric) for testing a linear AR model against a TAR model.
\par
 In this paper we fill this gap and provide a proof of the validity of the test based on a residual bootstrap. In particular, we consider a supremum Lagrange Multiplier test statistic (sLMb) where the null hypothesis specifies a linear AR$(p)$ model against the alternative of a TAR$(p)$ model. One of the main advantages of Lagrange multiplier tests over likelihood ratio tests is that the former only need estimating the model under the null hypothesis and avoid direct estimation of the TAR model.
 \par
 We prove that, under the null hypothesis, the bootstrap distribution of the test statistic coincides with the asymptotic distribution derived in \cite{Cha90a} for the likelihood ratio test, namely, a functional of a centered Gaussian process. Note that, as also shown for instance in \cite{Han96}, the Wald, the supLM and the likelihood-ratio test statistics share the same asymptotic distribution. The inherent difficulties associated with working in the bootstrap framework, i.e. simultaneously coping with the two kinds of randomness (the first one is the sampling variability and the second one is the bootstrap variability) are amplified by the discontinuity of the threshold function and the absence of the nuisance parameter under the null hypothesis. To this end, we provide a \emph{uniform} bootstrap law of large numbers and a functional bootstrap central limit theorem that can be used as a general theoretical framework that can be adapted to other situations, such as regime-switching processes with exogenous threshold variables or testing for structural breaks.
\par
The simulation study shows that the bootstrap test (sLMb) has a correct empirical size for a series' length as small as 50, even when the data generating process is close to non-stationarity, a situation that produces a severe oversize in the asymptotic version of the test. Moreover, the behaviour of the bootstrap test is not influenced by treating the order of the tested process as unknown and estimating it by means of the AIC. Again, this is not the case with the asymptotic test, which results oversized. The good performance of the bootstrap test in small samples makes it applicable to many applied situations where either data collection/production is expensive, as in the experimental context, or longer series are simply not available, as in e.g. \cite{Yao00}.
\par
We apply our test to a panel of 12 short time series of populations of larvae of the pyralid moth \emph{Plodia interpunctella} under different experimental conditions. The data come from \cite{Lau19} where the aim was to assess the effect of warming, which is one of the consequences of climate change, in connection to age structure, density, and generation cycles. We find a significant threshold effect and an appropriate specification that accounts for the 6-week characteristic asymmetric cycle.
\par
The rest of the paper is organised as follows. Section~\ref{sec:pre} introduces the problem and describes the theory behind the standard asymptotic sLM test. In Section~\ref{sec:boot} we present the bootstrap version of the test, together with the main results on its validity. Section~\ref{sec:MCstudy} shows the finite sample behaviour of the tests where our bootstrap test (sLMb) is compared to the asymptotic test (sLMa), also when the order of the tested process is unknown and has to be estimated. Section~\ref{sec:applic} is devoted to the real application. All the proofs are detailed in Section~\ref{sec:proof}.

\subsection{Notation}\label{sec:notation}
We write $P^*(\cdot)$, $E^*[\cdot]$ to indicate probability and expectation conditional on the data, respectively; $\xrightarrow[n\to\infty]{w^*}_p$ denotes the weak convergence in probability and $Y^*_n\xrightarrow[n\to\infty]{p^*}_p Y$ or, equivalently,  $Y^*_n-Y=o_{p^*}(1)$,  means that, for any $\delta>0$, $P^*(\|Y^*_n-Y\|>\delta)\xrightarrow[n\to\infty]{p} 0$; lastly, $Y_n^*=O_{p^*}(1)$ means that, for any $\delta>0$, there exists $M>0$ such that $P(P^*(\|Y_n^*\|>M)<\delta)$ is arbitrarily close to one for sufficiently large $n$. Here, $\|\cdot\|$ is the $\mathcal{L}^2$ matrix norm (the Frobenius' norm, i.e. $\|A\|=\sqrt{\sum_{i=1}^{n}\sum_{j=1}^{m}|a_{ij}|^2}$, where $A$ is a $n\times m$ matrix); $\|A\|=(E[A]^r)^{1/r}$ is the $\mathcal{L}^r$ norm of a random matrix. Moreover, let $\mathcal{D}_\mathds{R}(a,b)$, $a<b$ be the space of functions from $(a,b)$ to $\mathds{R}$ that are right continuous with left-hand limits.

\section{Preliminaries}\label{sec:pre}

Let the time series $\{X_t\}$ follow the threshold autoregressive TAR$(p)$ model defined by the difference equation:
\begin{align}\label{eq:TAR}
  X_t&=\phi_{0}+\sum_{i=1}^{p}\phi_{i}X_{t-i}+\left(\Psi_{0}+\sum_{i=1}^{p}\Psi_{i}X_{t-i}\right)I(X_{t-d}\leq r)+\eps_t.
\end{align}
\noindent
The positive integers $p$ and $d$ are the autoregressive order and the delay parameter, respectively; we assume $p$ and $d$ to be known. Moreover $I(\cdot)$ indicates the indicator function and $r\in\mathds{R}$ is the threshold parameter. The innovations $\{\eps_t\}$ are independent and identically distributed (iid) with $E[\eps_t]=0$ and $E[\eps_t^2]=\sigma^2<\infty$. For each $t$, $\eps_t$ is independent of $X_{t-1}$, $X_{t-2}$, \dots. Clearly, Eq.~(\ref{eq:TAR}) specifies a regime-switching process where each regime follows a linear autoregressive process. The parameters are given by
\begin{align*}
\boldsymbol\phi&=(\phi_{0},\phi_{1},\ldots,\phi_{p})^\intercal\in\Theta_{\phi};\\
\boldsymbol\Psi&=(\Psi_{0},\Psi_{1},\ldots,\Psi_{p})^\intercal\in\Theta_{\Psi};\\
\boldsymbol\eta&=(\boldsymbol\phi^\intercal,\boldsymbol\Psi^\intercal,\sigma^2)^\intercal\in\Theta=\Theta_{\phi}\times\Theta_{\Psi}\times(0,+\infty),
\end{align*}
with $\Theta_{\phi}$ and $\Theta_{\Psi}$ being subsets of $\mathds{R}^{p+1}$. We use $\boldsymbol\eta=(\boldsymbol\phi^\intercal,\boldsymbol\Psi^\intercal,\sigma^2)^\intercal$ to refer to unknown parameters, whereas the true parameters are indicated by
$$\boldsymbol\eta_0=(\boldsymbol\phi_0^\intercal,\boldsymbol\Psi_0^\intercal,\sigma^2_0)^\intercal=(\phi_{0,0},\phi_{0,1},\ldots,\phi_{0,p}, \Psi_{0,0},\Psi_{0,1},\ldots,\Psi_{0,p},\sigma^2_0)^\intercal.$$
We test whether the TAR model fits the data significantly better than its linear counterpart. As $\boldsymbol\Psi$  contains the differences of the autoregressive parameters in the two regimes, the system of hypotheses reduces to
$$
\begin{cases}
  H_0&:\boldsymbol\Psi=\boldsymbol 0 \\
  H_1&: \boldsymbol\Psi\neq\boldsymbol 0,
\end{cases}
$$
\noindent
where $\boldsymbol 0$ is the vector of zeros. Suppose we observe $\{X_t,t=1,\dots,n\}$. We develop the Lagrange Multiplier (hereafter LM) test based on the quasi Gaussian log-likelihood conditional on the initial values $X_0,X_{-1},\ldots,X_{-p+1}$:
\begin{equation}\label{log-like}
\ell_n(\boldsymbol\eta,r)= -\frac{1}{2\sigma^2}\sum_{t=1}^n \eps_t^2(\boldsymbol\eta,r),
\end{equation}
where
\begin{align}\label{eq:eps}
\eps_t(\boldsymbol\eta,r)=X_t&-\left\{\phi_{0}+\sum_{i=1}^{p}\phi_{i}X_{t-i}\right\}-\left\{\Psi_{0}+\sum_{i=1}^{p}\Psi_{i}X_{t-i}\right\}I\left(X_{t-d}\leq r\right).
\end{align}
\noindent
Under the null hypothesis, model (\ref{eq:TAR}) reduces to an AR$(p)$ model:
\begin{equation}\label{eq:AR}
  X_t=\phi_0+\sum_{i=1}^{p}\phi_iX_{t-i}+\eps_t,
\end{equation}
and let $\tilde{\boldsymbol{\phi}}=(\tilde{\phi}_{0},\tilde{\phi}_{1},\ldots,\tilde{\phi}_{p})^\intercal$ be the Maximum Likelihood Estimator (hereafter MLE) of the autoregressive parameters in Eq.~(\ref{eq:AR}) based upon the Gaussian likelihood, i.e.:
$$
\tilde{\boldsymbol{\phi}}=\underset{\boldsymbol{\phi}\in\Theta_\phi}{\argmin}\;-\frac{1}{2\sigma^2}\sum_{t=1}^{n}\eps_t^2((\boldsymbol{\phi},\boldsymbol 0,\sigma^2),r).
$$
The associated residuals (\textit{restricted} residuals) are
\begin{align}
\tilde{\eps}_t&=X_t-\tilde{\phi}_0-\sum_{i=1}^{p}\tilde{\phi}_iX_{t-i} =(\phi_{0,0}-\tilde{\phi}_0)+\sum_{i=1}^{p}(\phi_{0,i}-\tilde{\phi}_i)X_{t-i}+\eps_t.\label{eps_tilde}
\end{align}
Moreover, $\sigma^2$ is estimated by
\begin{equation}\label{eq:sigma}
 \tilde{\sigma}^2=\frac{1}{n-p-1}\sum_{t=1}^{n}\tilde{\eps}^2_t.
\end{equation}
\noindent
Lastly, define $\tilde{\bEta}=(\tilde{\boldsymbol{\phi}}^\intercal,\boldsymbol 0^\intercal,\tilde\sigma^2)$, i.e. $\tilde{\bEta}$ is the resticted MLE under the null hypothesis.
\par
In order to test the null hypothesis define:
$$\left.\frac{\partial\tilde\ell_n}{\partial\boldsymbol\eta}(r)= \left(\frac{\partial\tilde\ell_n}{\partial\boldsymbol\phi}, \frac{\partial\tilde\ell_n}{\partial\boldsymbol\Psi}(r)\right) =\frac{\partial\ell_n(\boldsymbol\eta,r)}{\partial\boldsymbol\eta}\right|_ {\boldsymbol\eta=\tilde{\bEta}}$$
i.e. the score function, evaluated at the restricted estimators. The supremum Lagrange multipliers test statistic (hereafter supLM) is
  \begin{align}
  T_n&=\sup_{r\in[r_L,r_U]}T_{n}(r)\label{eq:Tn},\\
  \text{where }&\nonumber\\
  T_{n}(r)&=\left(\frac{\partial \tilde{\ell}_n }{\partial \boldsymbol\Psi}(r)\right)^\intercal \left({I}_{n,22}(r)-{I}_{n,21}(r){I}_{n,11}^{-1} {I}_{n,12}(r)\right)^{-1}\frac{\partial \tilde{\ell}_n }{\partial \boldsymbol\Psi}(r)\label{eq:Tnr}
  \end{align}
with $[r_L,r_U]$ being a data driven interval, e.g. $r_L$ and $r_U$ can be some percentiles of the observed data. Define the information matrix as follows:
\begin{equation}\label{eq:information_fun}
  I_n(r)=\begin{pmatrix}
     I_{n,11} & I_{n,12}(r) \\
     I_{n,21}(r) & I_{n,22}(r)
   \end{pmatrix}:=\begin{pmatrix}
                   -\frac{\partial^2\ell_n(\boldsymbol\eta,r)}{\partial\boldsymbol\phi\partial\boldsymbol\phi^\intercal} & -\frac{\partial^2\ell_n(\boldsymbol\eta,r)}{\partial\boldsymbol\phi\partial\boldsymbol\Psi^\intercal} \\
                   -\frac{\partial^2\ell_n(\boldsymbol\eta,r)}{\partial\boldsymbol\Psi\partial\boldsymbol\phi^\intercal} & -\frac{\partial^2\ell_n(\boldsymbol\eta,r)}{\partial\boldsymbol\Psi\partial\boldsymbol\Psi^\intercal}
                 \end{pmatrix},
\end{equation}
and
  \begin{equation}\label{eq:information_asym}
    I_\infty(r)=\begin{pmatrix}
     I_{\infty,11} & I_{\infty,12}(r) \\
     I_{\infty,21}(r) & I_{\infty,22}(r)
   \end{pmatrix}
  \end{equation}
  where $I_{\infty,22}(r)=I_{\infty,12}(r)=I^\intercal_{\infty,21}(r)$ are $(p+1)\times (p+1)$ symmetric matrices whose $(i+1,j+1)$th element is

  \begin{align*}
    &E[I(X_{t-d}\leq r)], & \mbox{if } i=0,j=0 \\
    &E[X_{t-j}I(X_{t-d}\leq r)], & \mbox{if } i=0,j\neq0 \\
    &E[X_{t-i}X_{t-j}I(X_{t-d}\leq r)], & \mbox{if } i\neq0,j\neq0
  \end{align*}
  and $I_{\infty,11}=I_{\infty,22}(\infty)$. Here and in the following, $P(\cdot)$ and $E[\cdot]$ are, respectively,  the probability and expectation taken under the true probability distribution for which the null hypothesis holds. As in \citet{Cha90a}, the null distribution of the supLM test statistic is a functional of the centered Gaussian process $\left\{\xi(r),\;r\in\mathds{R}\right\}$ with covariance kernel
  $$\Sigma(r_1,r_2)=\sigma_0^{-2}\left\{I_{\infty,22}(r_1\wedge r_2)-I_{\infty,21}(r_1)I_{\infty,11}^{-1}I_{\infty,12}(r_2)\right\},$$
  \noindent
   where $a_1\wedge a_2=\min\{a_1,a_2\}$, for any $a_1,a_2\in\mathds{\mathds{R}}$.
   Under standard regularity conditions as in \cite{Cha90a}, it holds that
   \begin{equation}\label{eq:Tinf}
   T_n\xrightarrow[n\to\infty]{w}\sup_{r\in[r_L,r_U]}\xi(r)^\intercal\Sigma(r,r)^{-1}\xi(r):= T_\infty,
   \end{equation}
\noindent
where $\xrightarrow[n\to\infty]{w}$ means the convergence in distribution as the sample size $n$ increases.

\section{The bootstrap}\label{sec:boot}
We focus on the following residual-based bootstrap approach. Let $\{\eps_t^{*}\}$ be sampled with replacement from the re-centred residuals $\tilde\eps_t^c:=\tilde\eps_t-n^{-1}\sum_{t=1}^{n}\tilde\eps_t$, where $\tilde\eps_t$ are defined in Eq.~(\ref{eps_tilde}). Consider the recursively defined bootstrap process generated by the bootstrap parameters $\boldsymbol\phi^*=(\phi_0^*,\phi_1^*,\dots,\phi_p^*)^\intercal$:
\begin{equation}\label{eq:boot_proc_gen}
  {X}_t^{*}=\phi^*_{0}+\sum_{i=1}^{p}\phi^*_{i}X^{*}_{t-i}+\eps_t^{*},
\end{equation}
where the initial values $X^*_{0},X^*_{1},\dots,X^*_{-p+1},$ are equal to the sample counterpart. We consider the case where the bootstrap parameters are the restricted MLE, i.e. $\boldsymbol\phi^*=\tilde{\boldsymbol\phi}$; therefore the process defined in Eq.~(\ref{eq:boot_proc_gen}) equals:
\begin{equation}\label{eq:boot_proc}
  {X}_t^{*}=\tilde\phi_{0}+\sum_{i=1}^{p}\tilde\phi_{i}X^{*}_{t-i}+\eps_t^{*},
\end{equation}
\noindent
 which is an example of the so-called restricted bootstrap, see \citet{Cav21b}. Given the bootstrap sample in Eq.~(\ref{eq:boot_proc}), $\{X_t^*,t=1,\dots,n\}$, the bootstrap log-likelihood function results:
\begin{equation}\label{eq:loglikb}
\ell_n^*(\boldsymbol\eta,r)= -\frac{1}{2\sigma^2}\sum_{t=1}^n \eps_t^{*2}(\boldsymbol\eta,r),
\end{equation}
where $\eps_t^{*}(\boldsymbol\eta,r)$ is defined as in Eq.~(\ref{eq:eps}) with $X$ being replaced by $X^*$:
\begin{align}\label{eq:eps_boot}
\eps^*_t(\boldsymbol\eta,r)=X^*_t&-\left\{\phi_{0}+\sum_{i=1}^{p}\phi_{i}X^*_{t-i}\right\} -\left\{\Psi_{0}+\sum_{i=1}^{p}\Psi_{i}X^*_{t-i}\right\}I\left(X^*_{t-d}\leq r\right).
\end{align}
\noindent
Moreover, let $D_{t}^*(r)$ denote the first-order derivative of $\eps^*_{t+1}(\boldsymbol\eta,r)$ with respect to $\boldsymbol\eta$. It follows that:
\begin{align}\label{eq:epsder}
 D_{t}^*(r)&= \left(-1,-X_t^*,\dots,-X_{t-p+1}^*,-I(X^*_{t-d+1}\leq r),\right.\nonumber\\
 &\quad\left.-X_t^*I(X^*_{t-d+1}\leq r),\dots,-X_{t-p+1}^*I(X^*_{t-d+1}\leq r)\right)^\intercal.
\end{align}
Similar to Eq.~(\ref{eq:information_fun}), the bootstrap observed information matrix is defined by:
\begin{align}
  I_n^*(r)&=\begin{pmatrix}
     I^*_{n,11}    & I^*_{n,12}(r) \\
     I^*_{n,21}(r) & I^*_{n,22}(r)
   \end{pmatrix}\nonumber\\
   &=\begin{pmatrix}
                   -\frac{\partial^2\ell_n^*(\boldsymbol\eta,r)}{\partial\boldsymbol\phi\partial\boldsymbol\phi^\intercal} & -\frac{\partial^2\ell_n^*(\boldsymbol\eta,r)}{\partial\boldsymbol\phi\partial\boldsymbol\Psi^\intercal} \\
                   -\frac{\partial^2\ell_n^*(\boldsymbol\eta,r)}{\partial\boldsymbol\Psi\partial\boldsymbol\phi^\intercal} & -\frac{\partial^2\ell_n^*(\boldsymbol\eta,r)}{\partial\boldsymbol\Psi\partial\boldsymbol\Psi^\intercal}
                 \end{pmatrix}=\frac{1}{\sigma^{*2}}\sum_{t=1}^{n}D^*_{t-1}(r)D^{*\intercal}_{t-1}(r).\label{eq:inf_matrix_boot}
\end{align}
 Let $\tilde{\boldsymbol\phi}^*=(\tilde{\phi}_0^*,\tilde{\phi}_1^*,\dots,\tilde{\phi}_p^*)$ be the MLE computed upon $\{X_t^*,t=1,\dots,n\}$ defined in Eq.~(\ref{eq:boot_proc_gen}) and $\tilde\sigma^{*2}=(n-p-1)^{-1}\sum_{t=1}^{n-p-1}\tilde\eps_t^{*2}$, with $\tilde\eps_t^{*2}$ being the corresponding bootstrap restricted residuals. In analogy with standard asymptotic theory, we define $\tilde\bEta^*=(\tilde{\boldsymbol\phi}^{*\intercal}, \boldsymbol0^\intercal,\tilde\sigma^{*2})$ to be the bootstrap estimator maximising the bootstrap loglikelihood function in Eq.~(\ref{eq:loglikb}). Let
\begin{equation}\label{eq:boot_score}
\left.\frac{\partial\ell_n^*}{\partial\boldsymbol\eta}(r) =\frac{\partial\ell_n^*(\boldsymbol\eta,r)}{\partial\boldsymbol\eta}\right|_ {\boldsymbol\eta=\tilde\bEta},\quad
\left.\frac{\partial\tilde{\ell}_n^*}{\partial\boldsymbol\eta}(r) =\frac{\partial\ell_n^*(\boldsymbol\eta,r)}{\partial\boldsymbol\eta}\right|_ {\boldsymbol\eta=\tilde\bEta^*}
\end{equation}
be the bootstrap score function evaluated in $\tilde\bEta$ and $\tilde\bEta^*$, respectively.
The partial derivative of $\ell^*(\boldsymbol\eta,r)$ with respect to $\boldsymbol\phi$ does not depend upon $r$ and, as before, we partition $\frac{\partial\ell_n^*}{\partial\boldsymbol\eta}(r)$ and $\frac{\partial\tilde{\ell}_n^*}{\partial\boldsymbol\eta}(r)$ according to $\boldsymbol\phi$ and $\boldsymbol\Psi$:
\begin{equation}\label{eq:boot_score2}
\frac{\partial\ell_n^*}{\partial\boldsymbol\eta}(r)=\left(\frac{\partial\ell_n^*}{\partial\boldsymbol\phi}, \frac{\partial\ell_n^*}{\partial\boldsymbol\Psi}(r)\right),\qquad \frac{\partial\tilde{\ell}_n^*}{\partial\boldsymbol\eta}(r)=\left(\frac{\partial\tilde{\ell}_n^*}{\partial\boldsymbol\phi}, \frac{\partial\tilde{\ell}_n^*}{\partial\boldsymbol\Psi}(r)\right).
\end{equation}
Let
\begin{equation}\label{eq:boot_score_process}
 \left\{\frac{\partial\ell_n^*}{\partial\boldsymbol\eta}(r)\right\}= \left\{\frac{\partial\ell_n^*}{\partial\boldsymbol\eta}(r), r_L\leq r\leq  r_U\right\}
\end{equation}
be the bootstrap score process as a function of $r$, evaluated in $\boldsymbol\eta=\tilde\bEta$. We compute the bootstrap supLM statistic $T_n^{*}$ as:
 \begin{align}
  T^*_n&=\sup_{r\in[r_L,r_U]}T^*_{n}(r);\label{eq:Tn_boot}\\
 T^*_{n}(r)&=\left(\frac{\partial \tilde{\ell}_n^* }{\partial \boldsymbol\Psi}(r)\right)^\intercal \left({I}^*_{n,22}(r)-{I}^*_{n,21}(r)({I}^*_{n,11})^{-1} {I}^*_{n,12}(r)\right)^{-1}\frac{\partial \tilde{\ell}_n^* }{\partial \boldsymbol\Psi}(r).\label{eq:Tnr_boot}
  \end{align}
   Finally,  the bootstrap $p$-value is given by $$B^{-1} \sum_{b=1}^{B} I(T^{*b}_n \geq T_n),$$
    where $T_n^{*b},\; b = 1,\dots B$ is the bootstrap test statistics and $T_n$ is the value of the supLM statistic computed on the original sample, defined in Eq.~(\ref{eq:Tn}).


\subsection{Bootstrap asymptotic theory}\label{sec:asyb}

In order to derive the bootstrap asymptotic theory we rely on the following assumption, which is customary in this setting.
\begin{assumption}\label{ass}
 $\{\eps_t\}$ is a sequence of independent and identically distributed (hereafter iid) random variables with $E[\eps_t]=0$, $E[\eps_t^2]=\sigma^2<\infty$ and $E[\eps_t^4]=\kappa<\infty$; $\{X_t\}$ is stationary and ergodic under the null hypothesis.
\end{assumption}
\noindent
Under Assumption~\ref{ass}, in Theorem~\ref{thm:boot} we prove  that $T^{*}_n$ converges weakly in probability to $T_\infty$, namely, the proposed bootstrap is valid. To this aim, in Proposition~\ref{prop:UBLLN_TARp} we derive a \textit{new} uniform bootstrap law of large numbers (hereafter UBLLN) that allows us to $(i)$ verify that $n^{-1}I^*_n(r)$ converges in probability (in probability) to $I_\infty(r)$ uniformly on $r$ (Proposition~\ref{prop:matrix_TARp}) and $(ii)$  derive an approximation of $\partial\tilde{\ell}_n^*/\partial \boldsymbol\Psi(r)$ in terms of $\partial{\ell}_n^*/\partial \boldsymbol\eta(r)$ (Proposition~\ref{prop:score_TARp}). We next state the UBLLN in Proposition~\ref{prop:UBLLN_TARp} that establishes a new result which is of independent interest since it is the first proof of the validity of the bootstrap when testing for a regime switching mechanism where a nuisance parameter is absent under the null hypothesis. The main difficulty here resides in the indicator function $I(y\leq r)$ being not differentiable. Hence, standard methods based upon Taylor's expansion cannot be applied. Notice that in \cite{Han96}, the problem is circumvented by adopting a stochastic permutation of the score vector for which no UBLLN is required. Our proof of the bootstrap validity also extends the approach of \cite{Cha20}. We approximate the step function with a parameterised sequence of continuous and differentiable functions.
\begin{remark}
  The results provide a general theoretical framework that can be adapted to other kinds of nonlinear processes such as regime-switching processes with exogenous threshold variables or testing for structural breaks.
\end{remark}
\begin{proposition}\label{prop:UBLLN_TARp}
 \textbf{(UBLLN)} Let $\{X_t\}$ and $\{X_t^*\}$ be defined in Eq.~(\ref{eq:AR}) and Eq.~(\ref{eq:boot_proc}), respectively. Under Assumption~\ref{ass}, it holds that:
   \begin{enumerate}
     \item If $E[|X_t|^u]<\infty$, for $u\geq 0$, then:
     \begin{equation}\label{eq:UBLLN_TAR1}
       \sup_{r \in [r_{L},r_{U}]}\left| \frac{1}{n}\sum_{t=1}^{n}X^{*u}_tI(X^*_{t}\leq r)-E[X^u_tI(X_{t}\leq r)]\right|\xrightarrow[n\to\infty]{p^{*}}_p0.
     \end{equation}
     \item If $E[|X_t|^u]<\infty$, for $u=1,2$, then, for every $i,j,d$:
     \begin{equation}\label{eq:UBLLN_TAR2}
     \hspace*{-15pt}
        \sup_{r \in [r_{L},r_{U}]}\left|\frac{1}{n}\sum_{t=1}^{n}X^{*}_{t-i}X^{*}_{t-j}I(X^*_{t-d}\leq r)-E[X_{t-i}X_{t-j}I(X_{t-d}\leq r)]\right|\xrightarrow[n\to\infty]{p^{*}}_p0.
     \end{equation}
   \end{enumerate}
\end{proposition}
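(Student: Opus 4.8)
The plan is to establish both parts along the same lines, so I focus on part~1; part~2 follows verbatim after replacing $X_t^{*u}$ by $X_{t-i}^*X_{t-j}^*$ and evaluating the indicator at lag $d$, the moment requirement $u=1,2$ ensuring finite second moments suffice. Since $\xrightarrow[n\to\infty]{p^{*}}_p$ means that $P^*\big(\sup_{r}|\cdot|>\delta\big)\xrightarrow[n\to\infty]{p}0$ for every $\delta>0$, I would argue conditionally on the sample throughout, exploiting that $\{X_t^*\}$ in Eq.~(\ref{eq:boot_proc}) is, given the data, a Markov chain driven by the iid draws $\{\eps_t^*\}$; since $\tilde{\boldsymbol\phi}\to\boldsymbol\phi_0$ and $\{X_t\}$ is stationary under the null, $\tilde{\boldsymbol\phi}$ lies in the stationarity region with probability tending to one, so $\{X_t^*\}$ is geometrically ergodic with a conditional stationary law depending on the sample only through $\tilde{\boldsymbol\phi}$ and the recentred residual distribution, and the initial-condition transient is asymptotically negligible.

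The core difficulty is that $I(X_t^*\le r)$ is not differentiable, so equicontinuity in $r$ cannot be obtained by differentiation. Following the smoothing device indicated in the text, I would fix a smooth, bounded, non-decreasing approximant $g_\lambda(x,r)$ of $I(x\le r)$ together with a two-sided sandwich $g_\lambda^-(x,r)\le I(x\le r)\le g_\lambda^+(x,r)$, in which the bump $g_\lambda^+-g_\lambda^-$ is supported on a shrinking neighbourhood of $\{x=r\}$ and each function is Lipschitz in $r$ with constant $L_\lambda$. The quantity in Eq.~(\ref{eq:UBLLN_TAR1}) is then bounded by the smoothed discrepancy, the difference between $n^{-1}\sum_t X_t^{*u}g_\lambda(X_t^*,r)$ and its population analogue $E[X_t^u g_\lambda(X_t,r)]$, plus the two approximation errors $n^{-1}\sum_t |X_t^*|^u(g_\lambda^+-g_\lambda^-)(X_t^*,r)$ and $E[|X_t|^u(g_\lambda^+-g_\lambda^-)(X_t,r)]$ that sandwich the error incurred by replacing the indicator.

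For the smoothed discrepancy I would first prove a pointwise (fixed $r$) bootstrap law of large numbers: the conditional ergodic theorem gives $n^{-1}\sum_t X_t^{*u}g_\lambda(X_t^*,r)-E^*[X_t^{*u}g_\lambda(X_t^*,r)]=o_{p^*}(1)$, while $\tilde{\boldsymbol\phi}\to\boldsymbol\phi_0$, the weak convergence of the recentred residual distribution to that of $\eps_t$, the continuity and boundedness of $g_\lambda$, and a bootstrap moment bound $n^{-1}\sum_t|X_t^*|^u=O_{p^*}(1)$ together yield $E^*[X_t^{*u}g_\lambda(X_t^*,r)]\to E[X_t^u g_\lambda(X_t,r)]$ in $P$-probability; here the assumed moment condition on $\{X_t\}$ transfers to the innovations, hence to the bootstrap draws, which secures the moment bound. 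Because $g_\lambda$ is Lipschitz in $r$, the map $r\mapsto n^{-1}\sum_t X_t^{*u}g_\lambda(X_t^*,r)$ is Lipschitz with random constant $L_\lambda\,n^{-1}\sum_t|X_t^*|^u=O_{p^*}(1)$; combining this equicontinuity with pointwise convergence on a finite grid covering the compact interval upgrades the smoothed discrepancy to a uniform bound, the data-driven endpoints $r_L,r_U$ causing no trouble since they converge to fixed quantiles.

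Finally I would control the two approximation errors uniformly. The bump $g_\lambda^+-g_\lambda^-$ is itself smooth and bounded, so the uniform argument just described applies to it and shows that $n^{-1}\sum_t|X_t^*|^u(g_\lambda^+-g_\lambda^-)(X_t^*,r)$ converges, uniformly in $r$ and in the $\xrightarrow[n\to\infty]{p^{*}}_p$ sense, to $E[|X_t|^u(g_\lambda^+-g_\lambda^-)(X_t,r)]$; the latter tends to $0$ as $\lambda\to\infty$, uniformly in $r$, by dominated convergence using $E[|X_t|^u]<\infty$ and the continuity of the law of $X_t$, since the bump concentrates on a vanishing neighbourhood of $\{x=r\}$. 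Choosing $\lambda$ large to make the population error small, and then letting $n\to\infty$, delivers the claim. I expect the main obstacle to be exactly the interplay in the smoothed discrepancy between the two layers of randomness — the conditional ergodic theorem absorbing the bootstrap variability and the convergence $E^*[\cdot]\to E[\cdot]$ absorbing the sampling variability — since the bootstrap moment bounds and the geometric ergodicity of $\{X_t^*\}$ must be strong enough to survive the passage to the supremum over $r$.
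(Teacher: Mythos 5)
Your plan is structurally sensible (smooth the indicator, prove a uniform LLN for the smoothed sums, control the bump near the threshold), but it rests on a step that fails as stated: the ``conditional ergodic theorem'' for the bootstrap chain. Conditionally on the data, the innovations $\eps_t^*$ in Eq.~(\ref{eq:boot_proc}) are drawn from the \emph{empirical} distribution of the recentred residuals, i.e.\ a discrete law with at most $n$ atoms. An AR recursion driven by discretely distributed innovations is not $\psi$-irreducible on $\mathds{R}$ (from any starting value the reachable set is countable and is in general never revisited), so the bootstrap process is not geometrically ergodic in the Meyn--Tweedie sense and there is no off-the-shelf ergodic theorem giving $n^{-1}\sum_t X_t^{*u}g_\lambda(X_t^*,r)-E^*[X_t^{*u}g_\lambda(X_t^*,r)]=o_{p^*}(1)$. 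Likewise, your second pillar, $E^*[X_t^{*u}g_\lambda(X_t^*,r)]\to E[X_t^{u}g_\lambda(X_t,r)]$ in probability, requires proving weak convergence (with moments) of the conditional law of $X_t^*$ to the stationary law of $X_t$ — a Bickel--Freedman/Mallows-type argument that is itself of comparable difficulty to the proposition. In short, the two facts you invoke are precisely the bootstrap LLN content that has to be established, not tools one can cite; repairing your route would require replacing the ergodicity citation by direct conditional Chebyshev/covariance-decay bounds (exploiting the Lipschitz property of $g_\lambda$ and the geometric decay of $\tilde{\phi}^{\,j}$) plus a separate proof of convergence of conditional laws.

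It is instructive that the paper never works with the bootstrap chain as an ergodic system. It applies the ordinary ergodic theorem only to the \emph{original} stationary series, for the smoothed functional $X_t^{u}G_\alpha(X_t)$; it then bounds the gap between the bootstrap smoothed sum $n^{-1}\sum_t X_t^{*u}G_\alpha(X_t^*)$ and the sample smoothed sum $n^{-1}\sum_t X_t^{u}G_\alpha(X_t)$ by mean-value expansions whose derivative factor $g_\alpha$ vanishes as $\alpha\to 0$, so no conditional LLN for $\{X_t^*\}$ is ever needed. For the indicator-versus-smooth discrepancy near the threshold it uses Lemma~\ref{lemma:prob}: $P^*(\eps_t^*=\kappa)\le 1/n$ for every $\kappa$, hence $P^*(X_t^*\in[L_{\alpha,\delta},U_{\alpha,\delta}])$ is small — that is, the discreteness of the resampling distribution, the very feature that breaks your ergodicity claim, is turned into the key tool. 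Your sandwich treatment of the bump (transferring it to the population expectation and using the boundedness of $|x|^{u}f_X(x)$) and your grid-plus-Lipschitz chaining for uniformity are fine and parallel the paper's grid-plus-monotonicity argument; the gap is confined to, but is central in, the smoothed-discrepancy step.
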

\begin{remark}
  Under the null hypothesis and Assumption~\ref{ass}, $E[\eps_t]=0$ and $E[\eps_t]=\sigma^2 < \infty$ imply that $E[|X_t|^u]<\infty$, for $u=1,2$.
\end{remark}
\begin{proposition}\label{prop:matrix_TARp}
Let $\{X_t\}$ and $\{X_t^*\}$ be defined in (\ref{eq:AR}) and Eq.~(\ref{eq:boot_proc}), respectively. Under the null hypothesis and Assumption~\ref{ass}, it holds
 $$\sup_{r \in [r_{L},r_{U}]}\left|\frac{1}{n}\frac{1}{\sigma^{*2}}\sum_{t=1}^{n}D^*_{t-1}(r)D^{*\intercal}_{t-1}(r)-I_\infty(r)\right|\xrightarrow[n\to\infty]{p^{*}}_p0,$$
 with $D^*_t(r)$ and $I_\infty(r)$ being defined in Eq.~(\ref{eq:epsder}) and Eq.~(\ref{eq:information_asym}), respectively.
\end{proposition}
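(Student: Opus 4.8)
The plan is to read Proposition~\ref{prop:matrix_TARp} as an essentially immediate consequence of the UBLLN in Proposition~\ref{prop:UBLLN_TARp}, together with a Slutsky-type treatment of the scalar prefactor $1/\sigma^{*2}$. First I would write the normalised bootstrap information as
$$\frac{1}{n}I_n^*(r)=\frac{1}{\sigma^{*2}}\,M_n^*(r),\qquad M_n^*(r):=\frac{1}{n}\sum_{t=1}^{n}D^*_{t-1}(r)D^{*\intercal}_{t-1}(r),$$
and observe, from the explicit form of $D^*_{t-1}(r)$ in Eq.~(\ref{eq:epsder}) and the identity $I(\cdot\leq r)^2=I(\cdot\leq r)$, that each entry of the fixed-dimensional $2(p+1)\times 2(p+1)$ matrix $M_n^*(r)$ is a bootstrap sample average of one of the forms $n^{-1}\sum_t X^*_{t-i}X^*_{t-j}$ or $n^{-1}\sum_t X^*_{t-i}X^*_{t-j}I(X^*_{t-d}\leq r)$, including the degenerate cases in which one or both $X^*$ factors are replaced by the constant $1$.

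Second, I would apply Proposition~\ref{prop:UBLLN_TARp} entrywise. For the blocks carrying the indicator, part~(2) (and its $u=0,1$ specialisations, the boundary shift in the index being $O(d/n)$) shows that each such average converges, uniformly in $r\in[r_L,r_U]$ and in the $\xrightarrow[n\to\infty]{p^{*}}_p$ sense, to the matching second moment $E[X_{t-i}X_{t-j}I(X_{t-d}\leq r)]$; the $\boldsymbol\phi\boldsymbol\phi$-block, which does not depend on $r$, converges to $E[X_{t-i}X_{t-j}]=E[X_{t-i}X_{t-j}I(X_{t-d}\leq\infty)]$, i.e.\ to the entries of $I_{\infty,11}$. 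These are exactly the moments appearing in Eq.~(\ref{eq:information_asym}). Since the matrix has fixed dimension, I would then combine the finitely many entrywise statements by the triangle inequality: $\sup_{r}\|M_n^*(r)-M(r)\|$ is bounded by a finite sum of entrywise suprema, each $o(1)$ in the $\xrightarrow[n\to\infty]{p^{*}}_p$ mode, where $M(r)$ collects the above moment limits; hence $\sup_{r}\|M_n^*(r)-M(r)\|\xrightarrow[n\to\infty]{p^{*}}_p0$.

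Third, I would handle the scalar $1/\sigma^{*2}$. Establishing $\sigma^{*2}\xrightarrow[n\to\infty]{p^{*}}_p\sigma_0^2$ reduces, via $E^*[\eps_t^{*2}]=n^{-1}\sum_{t=1}^{n}(\tilde\eps_t^c)^2\to\sigma_0^2$ in $P$ (a property of drawing with replacement from the recentred restricted residuals) plus a conditional law of large numbers and control of the bootstrap estimation error $\tilde{\boldsymbol\phi}^*-\tilde{\boldsymbol\phi}$, to a one-dimensional bootstrap LLN; continuity of $x\mapsto 1/x$ away from zero then gives $1/\sigma^{*2}\xrightarrow[n\to\infty]{p^{*}}_p1/\sigma_0^2$. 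Finally I would invoke a product lemma in this convergence mode: if $c_n\xrightarrow[n\to\infty]{p^{*}}_pc$ and $\sup_{r}\|M_n^*(r)-M(r)\|\xrightarrow[n\to\infty]{p^{*}}_p0$ with $\sup_{r}\|M(r)\|<\infty$, then $\sup_{r}\|c_nM_n^*(r)-cM(r)\|\xrightarrow[n\to\infty]{p^{*}}_p0$, which follows by adding and subtracting $c_nM(r)$ and bounding. Applying this with $c_n=1/\sigma^{*2}$ and $c=1/\sigma_0^2$ (the common $\sigma_0^2$ normalisation being built into both $I_n^*(r)$ and $I_\infty(r)$, consistently with the $\sigma_0^{-2}$ factor appearing in the kernel $\Sigma$) yields $\sup_{r}\|n^{-1}I_n^*(r)-I_\infty(r)\|\xrightarrow[n\to\infty]{p^{*}}_p0$.

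I would expect the entrywise application of the UBLLN (step two) to be entirely mechanical, since Proposition~\ref{prop:UBLLN_TARp} carries all the non-standard weight; the genuine care lies in step three, namely pushing the ordinary Slutsky and continuous-mapping manipulations through the ``in probability, in probability'' mode $\xrightarrow[n\to\infty]{p^{*}}_p$ and verifying bootstrap consistency of $\sigma^{*2}$ in that same mode. In this sense Proposition~\ref{prop:matrix_TARp} is a corollary of the UBLLN, with the product lemma the one reusable auxiliary step worth isolating.
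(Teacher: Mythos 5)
Your proposal is correct and takes essentially the same route as the paper's proof, which likewise reads each entry of the bootstrap information matrix as a bootstrap average covered by Proposition~\ref{prop:UBLLN_TARp} with $u=0,1,2$ (the $\boldsymbol\phi\boldsymbol\phi$-block being the $r=\infty$ case) and delegates the scalar factor $1/\sigma^{*2}$ to ``standard results of bootstrap asymptotic analysis''. Your explicit product lemma and consistency argument for $\sigma^{*2}$ merely spell out what the paper leaves implicit; indeed, your remark that the $\sigma_0^{-2}$ normalisation must be absorbed into the limit is the self-consistent reading of the statement, since $I_\infty(r)$ as defined in Eq.~(\ref{eq:information_asym}) carries no variance factor while the left-hand side retains $1/\sigma^{*2}$.
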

\begin{proposition}\label{prop:score_TARp}
Let $\{X_t\}$ and $\{X_t^*\}$ be defined in Eq.~(\ref{eq:AR}) and Eq.~(\ref{eq:boot_proc}), respectively. Under the null hypothesis, it holds that the bootstrap score defined in Eq.~(\ref{eq:boot_score2}) satisfies:
$$\frac{\partial\tilde\ell_n^*}{\partial\boldsymbol{\Psi}}(r)= \frac{\partial\ell_n^*}{\partial\boldsymbol{\Psi}}(r)- I_{n,21}(r)I_{n,11}^{-1}\frac{\partial\ell_n^*}{\partial\boldsymbol{\phi}}.$$
\end{proposition}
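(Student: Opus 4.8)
The statement is an exact algebraic identity (up to asymptotically negligible terms), and the plan is to exploit the fact that, for fixed $r$, the bootstrap residual $\eps_t^*(\boldsymbol\eta,r)$ in Eq.~(\ref{eq:eps_boot}) is \emph{affine} in $(\boldsymbol\phi,\boldsymbol\Psi)$, since the threshold enters only through the fixed weights $I(X^*_{t-d}\leq r)$ and not through the parameters being differentiated. Consequently $\ell_n^*(\boldsymbol\eta,r)$ is an exact quadratic form in the mean parameters, its gradient is affine, and its Hessian $-I_n^*(r)$ is constant in $(\boldsymbol\phi,\boldsymbol\Psi)$. Hence the first-order Taylor expansion of the bootstrap score between two parameter points sharing the same $r$ carries no remainder, and the whole argument reduces to linear algebra plus a single first-order condition.

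First I would record that first-order condition: since $\tilde{\boldsymbol\phi}^*$ is the restricted bootstrap MLE (it maximises the bootstrap likelihood under $\boldsymbol\Psi=\boldsymbol 0$, a problem not involving $r$), the $\boldsymbol\phi$-block of the bootstrap score vanishes at $\tilde\bEta^*$, i.e. $\partial\tilde\ell_n^*/\partial\boldsymbol\phi=\boldsymbol 0$. Writing $\eps_t^*(\tilde\bEta^*,r)=\tilde\eps_t^*=\eps_t^*(\tilde\bEta,r)-Z_{t-1}^{*\intercal}(\tilde{\boldsymbol\phi}^*-\tilde{\boldsymbol\phi})$, where $Z_{t-1}^*=(1,X_{t-1}^*,\dots,X_{t-p}^*)^\intercal$ collects the bootstrap regressors and $\eps_t^*(\tilde\bEta,r)$ is $r$-free because $\boldsymbol\Psi=\boldsymbol 0$ at $\tilde\bEta$, the vanishing $\boldsymbol\phi$-score becomes the exact normal equation $(\sum_t Z_{t-1}^*Z_{t-1}^{*\intercal})(\tilde{\boldsymbol\phi}^*-\tilde{\boldsymbol\phi})=\sum_t\eps_t^*(\tilde\bEta,r)Z_{t-1}^*$. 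Recognising the left-hand Gram matrix as $\tilde\sigma^{*2}I^*_{n,11}$ and the right-hand sum as $\tilde\sigma^2\,\partial\ell_n^*/\partial\boldsymbol\phi$, this solves explicitly for the displacement $\tilde{\boldsymbol\phi}^*-\tilde{\boldsymbol\phi}$.

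Next I would insert the same decomposition of $\tilde\eps_t^*$ into the $\boldsymbol\Psi$-block of the bootstrap score evaluated at $\tilde\bEta^*$. Because the Hessian is constant in the mean parameters this step is again exact, producing the $\boldsymbol\Psi$-score at $\tilde\bEta$ together with a cross term whose coefficient is the off-diagonal bootstrap information block $I^*_{n,21}(r)$ (the $\tilde\sigma^{*2}$-normalised sum $\sum_t Z_{t-1}^*Z_{t-1}^{*\intercal}I(X^*_{t-d}\leq r)$). Substituting the displacement from the previous step yields the exact identity $\partial\tilde\ell_n^*/\partial\boldsymbol\Psi(r)=(\tilde\sigma^2/\tilde\sigma^{*2})\{\partial\ell_n^*/\partial\boldsymbol\Psi(r)-I^*_{n,21}(r)(I^*_{n,11})^{-1}\partial\ell_n^*/\partial\boldsymbol\phi\}$. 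The stated form then follows by replacing the bootstrap coefficient $I^*_{n,21}(r)(I^*_{n,11})^{-1}$ with its probability limit $I_{\infty,21}(r)I_{\infty,11}^{-1}$ (equivalently the original-sample coefficient $I_{n,21}(r)I_{n,11}^{-1}$) via Proposition~\ref{prop:matrix_TARp} and the matching law of large numbers for the original sample, and by noting $\tilde\sigma^2/\tilde\sigma^{*2}=1+o_{p^*}(1)$ from consistency of the restricted variance estimators; after dividing by $\sqrt n$, the discarded pieces are $o_{p^*}(1)$ uniformly in $r$.

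The reassuring point is that there is no genuine analytic obstacle here: unlike the UBLLN and the functional CLT, this proposition is pure finite-sample algebra driven by the affine structure and the $\boldsymbol\phi$-first-order condition, with no indicator-function nondifferentiability to contend with. The only point requiring care---and the closest thing to an obstacle---is the bookkeeping of the two variance normalisations $\tilde\sigma^2$ and $\tilde\sigma^{*2}$ and the replacement of the bootstrap information blocks by their common probability limit, i.e. justifying that these substitutions hold uniformly in $r\in[r_L,r_U]$ and collapse to $o_{p^*}(1)$ at the $1/\sqrt n$ scale at which the score enters the statistic $T_n^*(r)$; this is precisely what Proposition~\ref{prop:matrix_TARp} delivers.
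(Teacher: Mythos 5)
Your proposal is correct and is essentially the paper's own argument: the paper likewise exploits the exactness of the first-order expansion of the quadratic bootstrap log-likelihood, uses the first-order condition $\partial\tilde\ell_n^*/\partial\boldsymbol\phi=\boldsymbol 0$ at the restricted bootstrap MLE to solve for the displacement $\tilde{\boldsymbol\phi}-\tilde{\boldsymbol\phi}^*$ (its Eq.~(\ref{score_1}), which is exactly your normal equation), and then substitutes that displacement into the $\boldsymbol\Psi$-block of the score (its Eq.~(\ref{score_2}), verified componentwise rather than in matrix form). The only differences are bookkeeping points on which you are in fact more careful than the paper: its derivation actually produces the coefficient $I^*_{n,21}(r)(I^*_{n,11})^{-1}$ (bootstrap information blocks, the $\sigma^2$-normalisations cancelling in the product) and silently identifies this with the $I_{n,21}(r)I_{n,11}^{-1}$ written in the statement, while suppressing the $\tilde\sigma^2/\tilde\sigma^{*2}$ factor altogether, whereas you make both reconciliations explicit via Proposition~\ref{prop:matrix_TARp}, the original-sample law of large numbers, and consistency of the two variance estimators, noting correctly that they hold only up to terms that are $o_{p^*}(1)$ at the $1/\sqrt{n}$ scale, uniformly in $r$.
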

\begin{remark}
By analogy with standard, non-bootstrap, asymptotics \citep{Cha90a,Lin05,Gor21}, thanks to Proposition~\ref{prop:score_TARp} the asymptotic null distribution of $T^{*}_n$ is predominantly determined by the asymptotic behaviour of $\left\{\frac{\partial\ell_n^*}{\partial\boldsymbol\eta}(r)\right\}$ rather than $\left\{\frac{\partial\tilde\ell_n^*}{\partial\boldsymbol\eta}(r)\right\}$ defined in Eq.~(\ref{eq:boot_score}) and this simplifies substantially the derivations.
\end{remark}
\noindent
Next, in Proposition~\ref{prop:BCLT_TARp} we prove a bootstrap central limit theorem (hereafter BCLT) for $\left\{\frac{\partial\ell_n^*}{\partial\boldsymbol\eta}(r)\right\}$, the bootstrap score process defined in Eq.~(\ref{eq:boot_score_process}).
\begin{proposition}\label{prop:BCLT_TARp}
 \textbf{(BCLT)} Under the null hypothesis and Assumption~\ref{ass}, for any fixed $r$, it holds that
  $$\frac{1}{\sqrt{n}}\frac{\partial\ell_n^*}{\partial\boldsymbol\eta}(r)\xrightarrow[n\to\infty]{w^*}_p \mathcal{Z}(r),$$
where $\frac{\partial\ell_n^*}{\partial\boldsymbol\eta}(r)$ is defined in Eq.~(\ref{eq:boot_score}), $\mathcal{Z}(r)$ is a Gaussian distributed $2(p+1)$-dimensional random vector with zero-mean and variance-covariance matrix equal to $\sigma_0^{-2}I_\infty(r)$, defined in Eq.~(\ref{eq:information_asym}).
\end{proposition}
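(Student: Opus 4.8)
The plan is to rewrite the normalised bootstrap score as a martingale difference array under the bootstrap measure $P^{*}$ and then to invoke a martingale central limit theorem, verifying its hypotheses in the $\xrightarrow[n\to\infty]{p^{*}}_p$ sense by means of the propositions already established. First I would make the score explicit: differentiating $\ell_n^*(\boldsymbol\eta,r)$ with respect to $(\boldsymbol\phi^\intercal,\boldsymbol\Psi^\intercal)^\intercal$ and evaluating at $\boldsymbol\eta=\tilde\bEta$ (so that $\boldsymbol\Psi=\boldsymbol0$ removes the indicator contribution), the bootstrap recursion Eq.~(\ref{eq:boot_proc}) gives $\eps_t^*(\tilde\bEta,r)=X_t^*-\tilde\phi_0-\sum_{i=1}^p\tilde\phi_iX_{t-i}^*=\eps_t^*$, whence
\begin{equation*}
\frac{1}{\sqrt n}\frac{\partial\ell_n^*}{\partial\boldsymbol\eta}(r)=-\frac{1}{\tilde\sigma^2}\,\frac{1}{\sqrt n}\sum_{t=1}^{n}\eps_t^{*}\,D_{t-1}^{*}(r),
\end{equation*}
with $D_{t-1}^{*}(r)$ as in Eq.~(\ref{eq:epsder}). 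Let $\mathcal{F}_t^{*}=\sigma(\eps_1^{*},\dots,\eps_t^{*})$ be the bootstrap filtration (conditional on the data). Since $D_{t-1}^{*}(r)$ depends only on $X_{t-1}^{*},\dots,X_{t-p}^{*}$ and on $I(X_{t-d}^{*}\le r)$, it is $\mathcal{F}_{t-1}^{*}$-measurable, whereas $\eps_t^{*}$ is independent of $\mathcal{F}_{t-1}^{*}$ with $E^{*}[\eps_t^{*}]=0$ by the re-centring; hence $\{\eps_t^{*}D_{t-1}^{*}(r),\mathcal{F}_t^{*}\}$ is a martingale difference array under $P^{*}$.

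Next I would use the Cramér--Wold device: for fixed $\boldsymbol\lambda\in\mathds{R}^{2(p+1)}$ put $\xi_{n,t}^{*}=n^{-1/2}\eps_t^{*}\,\boldsymbol\lambda^\intercal D_{t-1}^{*}(r)$ and prove a scalar martingale central limit theorem for $\{\xi_{n,t}^{*},\mathcal{F}_t^{*}\}$. Two conditions are needed. For the conditional variance, independence of $\eps_t^{*}$ from $\mathcal{F}_{t-1}^{*}$ gives $\sum_{t=1}^{n}E^{*}[\xi_{n,t}^{*2}\mid\mathcal{F}_{t-1}^{*}]=E^{*}[\eps_t^{*2}]\,\boldsymbol\lambda^\intercal\big(n^{-1}\sum_{t=1}^{n}D_{t-1}^{*}(r)D_{t-1}^{*\intercal}(r)\big)\boldsymbol\lambda$; here $E^{*}[\eps_t^{*2}]=n^{-1}\sum_{t}(\tilde\eps_t^{c})^{2}\xrightarrow[n\to\infty]{p}\sigma_0^{2}$, while the bracketed average converges to $I_\infty(r)$ by the uniform bootstrap law of large numbers of Proposition~\ref{prop:UBLLN_TARp} (its entries are exactly the quantities treated there), so the conditional variance tends to $\sigma_0^{2}\,\boldsymbol\lambda^\intercal I_\infty(r)\boldsymbol\lambda$ in the $\xrightarrow[n\to\infty]{p^{*}}_p$ mode. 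For the negligibility condition I would verify a conditional Lyapunov bound with exponent four, which Assumption~\ref{ass} makes available: $\sum_t E^{*}[\xi_{n,t}^{*4}\mid\mathcal{F}_{t-1}^{*}]=E^{*}[\eps_t^{*4}]\,n^{-2}\sum_{t}(\boldsymbol\lambda^\intercal D_{t-1}^{*}(r))^{4}$, where $E^{*}[\eps_t^{*4}]=n^{-1}\sum_t(\tilde\eps_t^{c})^{4}=O_{p}(1)$ because $E[\eps_t^{4}]=\kappa<\infty$, and $n^{-1}\sum_t(\boldsymbol\lambda^\intercal D_{t-1}^{*}(r))^{4}=O_{p^{*}}(1)$ by a fourth-order bootstrap law of large numbers (valid since $E[X_t^{4}]<\infty$ under the null), so the whole sum is $O_{p^{*}}(n^{-1})=o_{p^{*}}(1)$.

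Combining the two conditions, the scalar martingale central limit theorem yields $\boldsymbol\lambda^\intercal\big(n^{-1/2}\sum_t\eps_t^{*}D_{t-1}^{*}(r)\big)\xrightarrow[n\to\infty]{w^{*}}_p N(0,\sigma_0^{2}\,\boldsymbol\lambda^\intercal I_\infty(r)\boldsymbol\lambda)$; since $\boldsymbol\lambda$ is arbitrary, Cramér--Wold gives $n^{-1/2}\sum_t\eps_t^{*}D_{t-1}^{*}(r)\xrightarrow[n\to\infty]{w^{*}}_p N(0,\sigma_0^{2}I_\infty(r))$, and applying Slutsky's lemma with the factor $-\tilde\sigma^{-2}\xrightarrow[n\to\infty]{p}-\sigma_0^{-2}$ produces the stated limit $\mathcal{Z}(r)\sim N(0,\sigma_0^{-2}I_\infty(r))$. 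I expect the main obstacle to be that each hypothesis of the martingale central limit theorem must be validated in the double $\xrightarrow[n\to\infty]{p^{*}}_p$ mode rather than under a single measure; this is precisely where Proposition~\ref{prop:UBLLN_TARp} and the fourth-moment part of Assumption~\ref{ass} do the work, while no stochastic equicontinuity argument is required here because $r$ is held fixed.
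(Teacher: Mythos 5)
Your proposal is correct and takes essentially the same route as the paper: both write the normalised score as a bootstrap martingale difference array, apply a martingale CLT via the Cram\'er--Wold device (the paper's $\Lambda^*_{t-1}(r)=\boldsymbol\lambda^\intercal D^*_{t-1}(r)$ is exactly your linear combination), identify the conditional variance through Proposition~\ref{prop:UBLLN_TARp}/Proposition~\ref{prop:matrix_TARp} together with the convergence of the bootstrap residual moments (Lemma~\ref{lemma:LLN_eps}), and settle negligibility using the finite fourth moments of Assumption~\ref{ass}. The only cosmetic differences are that you check a conditional Lyapunov condition where the paper verifies the Lindeberg condition directly (both resting on the same fourth-moment bounds), and that you explicitly carry the factor $\tilde\sigma^{-2}\to\sigma_0^{-2}$ through Slutsky's lemma, which the paper's displayed score suppresses even though it is needed to obtain the stated covariance $\sigma_0^{-2}I_\infty(r)$.
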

The next theorem contains the main result, namely, the bootstrap functional central limit theorem (BFCLT), where we prove that the conditional asymptotic null distribution of the bootstrap test statistic $T_n^*$ is the same of the unconditional asymptotic null distribution of the non-bootstrap test statistic $T_n$. This guarantees  the validity of the proposed bootstrap method.
\begin{theorem}\label{thm:boot}
\textbf{(BFCLT)}
Let $T^{*}_n$ be the supLM statistic defined in Eq.~(\ref{eq:Tn_boot}). Under the null hypothesis and Assumption~\ref{ass}, it holds that
$T^{*}_n\xrightarrow[n\to\infty]{w^{*}}_p T_\infty,$
with $T_\infty$ being defined in Eq.~(\ref{eq:Tinf}).
\end{theorem}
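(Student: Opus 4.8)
The plan is to recast the bootstrap statistic as a smooth functional of a normalised restricted score process and a sample information matrix, both of which are controlled by the preceding propositions, and then to pass to the limit by a continuous mapping argument. Writing $G_n^*(r):=n^{-1/2}\,\partial\tilde\ell_n^*/\partial\boldsymbol\Psi(r)$ and $M_n^*(r):=n^{-1}\{I_{n,22}^*(r)-I_{n,21}^*(r)(I_{n,11}^*)^{-1}I_{n,12}^*(r)\}$, the definition in Eq.~(\ref{eq:Tnr_boot}) yields the scale-free identity $T_n^*(r)=G_n^*(r)^\intercal M_n^*(r)^{-1}G_n^*(r)$. Proposition~\ref{prop:matrix_TARp} gives $M_n^*(r)\xrightarrow[n\to\infty]{p^*}_p\Sigma(r,r)$ uniformly over $r\in[r_L,r_U]$; since under the regularity of \cite{Cha90a} the map $r\mapsto\Sigma(r,r)$ is continuous and positive definite on the compact interval, this uniform convergence carries over to the inverse, so that $M_n^*(r)^{-1}\xrightarrow[n\to\infty]{p^*}_p\Sigma(r,r)^{-1}$ uniformly. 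It then remains to establish the functional convergence $G_n^*(\cdot)\xrightarrow[n\to\infty]{w^*}_p\xi(\cdot)$ in $\mathcal{D}_\mathds{R}(r_L,r_U)$.

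For the score process I would argue in two stages. Proposition~\ref{prop:score_TARp}, combined with the classical uniform law of large numbers for $\{X_t\}$ which gives $I_{n,21}(r)I_{n,11}^{-1}\xrightarrow[n\to\infty]{p}I_{\infty,21}(r)I_{\infty,11}^{-1}$ uniformly, furnishes the representation
$$G_n^*(r)=\frac{1}{\sqrt n}\frac{\partial\ell_n^*}{\partial\boldsymbol\Psi}(r)-I_{\infty,21}(r)I_{\infty,11}^{-1}\,\frac{1}{\sqrt n}\frac{\partial\ell_n^*}{\partial\boldsymbol\phi}+o_{p^*}(1),$$
uniformly in $r$, so that $G_n^*$ is, up to a uniformly negligible term, a fixed continuous linear image of the unrestricted bootstrap score process of Eq.~(\ref{eq:boot_score_process}). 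The finite-dimensional distributions then follow from the argument behind Proposition~\ref{prop:BCLT_TARp} applied jointly at any finite grid $r_1,\dots,r_k$ (Cram\'er--Wold together with a bootstrap Lindeberg central limit theorem for the martingale-type array $\eps_t^*D_{t-1}^*$): the cross-covariance of the $\boldsymbol\Psi$-scores at $r_1,r_2$ carries the factor $I_{\infty,22}(r_1\wedge r_2)$ through $I(X^*_{t-d}\le r_1)I(X^*_{t-d}\le r_2)=I(X^*_{t-d}\le r_1\wedge r_2)$, while the linear partialling-out of the $\boldsymbol\phi$-score reproduces exactly the kernel $\Sigma(r_1,r_2)$. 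Hence the limit of the finite-dimensional distributions is the centred Gaussian process $\xi(\cdot)$.

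The main obstacle is the second stage, the tightness (stochastic equicontinuity in probability) of $G_n^*$, which is genuinely non-standard: the increment $G_n^*(r_2)-G_n^*(r_1)$ is driven by the jump term $I(r_1<X^*_{t-d}\le r_2)$, the indicator is not differentiable so no Taylor expansion is available, and the sampling and bootstrap sources of randomness must be handled simultaneously. Here I would reuse the smoothing device underlying Proposition~\ref{prop:UBLLN_TARp} and \cite{Cha20}: replace $I(\cdot\le r)$ by a parameterised family of continuous, differentiable approximants, control the uniform bias of this replacement via the UBLLN of Proposition~\ref{prop:UBLLN_TARp}, and bound the conditional second moment of the smoothed increments by the continuous, monotone population quantity associated with $I_{\infty,22}(r_2)-I_{\infty,22}(r_1)$. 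A chaining / maximal inequality over the grid, with the UBLLN used to replace the bootstrap sample averages by their population limits uniformly in $r$, then delivers the asymptotic equicontinuity and hence tightness in the sense $\xrightarrow{w^*}_p$, completing $G_n^*(\cdot)\xrightarrow[n\to\infty]{w^*}_p\xi(\cdot)$.

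Finally, with the functional convergence of $G_n^*$ and the uniform convergence of $M_n^*(\cdot)^{-1}$ in hand, and noting that the limit $\xi(\cdot)$ has almost surely continuous sample paths (its incremental variance is governed by the continuous map $r\mapsto I_{\infty,22}(r)$), the functional $(g,m)\mapsto\sup_{r\in[r_L,r_U]}g(r)^\intercal m(r)\,g(r)$ is continuous at pairs with continuous first argument and continuous positive-definite second argument. The bootstrap continuous mapping theorem therefore gives
$$T_n^*=\sup_{r\in[r_L,r_U]}T_n^*(r)\xrightarrow[n\to\infty]{w^*}_p\sup_{r\in[r_L,r_U]}\xi(r)^\intercal\Sigma(r,r)^{-1}\xi(r)=T_\infty,$$
which is the asserted validity of the bootstrap.
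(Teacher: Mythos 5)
Your proposal is correct and takes essentially the same route as the paper: reduce $T_n^*$ to a quadratic form in the bootstrap score process, obtain the finite-dimensional limits from Proposition~\ref{prop:BCLT_TARp}, obtain tightness from the uniform law of Proposition~\ref{prop:UBLLN_TARp}, and assemble the result with Propositions~\ref{prop:matrix_TARp}--\ref{prop:score_TARp} and a continuous mapping argument. The only real difference is at the equicontinuity step, where you propose to carry out the smoothing-plus-chaining argument yourself, whereas the paper verifies an $\mathcal{L}^2$ envelope bound and the uniform convergence of the random pseudometric $\rho_n^*$ to its population limit $\rho$ (both via Proposition~\ref{prop:UBLLN_TARp}) and then invokes the chaining argument of Theorem 2 of \citet{Han96}, which encapsulates exactly what you describe; your write-up is also more explicit about the Cram\'er--Wold extension of Proposition~\ref{prop:BCLT_TARp} to joint convergence over finite grids and about the uniform inversion of $M_n^*(r)$, steps the paper (appealing to Theorem 18.14 of \citet{Vaa98}) leaves implicit.
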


\section{Finite sample performance}\label{sec:MCstudy}
In this section we investigate the finite sample performance of the bootstrap sLM test and compare it with the asymptotic counterpart for series whose length is $n=50,100,200$. These are small to moderate sample sizes that are quite common in many fields, especially when the cost of producing the data is not negligible. Hereafter $\eps_t$, $t=1,\dots,n$ is generated from a standard Gaussian white noise, the nominal size is $\alpha=5\%$ and the number of Monte Carlo replications is 1000. For the asymptotic tests we have used the tabulated values of \cite{And03}, whereas the bootstrap $p$-values are based on $B=1000$ resamples. The threshold is searched from percentile 25th to 75th of the sample distribution. In Section~\ref{sec:size} and Section~\ref{sec:power} we study the size and the power of the tests. Then, in Section~\ref{sec:ordsel} we assess the behaviour of the tests when the order of the AR process tested is treated as unknown and is selected through the AIC. All the results are presented as percentages as to enhance the readability of the tables.
\subsection{Empirical size of the tests}\label{sec:size}
To study the size of the tests, we generate time series from 21 different simulation settings of the following AR$(1)$ model:

\begin{equation}\label{eq:AR1}
  X_t= \phi_{0} + \phi_{1} X_{t-1}+\eps_t
\end{equation}
where $\phi_{0}=-1,0,+1$ and $\phi_{1}=0,\pm 0.3, \pm 0.6, \pm 0.9$. Table~\ref{tab:s1} shows the rejection percentages for the three sample sizes in use. As expected, the intercept $\phi_{0}$ has no impact upon the size of the tests and the variability observed reflects the joint sampling and simulation fluctuation. Our bootstrap test sLMb has a good size even for a sample size as small as 50 and is not influenced by the value of the autoregressive parameter close to non-stationarity. On the contrary, the asymptotic test results severely oversized as $\phi_1$ approaches unity and the bias persists for $n=200$.

\begin{table}
  \centering
\begin{tabular}{rrrrrrrr}
\multicolumn{2}{c}{ } & \multicolumn{2}{c}{$n=50$} & \multicolumn{2}{c}{$n=100$} & \multicolumn{2}{c}{$n=200$} \\
\cmidrule(l{3pt}r{3pt}){3-4} \cmidrule(l{3pt}r{3pt}){5-6} \cmidrule(l{3pt}r{3pt}){7-8}
$\phi_{0}$ & $\phi_{1}$ & sLMa & sLMb & sLMa & sLMb & sLMa & sLMb\\
\cmidrule(lr){3-8}
-1 & -0.9 & 14.7 & 4.2 & 9.6 & 4.8 & 7.4 & 6.1\\
-1 & -0.6 & 4.6 & 4.9 & 4.6 & 5.5 & 4.6 & 6.0\\
-1 & -0.3 & 4.6 & 5.7 & 2.8 & 3.8 & 4.0 & 5.3\\
-1 & 0.0 & 4.4 & 4.3 & 4.3 & 5.1 & 5.0 & 5.3\\
-1 & 0.3 & 7.3 & 5.4 & 4.4 & 4.3 & 4.6 & 5.0\\
-1 & 0.6 & 16.6 & 6.9 & 8.5 & 5.3 & 6.5 & 5.6\\
-1 & 0.9 & 42.2 & 7.5 & 30.9 & 6.2 & 21.8 & 7.3\\
\addlinespace
0 & -0.9 & 17.0 & 5.5 & 9.0 & 3.8 & 6.9 & 4.4\\
0 & -0.6 & 4.4 & 5.2 & 3.7 & 4.8 & 3.1 & 4.3\\
0 & -0.3 & 3.6 & 4.6 & 3.7 & 4.7 & 4.7 & 5.9\\
0 & 0.0 & 4.4 & 5.1 & 4.3 & 5.1 & 4.2 & 5.2\\
0 & 0.3 & 6.6 & 4.6 & 3.4 & 2.7 & 4.8 & 5.3\\
0 & 0.6 & 13.9 & 5.6 & 8.5 & 4.6 & 6.0 & 5.1\\
0 & 0.9 & 42.6 & 6.3 & 28.8 & 4.9 & 18.5 & 4.7\\
\addlinespace
1 & -0.9 & 16.4 & 3.5 & 9.7 & 4.0 & 8.6 & 5.7\\
1 & -0.6 & 3.8 & 3.8 & 3.9 & 4.6 & 4.9 & 5.7\\
1 & -0.3 & 2.6 & 3.6 & 3.6 & 4.6 & 5.0 & 6.2\\
1 & 0.0 & 3.5 & 3.8 & 4.2 & 5.6 & 4.2 & 5.1\\
1 & 0.3 & 6.5 & 4.4 & 5.9 & 5.3 & 4.4 & 4.9\\
1 & 0.6 & 14.7 & 4.9 & 8.6 & 5.0 & 5.7 & 4.2\\
1 & 0.9 & 40.9 & 6.6 & 33.1 & 5.0 & 19.9 & 5.6\\
\cmidrule(lr){3-8}
\end{tabular}
  \caption{Empirical size at nominal level $\alpha=5\%$ for the AR(1) process of Eq.~(\ref{eq:AR1}) for the asymptotic test statistic sLMa of Eq.~(\ref{eq:Tn}) and the bootstrap test statistic sLMb of Eq.~(\ref{eq:Tn_boot}).}\label{tab:s1}
\end{table}


\subsection{Empirical power of the tests}\label{sec:power}
%
In this section we study the power of the supLM tests and highlight the differences between them. We simulate from the following TAR(1) model:
\begin{equation}\label{eq:TAR1}
 X_t = \phi_{1,0} + \phi_{1,1} X_{t-1} + \left(\Psi + \Psi X_{t-1}\right)I(X_{t-1}\leq 0) + \eps_t.
\end{equation}
where $\phi_{1,0}$ and $\phi_{1,1}$ are as follows:
\begin{center}
$
  \begin{array}{crr}
    & \phi_{1,0} & \phi_{1,1} \\
\cmidrule(lr){2-3}
\text{M1} & -0.1&-0.8\\
\text{M2} &  0.8&-0.2\\
\cmidrule(lr){2-3}
  \end{array}
$
\end{center}
and $\Psi = (0.0, 0.3, 0.6, 0.9)$ as to obtain 8 different parameter settings. Note that the parameter $\Psi$ represents the departure from the null hypothesis and in all the simulations below we take sequences of increasing distance from $H_0$ in all of its components. The case $\Psi=0$ corresponds to $H_0$. Table~\ref{tab:p1c} presents the size corrected power at nominal level $5\%$ where the first and the fifth rows correspond to the null hypothesis and reflect the size of the tests, while the subsequent three rows represent increasing departures from $H_0$ and reflect the power of the tests. Small deviations from the 5\% nominal level are due to discretization effects when size-correcting the bootstrap $p$-values. The table shows clearly that the bootstrap sLM test has superior power. As the sample size increases the power of the two tests is similar but the bootstrap version has always a small margin of advantage.
The uncorrected power is reported in Table~\ref{tab:p1} of the Supplement and shows that our bootstrap test has correct size for the first row and $n=50$ where the TAR(1) model reduces to a AR(1) model with parameter -0.8 and the asymptotic test shows some oversize.

\begin{table}

  \centering
\begin{tabular}{crrrrrrr}
&\multicolumn{1}{c}{ } & \multicolumn{2}{c}{$n=50$} & \multicolumn{2}{c}{$n=100$} & \multicolumn{2}{c}{$n=200$} \\
\cmidrule(l{3pt}r{3pt}){3-4} \cmidrule(l{3pt}r{3pt}){5-6} \cmidrule(l{3pt}r{3pt}){7-8}
&$\Psi$ & sLMa & sLMb & sLMa & sLMb & sLMa & sLMb\\
\cmidrule(l{3pt}r{3pt}){3-4} \cmidrule(l{3pt}r{3pt}){5-6} \cmidrule(l{3pt}r{3pt}){7-8}
\text{M1}& 0.0 & 5.0 & 5.0 & 5.0 & 4.9 & 5.0 & 5.0\\
& 0.3 & 6.1 & 9.5 & 11.2 & 15.0 & 41.0 & 43.1\\
& 0.6 & 15.7 & 28.2 & 51.3 & 58.2 & 93.8 & 93.8\\
& 0.9 & 37.4 & 55.5 & 88.2 & 91.0 & 100.0 & 100.0\\
\addlinespace
\text{M2}& 0.0 & 5.0 & 5.0 & 5.0 & 5.0 & 5.0 & 5.0\\
& 0.3 & 6.3 & 6.0 & 7.5 & 7.8 & 8.6 & 8.6\\
& 0.6 & 9.5 & 9.2 & 14.9 & 14.5 & 36.3 & 36.5\\
& 0.9 & 14.0 & 14.1 & 33.4 & 33.5 & 67.7 & 67.7\\
\cmidrule(l{3pt}r{3pt}){3-4} \cmidrule(l{3pt}r{3pt}){5-6} \cmidrule(l{3pt}r{3pt}){7-8}
\end{tabular}
\caption{Size corrected power at nominal level $\alpha=5\%$ for the TAR(1) process of Eq.~(\ref{eq:TAR1}) for the asymptotic test statistic sLMa of Eq.~(\ref{eq:Tn}) and the bootstrap test statistic sLMb of Eq.~(\ref{eq:Tn_boot}).}\label{tab:p1c}
\end{table}

\subsection{The impact of order selection}\label{sec:ordsel}

In practical situations, the order of the autoregressive model to be tested is unknown and has to be estimated. This can impinge on the performance of the tests so that we assess the impact of treating the order $p$ of the AR model as unknown and selecting it by means of the AIC.
We study the impact on the size of the tests by simulating from the following AR(2) model.
\begin{equation}\label{eq:AR2}
  X_t= \phi_{0} + \phi_{1} X_{t-1}+\phi_{2} X_{t-2} + \eps_t
\end{equation}
where $\phi_{0}=0$, whereas $\phi_{1}$ and $\phi_{2}$ are presented on the first two columns of Table~\ref{tab:s2} that shows the rejection percentages when using the true order of the autoregression in the tests. The results confirm that the bootstrap test has correct size also for $n=50$ irrespective of the parameters, while the asymptotic test is biased when the parameters are close to the non-stationary region. Table~\ref{tab:s2OS} is as Table~\ref{tab:s2} but in such a case the order of the autoregression is treated as unknown and selected through the AIC. This produces a noticeable oversize in the asymptotic test but has no effects upon the bootstrap version of the sLM test, no matter the sample size.

\begin{table}
  \centering
\begin{tabular}{rrrrrrrr}
\multicolumn{2}{c}{ } & \multicolumn{2}{c}{$n=50$} & \multicolumn{2}{c}{$n=100$} & \multicolumn{2}{c}{$n=200$} \\
\cmidrule(l{3pt}r{3pt}){3-4} \cmidrule(l{3pt}r{3pt}){5-6} \cmidrule(l{3pt}r{3pt}){7-8}
$\phi_{1}$ & $\phi_{2}$ & sLMa & sLMb & sLMa & sLMb & sLMa & sLMb\\
\cmidrule(l{3pt}r{3pt}){3-4} \cmidrule(l{3pt}r{3pt}){5-6} \cmidrule(l{3pt}r{3pt}){7-8}
-0.65 & 0.25 & 25.2 & 5.0 & 15.3 & 4.0 & 10.1 & 4.7\\
-0.95 & -0.25 & 8.1 & 4.9 & 6.1 & 6.0 & 5.2 & 5.0\\
-0.35 & -0.45 & 3.6 & 3.7 & 3.6 & 4.4 & 4.6 & 5.2\\
1.15 & -0.55 & 9.2 & 4.4 & 6.0 & 4.6 & 5.9 & 4.7\\
0.45 & 0.25 & 21.8 & 4.7 & 14.6 & 5.4 & 10.5 & 5.4\\
0.45 & -0.55 & 5.8 & 4.9 & 4.6 & 6.1 & 5.6 & 6.4\\
-0.90 & -0.25 & 6.6 & 5.2 & 4.6 & 5.2 & 5.5 & 6.2\\
\cmidrule(l{3pt}r{3pt}){3-4} \cmidrule(l{3pt}r{3pt}){5-6} \cmidrule(l{3pt}r{3pt}){7-8}
\end{tabular}
  \caption{Empirical size at nominal level $\alpha=5\%$ for the AR(2) process of Eq.~(\ref{eq:AR2}) for the asymptotic test statistic sLMa of Eq.~(\ref{eq:Tn}) and the bootstrap test statistic sLMb of Eq.~(\ref{eq:Tn_boot}). Here the true order of the autoregression is used.}\label{tab:s2}
\end{table}

\begin{table}
  \centering
\begin{tabular}{rrrrrrrr}
\multicolumn{2}{c}{ } & \multicolumn{2}{c}{$n=50$} & \multicolumn{2}{c}{$n=100$} & \multicolumn{2}{c}{$n=200$} \\
\cmidrule(l{3pt}r{3pt}){3-4} \cmidrule(l{3pt}r{3pt}){5-6} \cmidrule(l{3pt}r{3pt}){7-8}
$\phi_{1}$ & $\phi_{2}$ & sLMa & sLMb & sLMa & sLMb & sLMa & sLMb\\
\cmidrule(l{3pt}r{3pt}){3-4} \cmidrule(l{3pt}r{3pt}){5-6} \cmidrule(l{3pt}r{3pt}){7-8}
-0.65 & 0.25 & 22.5 & 4.2 & 18.3 & 4.3 & 15.2 & 5.0\\
-0.95 & -0.25 & 11.8 & 4.3 & 13.0 & 5.7 & 11.3 & 4.4\\
-0.35 & -0.45 & 10.5 & 4.0 & 10.8 & 4.0 & 10.4 & 4.8\\
1.15 & -0.55 & 16.7 & 5.0 & 13.2 & 4.6 & 13.8 & 5.6\\
0.45 & 0.25 & 20.0 & 5.3 & 18.8 & 5.0 & 17.8 & 5.7\\
0.45 & -0.55 & 14.0 & 5.0 & 10.5 & 6.2 & 12.8 & 6.1\\
-0.90 & -0.25 & 12.1 & 4.3 & 11.2 & 5.3 & 11.7 & 4.7\\
\cmidrule(l{3pt}r{3pt}){3-4} \cmidrule(l{3pt}r{3pt}){5-6} \cmidrule(l{3pt}r{3pt}){7-8}
\end{tabular}
  \caption{As Table~\ref{tab:s2} but here the order of the autoregression has been treated as unknown and selected through the AIC.}\label{tab:s2OS}
\end{table}
We study the impact of model selection upon the power of the tests by simulating from the following TAR(2) process:
\begin{equation}\label{eq:TAR2}
 X_t = \phi_{1,0} + \phi_{1,1} X_{t-1} + \phi_{1,2} X_{t-2} + \left(\Psi + \Psi X_{t-1}+ \Psi X_{t-2}\right)I(X_{t-1}\leq 0) + \eps_t.
\end{equation}
where $\phi_{1,0}=0$, $\phi_{1,1}=-0.35$, $\phi_{1,1}=-0.45$ and, as before, $\Psi = (0.0, 0.2, 0.6, 0.8)$ represents the level of departure from $H_0$.  The rejection percentages are shown in Table~\ref{tab:p2}. Here, for $n=50$ the asymptotic test is more powerful than the bootstrap version, whereas the power of the two tests is very similar for $n=100, 200$. Also the size is similar and close to the nominal 5\% level so that the size corrected power reported in the lower panel of the table is very similar to the uncorrected power. Table~\ref{tab:p2OS} reports the empirical power (upper panel) and its size corrected version (lower panel) when the order of the tested model is selected by means of the AIC. As before, this produces an oversize in the asymptotic test and the size-corrected power confirms the superiority of our bootstrap test for small to moderate sample sizes.

\begin{table}
  \centering
\begin{tabular}{rrrrrrr}
\multicolumn{1}{c}{ } & \multicolumn{2}{c}{$n=50$} & \multicolumn{2}{c}{$n=100$} & \multicolumn{2}{c}{$n=200$} \\
\cmidrule(l{3pt}r{3pt}){2-3} \cmidrule(l{3pt}r{3pt}){4-5} \cmidrule(l{3pt}r{3pt}){6-7}
$\Psi$ & sLMa & sLMb & sLMa & sLMb & sLMa & sLMb\\
\cmidrule(l{3pt}r{3pt}){2-3} \cmidrule(l{3pt}r{3pt}){4-5} \cmidrule(l{3pt}r{3pt}){6-7}
 0.0 &  4.2 &  4.8 &  3.8 &  4.8 &   3.9 &   4.5 \\
 0.2 &  5.3 &  5.2 & 10.8 & 11.7 &  20.4 &  21.7 \\
 0.6 & 32.5 & 25.7 & 71.8 & 71.5 &  98.3 &  98.5 \\
 0.8 & 58.6 & 40.6 & 95.5 & 94.7 & 100.0 & 100.0 \\
\cmidrule(l{3pt}r{3pt}){2-3} \cmidrule(l{3pt}r{3pt}){4-5} \cmidrule(l{3pt}r{3pt}){6-7}
\addlinespace
& \multicolumn{6}{c}{size corrected} \\
\addlinespace
$\Psi$ & sLMa & sLMb & sLMa & sLMb & sLMa & sLMb\\
\cmidrule(l{3pt}r{3pt}){2-3} \cmidrule(l{3pt}r{3pt}){4-5} \cmidrule(l{3pt}r{3pt}){6-7}
 0.0 &  5.0 &  4.9 &  5.0 &  5.0 &   5.0 &   5.0 \\
 0.2 &  6.6 &  5.3 & 12.8 & 11.9 &  24.6 &  25.1 \\
 0.6 & 35.3 & 26.1 & 75.9 & 72.8 &  98.9 &  98.7 \\
 0.8 & 60.2 & 41.2 & 96.9 & 95.2 & 100.0 & 100.0 \\
\cmidrule(l{3pt}r{3pt}){2-3} \cmidrule(l{3pt}r{3pt}){4-5} \cmidrule(l{3pt}r{3pt}){6-7}
\end{tabular}
  \caption{Power at nominal level $\alpha=5\%$ for the TAR(2) process of Eq.~(\ref{eq:TAR2}) for the asymptotic test statistic sLMa of Eq.~(\ref{eq:Tn}) and the bootstrap test statistic sLMb of Eq.~(\ref{eq:Tn_boot}). The lower panel reports the size corrected version of the upper panel.}\label{tab:p2}
\end{table}

\begin{table}
  \centering
\begin{tabular}{rrrrrrr}
\multicolumn{1}{c}{ } & \multicolumn{2}{c}{$n=50$} & \multicolumn{2}{c}{$n=100$} & \multicolumn{2}{c}{$n=200$} \\
\cmidrule(l{3pt}r{3pt}){2-3} \cmidrule(l{3pt}r{3pt}){4-5} \cmidrule(l{3pt}r{3pt}){6-7}
$\Psi$ & sLMa & sLMb & sLMa & sLMb & sLMa & sLMb\\
\cmidrule(l{3pt}r{3pt}){2-3} \cmidrule(l{3pt}r{3pt}){4-5} \cmidrule(l{3pt}r{3pt}){6-7}
0.0 & 12.5 &  4.0 & 11.5 &  4.2 & 10.0 & 4.0 \\
0.2 & 13.1 &  5.1 & 19.2 & 10.9 & 28.7 & 20.5\\
0.6 & 25.0 & 18.8 & 55.5 & 50.8 & 90.0 & 90.6\\
0.8 & 30.6 & 23.5 & 59.8 & 64.8 & 91.3 & 94.8\\
\cmidrule(l{3pt}r{3pt}){2-3} \cmidrule(l{3pt}r{3pt}){4-5} \cmidrule(l{3pt}r{3pt}){6-7}
\addlinespace
& \multicolumn{6}{c}{size corrected} \\
\addlinespace
$\Psi$ & sLMa & sLMb & sLMa & sLMb & sLMa & sLMb\\
\cmidrule(l{3pt}r{3pt}){2-3} \cmidrule(l{3pt}r{3pt}){4-5} \cmidrule(l{3pt}r{3pt}){6-7}
0.0 & 5.0  &  5.0 &  5.0 &  4.7 &  5.0 &  5.0\\
0.2 & 5.1  &  6.5 &  9.2 & 11.6 & 16.2 & 23.8\\
0.6 & 6.9  & 22.5 & 38.0 & 52.4 & 83.9 & 92.3\\
0.8 & 10.7 & 26.3 & 46.4 & 66.7 & 82.9 & 95.9\\
\cmidrule(l{3pt}r{3pt}){2-3} \cmidrule(l{3pt}r{3pt}){4-5} \cmidrule(l{3pt}r{3pt}){6-7}
\end{tabular}
  \caption{As Table~\ref{tab:p2} but here the order of the autoregression has been treated as unknown and selected through the AIC.}\label{tab:p2OS}
\end{table}

\section{An application: the effect of warming on populations of larvae}\label{sec:applic}
In this section we analyse a panel of 12 short experimental time series of populations of \emph{Plodia interpunctella}, a pyralid moth which infests at the global level many different stored food. While it is well known that global warming is one of the consequences of climate change, its effects on insects' populations can have important economic consequences and are still not completely clear. One of the main features of many economically important insects and other animals (e.g. salmons) is the appearance of generation cycles. Typically, these are non-seasonal asymmetric cycles linked to delayed density dependence mediated by competition or diet quality. Generation cycles can also be caused by age-specific interactions between the insect and its enemies or cannibalism phenomena between larvae and eggs and pupas. In many different species of insects, the mechanism of generation cycling is similar to that of \emph{P. interpunctella} so that the latter can be taken as a reference model.
\par
The data come from \cite{Lau19}, where the authors established 18 populations of larvae, reflecting different experimental conditions on temperature (27, 30, 33°C)  and food quality (poor, standard/good). Each of the 6 experimental combinations has been replicated 3 times as to obtain 18 series of population counts followed for 82 weeks. The first 10 weeks have been treated as transient and discarded so that the series tested have 71 observations. Since only the populations at 27 and 30°C persisted for the entire time span, we focus on the 12 series corresponding to these two temperature level. The time plots are shown in Figure~\ref{SMfig:1} of the Supplementary Material.
\par
Table~\ref{tab:L1} shows the results of the application of the sLM tests to the panel of 12 series. The first two columns of the table indicate the experimental conditions, the third column indicates the replication, whereas the fourth column contains the value of the sLM statistic. Finally, the last column reports the bootstrap $p$-value of the tests. The threshold is searched between percentiles 25th-75th of the data and the delay parameter is $d=2$ weeks. Despite the small sample size, the bootstrap test is able to reject in 7 out of the 12 series at 90\% level. By using the asymptotic critical values of \cite{And03}, the test rejects in 6 out of 12 series and this confirms the slightly superior power of the bootstrap test in small samples found in the simulation studies.
\begin{table}
  \centering
\begin{tabular}{rrrrr}
\toprule
temp. & diet & repl. & statistic & p.value\\
\midrule
27 & poor & 1 & 6.23 & 0.254\\
27 & poor & 2 & 18.97 & 0.001\\
27 & poor & 3 & 4.27 & 0.521\\
\addlinespace
27 & good & 1 & 3.49 & 0.619\\
27 & good & 2 & 14.78 & 0.010\\
27 & good & 3 & 6.64 & 0.236\\
\addlinespace
30 & poor & 1 & 7.83 & 0.132\\
30 & poor & 2 & 8.70 & 0.094\\
30 & poor & 3 & 9.17 & 0.076\\
\addlinespace
30 & good & 1 & 11.63 & 0.033\\
30 & good & 2 & 10.78 & 0.042\\
30 & good & 3 & 9.79 & 0.066\\
\bottomrule
\end{tabular}
 \caption{Results of the application of the sLM tests to the time series of 12 populations of larvae of the pyralid moth \emph{P. interpunctella} under different experimental conditions: temperature (first column), quality of the diet (second column). For each combination, there are 3 laboratory replications (third column). The fourth column reports the value of the sLM test statistic, whereas the last column contains the bootstrap $p$-values of our test.}\label{tab:L1}
\end{table}

Next, we fit a threshold model to the 4 time series obtained by averaging over the 3 experimental replications. The time plots of series and the results of the tests are presented in Figure~\ref{fig:2} and Table~\ref{tab:L2}, respectively. Clearly, warming has a noticeable effect on the mean of the series (dashed red lines). Figure~\ref{fig:3} reports the power spectral density of the series. The frequency corresponding to the characteristic asymmetric 6-week generation cycle is evidenced with a red dashed line.

\begin{table}
  \centering
\begin{tabular}{rrrr}
\toprule
temp. & diet & statistic & p.value\\
\midrule
27 & poor & 10.67 & 0.055\\
27 & good & 15.68 & 0.006\\
\addlinespace
30 & poor & 22.13 & 0.000\\
30 & good &  8.02 & 0.120\\
\bottomrule
\end{tabular}
 \caption{Results of the application of the sLM tests to the time series of 4 populations of larvae of the pyralid moth \emph{P. interpunctella} under different experimental conditions: temperature (first column), quality of the diet (second column). The fourth column reports the value of the sLM test statistic, whereas the last column contains the bootstrap $p$-values of our test.}\label{tab:L2}
\end{table}
\begin{figure}
  \centering
\includegraphics[width=0.9\linewidth,keepaspectratio]{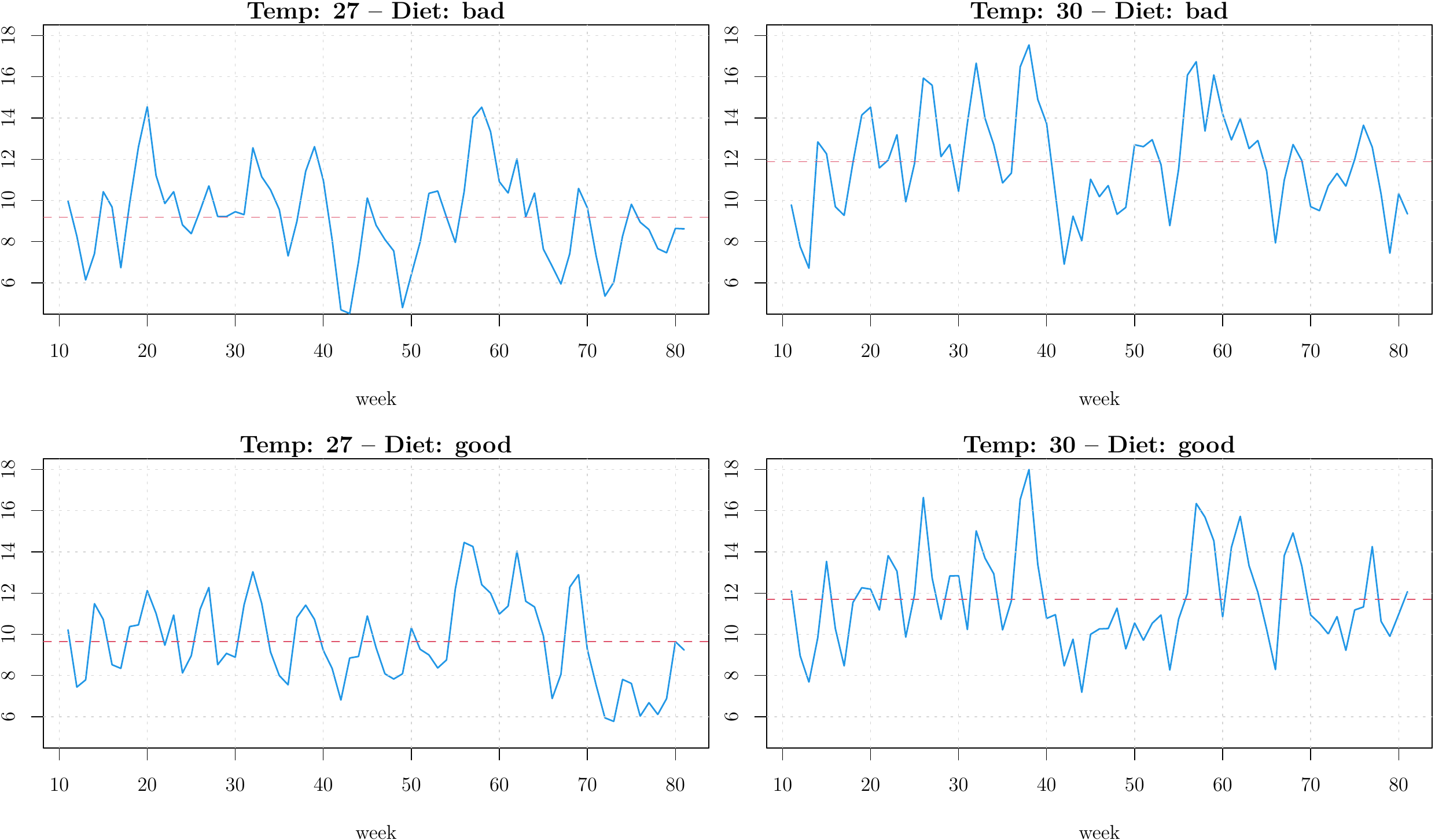}
\caption{Time series of 4 populations of \emph{P. interpunctella} from week 11 to 82 for different experimental conditions. The series have been square-root transformed.}
\label{fig:2}
\end{figure}
\begin{figure}
  \centering
\includegraphics[width=0.6\linewidth,keepaspectratio]{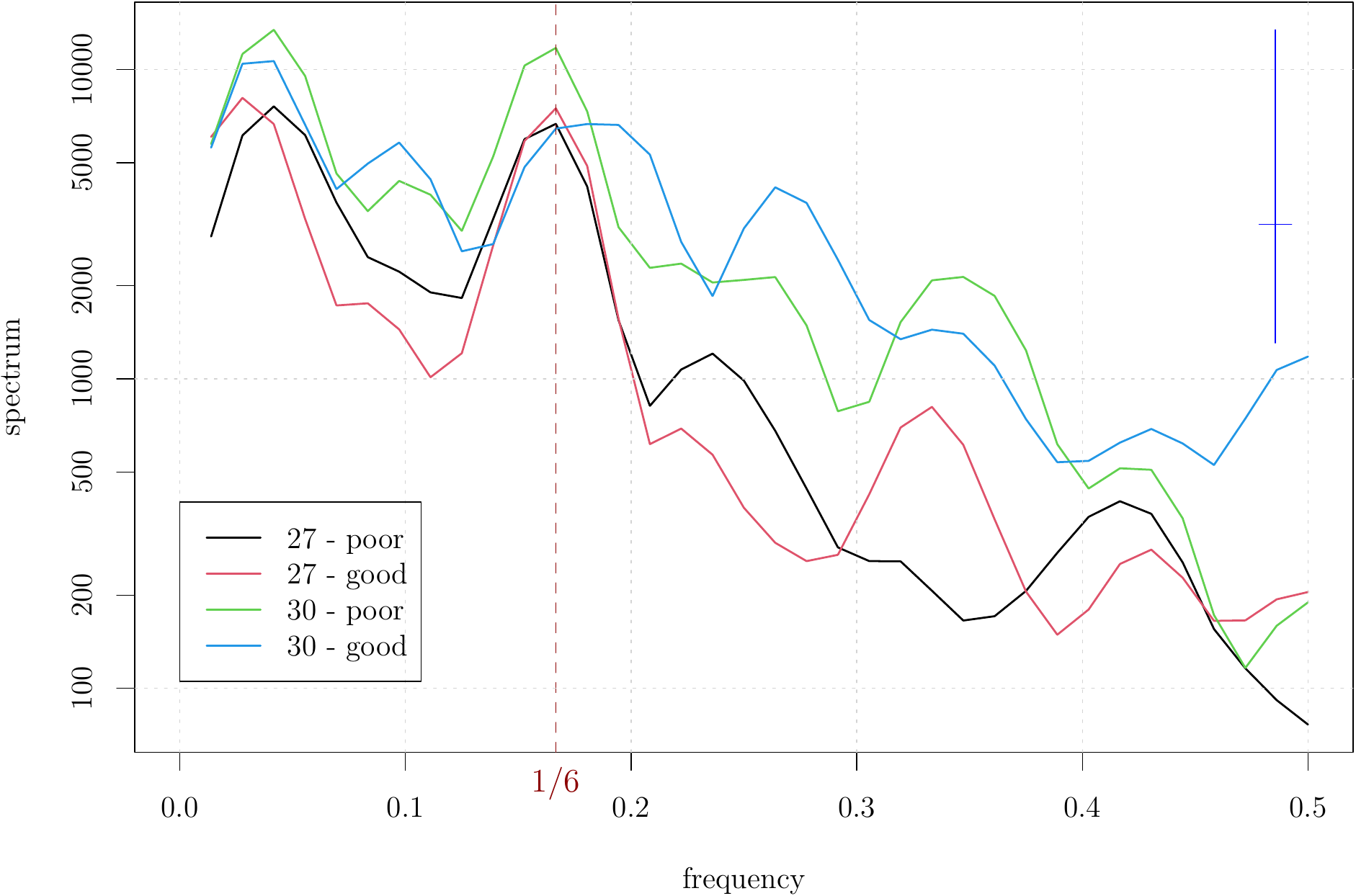}
\caption{Power spectrum of the time series of 4 populations of \emph{P. interpunctella} corresponding to different experimental conditions. The frequency corresponding to the characteristic 6-week cycle is evidenced with a vertical dashed line.}\label{fig:3}
\end{figure}
\noindent
The series are likely to be affected by measurement error so that a threshold ARMA specification is more appropriate than the TAR model (see \cite{Cha21} for the theoretical justification). Typically, the MA parameters greatly enhance the flexibility of the model, while retaining parsimony \cite{Gor20,Gor20a}. Hence, we fit the following TARMA model

\begin{equation}\label{eq:tarma}
  X_t = \begin{cases}
        \phi_{1,0}  + \phi_{1,1} X_{t-1} +  \phi_{1,5} X_{t-5} +  \theta_{1,1} \eps_{t-1} + \eps_t , & \text{ if } X_{t-2}\leq r, \\
        \phi_{2,0}  + \phi_{2,1} X_{t-1} +  \phi_{2,5} X_{t-5} +  \theta_{2,3}\eps_{t-3} + \eps_t, & \text{ otherwise}.
      \end{cases}
\end{equation}

The results are shown in Table~\ref{tab:Lfit}, where the standard error are reported in parenthesis below the estimates. The last column reports the estimated threshold $\hat r$. Due to the length of the series, the standard errors are quite large, still, there are both common features and differences across regimes and for different temperatures. Most importantly, the three-lag specification is consistent with the findings of \cite{Bri00} and manages to reproduce the characteristic 6-week generation cycle, especially for the series corresponding to a temperature of 27°. This is shown in Figure~\ref{fig:4} (left) that shows the power spectral density computed on a series of 100,000 observations simulated from the second fit. The right panel of the figure shows also the histogram of the data with the density of the fitted model, estimated upon the simulated series (blue line). The plots for the other series can be found in the Supplement, Figures~\ref{SMfig:5}--\ref{SMfig:7}. The fourth series (30° - good diet) seems to present different periodicities and the model fit is less satisfactory. This could be an indication of the effect of warming producing a qualitative change in the population dynamics. The diagnostic analysis performed both on the residuals and on the squared residuals of the fitted models does not show any unaccounted dependence, see Figures~\ref{SMfig:diag1}--\ref{SMfig:diag4} of the Supplementary Material. Finally, the Shapiro-Wilk test applied to the residuals (see Table~\ref{SMtab:ntest} of the Supplementary Material) does not show departures from normality, except for the last series whose $p$-value 0.046 somehow confirms that the combined action of warming and diet conditions can alter significantly the population dynamics of larvae.

\begin{table}
  \centering
\begin{tabu}{rrrrrrrrrrr}
temp & diet & $\hat\phi_{1,0}$ & $\hat\phi_{1,1}$ & $\hat\phi_{1,5}$ & $\hat\theta_{1,1}$ & $\hat\phi_{2,0}$ & $\hat\phi_{2,1}$ & $\hat\phi_{2,5}$ & $\hat\theta_{2,3}$ & $\hat r$\\
\cmidrule(lr){3-11}
27 & poor & 1.15 & 0.60 & 0.32 & 0.61 & 0.17 & 0.79 & 0.15 & -0.36 & 9.18\\ \rowfont{\footnotesize}
 &  & (2.38) & (0.21) & (0.15) & (0.21) & (1.37) & (0.13) & (0.09) & (0.18) & \\
\addlinespace
27 & good & 0.73 & 0.69 & 0.31 & 0.24 & 0.24 & 0.72 & 0.19 & -0.06 & 8.89\\ \rowfont{\footnotesize}
 &  & (2.47) & (0.30) & (0.14) & (0.41) & (1.53) & (0.12) & (0.11) & (0.17) & \\
\addlinespace
30 & poor & 3.39 & 0.58 & 0.22 & 0.32 & 1.29 & 0.66 & 0.11 & 0.40 & 11.93\\ \rowfont{\footnotesize}
 &  & (2.64) & (0.19) & (0.12) & (0.23) & (2.29) & (0.12) & (0.12) & (0.20) & \\
\addlinespace
30 & good & 7.00 & 0.10 & 0.32 & 0.40 & 7.62 & 0.44 & -0.09 & -0.18 & 10.72\\ \rowfont{\footnotesize}
 &  & (5.43) & (0.42) & (0.19) & (0.35) & (2.09) & (0.13) & (0.13) & (0.15) & \\
\cmidrule(lr){3-11}
\end{tabu}
 \caption{Estimated parameters for the threshold ARMA model of Eq.~(\ref{eq:tarma}) fitted to the time series of populations of larvae under four different experimental conditions. The standard errors are reported in parenthesis below the estimates.}\label{tab:Lfit}
\end{table}
\begin{figure}
  \centering
\includegraphics[width=0.45\linewidth,keepaspectratio]{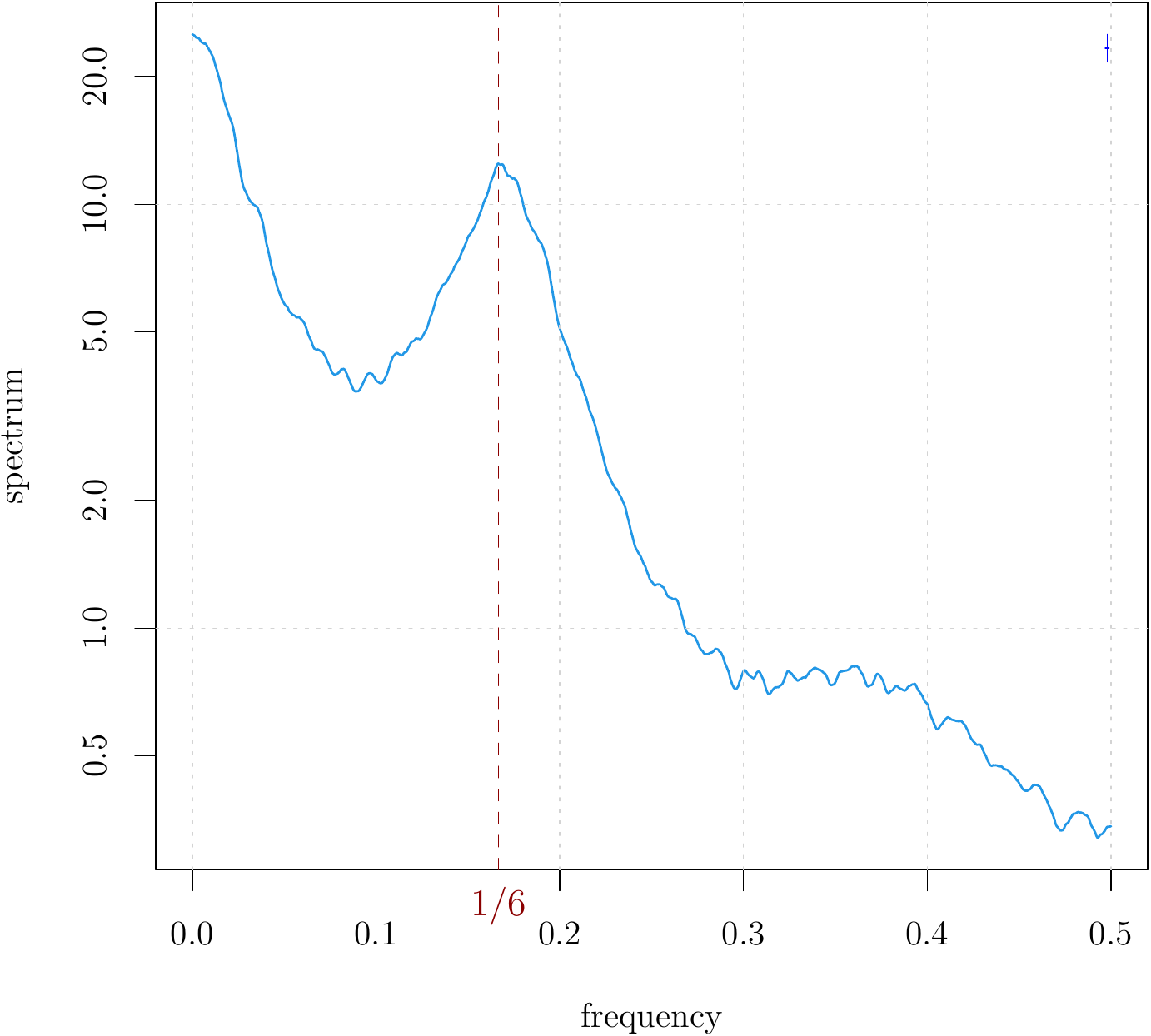}
\includegraphics[width=0.45\linewidth,keepaspectratio]{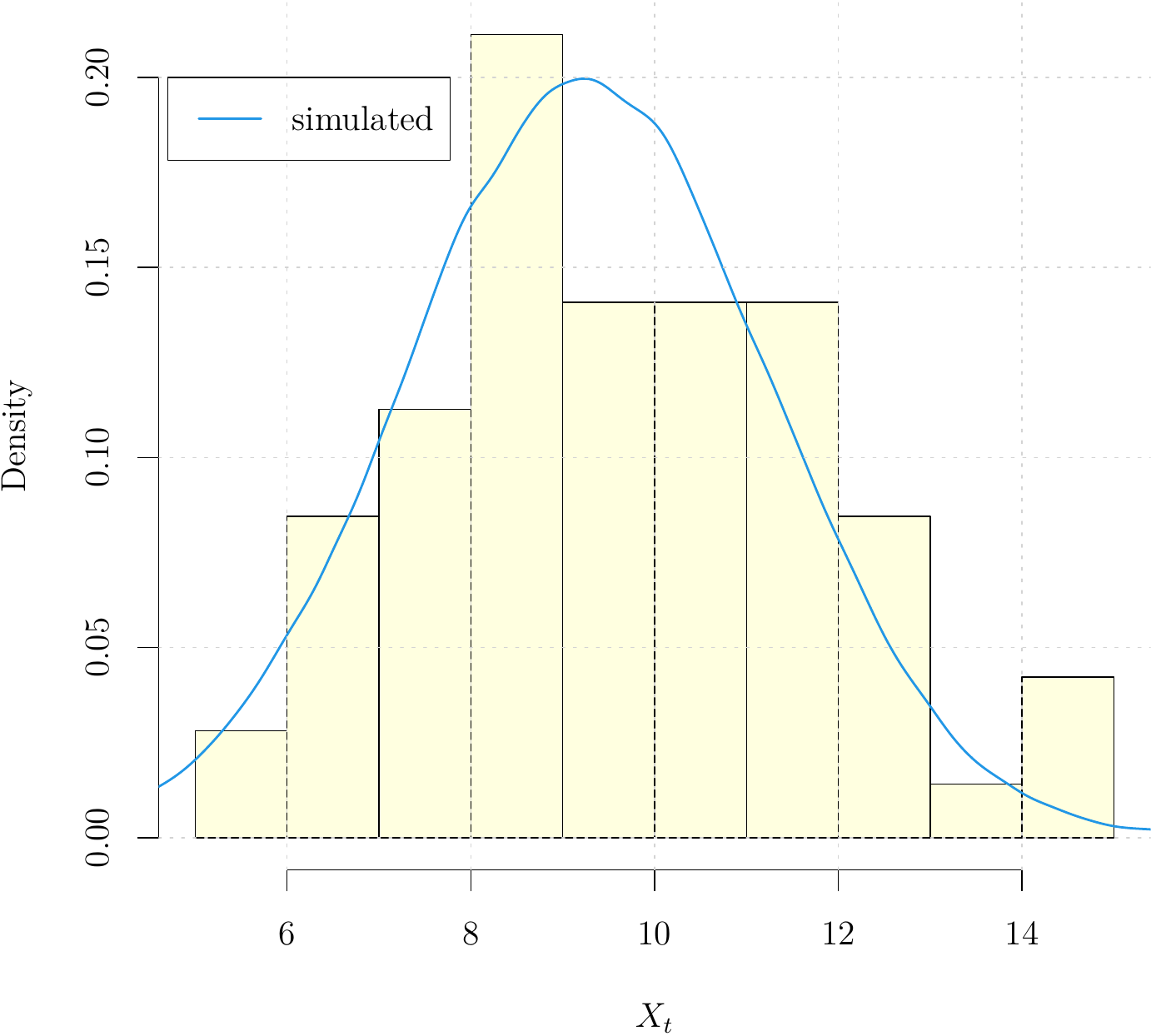}
\caption{(Left) Power spectral density of the simulated time series of 100k observations from the model fit of the second time series (temp: 27°, diet: good). The frequency corresponding to the characteristic 6-week cycle is evidenced with a vertical dashed line.(Right) Histogram of the data (yellow) with the superimposed density of the fitted model, estimated upon the simulated series (blue line).}\label{fig:4}
\end{figure}
%


\section{Proofs}\label{sec:proof}

\subsubsection*{Proof of Proposition~\ref{prop:UBLLN_TARp}}
 \textbf{PART 1.} The proof is divided in two parts: first, we show Eq.~(\ref{eq:UBLLN_TAR1}) for a given $r$ and then we prove that the result holds also uniformly for $r \in [r_{L},r_{U}]$. \par
  \textit{Pointwise convergence.} We assume  $r$ to  be fixed and, for each $\eta>0$, we show that
\begin{equation}\label{eq:UBLLN_point}
  P^*\left(\left|\frac{1}{n}\sum_{t=1}^{n}X^{*u}_{t}I(X^*_{t}\leq r)-E[X^u_{t}I(X_{t}\leq r)]\right|>2\eta\right)\xrightarrow[n\to\infty]{p}0.
\end{equation}
Since the indicator function $I(y\leq r)$ is not differentiable, standard methods based upon Taylor's expansion cannot be applied. We exploit the fact that the function is discontinuous only at $r$. By extending the approach used in \cite{Cha20}, we approximate the step function with a sequence of continuous and differentiable functions $G_{\alpha}(y)$, parameterized by $\alpha \geq 0$:
\begin{equation}\label{eq:Gy}
G_{\alpha}(y) =
\begin{cases}
  \frac{1}{2} + \frac{1}{\pi}\arctan(\frac{r-y}{\alpha}) & \mbox{if } y \neq r \\
  \frac{1}{2}& \mbox{if } y = r
\end{cases}
\end{equation}
In Figure~\ref{fig:G} we show the plot of $G_{\alpha}(y)$ for three values of $\alpha$, together with the limit value $\alpha=0$ for which $G_{\alpha}(y) = I(y \leq r)$ almost surely.
\begin{figure}
  \centering
  \includegraphics[keepaspectratio,width=0.5\linewidth]{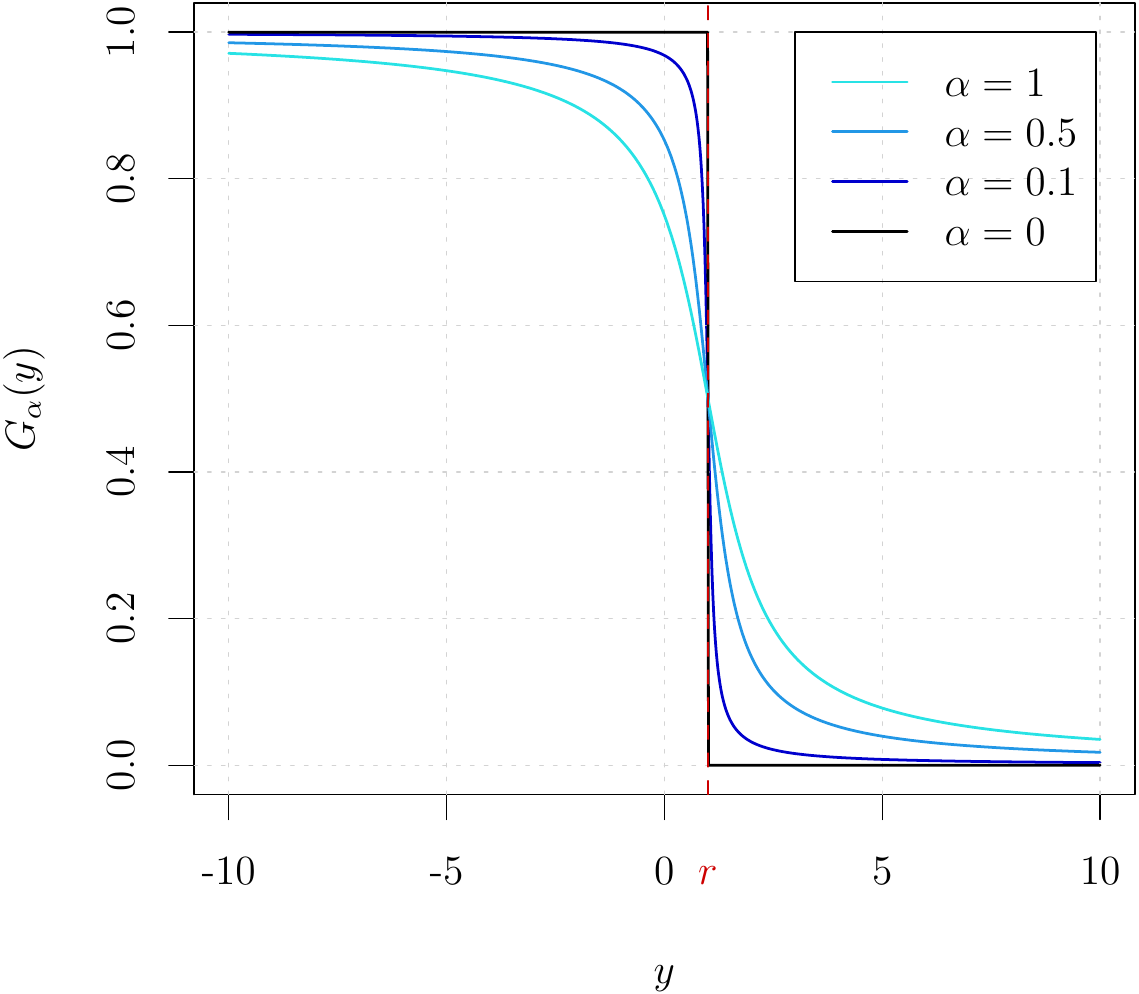}
  \caption{$G_{\alpha}(y)$ for $\alpha=1,0.5,0.1$, together with the limit value $\alpha=0$ for which $G_{\alpha}(y) = I(y \leq r)$ almost surely.}\label{fig:G}
\end{figure}
For each $\delta>0$, define the interval
\begin{equation}\label{eq:L_U_interval}
  [L_{\alpha,\delta},U_{\alpha,\delta}]:= r \pm q_{\alpha,\delta},
\end{equation}
where $q_{\alpha,\delta} = \alpha \tan(\pi(\delta-1/2))$. This implies:
\begin{align}
\left|I(y\leq r)-G_\alpha(y)\right|<\delta&\text{ if } y\notin[L_{\alpha,\delta},U_{\alpha,\delta}];\label{PWC_ind_2}\\
\left|I(y\leq r)-G_\alpha(y)\right|<1&\text{ if } y\in[L_{\alpha,\delta},U_{\alpha,\delta}].\label{PWC_ind_3}
\end{align}
Conditions Eq.~(\ref{eq:L_U_interval})--(\ref{PWC_ind_3}) assure that, when $\alpha$ and $\delta$ approach zero, the interval $[L_{\alpha,\delta},U_{\alpha,\delta}]$ collapses on $r$ and the distance between $G_\alpha(\cdot)$ and $I(\cdot \leq r)$, which is bounded by $\delta$, vanishes.
Now, it holds that:
\begin{align}
&P^*\left(\left|\frac{1}{n}\sum_{t=1}^{n}X^{*u}_{t}I(X^*_{t}\leq r)-E[X^{u}_{t}I(X_{t}\leq r)]\right|>2\eta\right)\nonumber\\
&\leq P^*\left(\left|\frac{1}{n}\sum_{t=1}^{n}X^{*u}_{t}I(X^*_{t}\leq r)-\frac{1}{n}\sum_{t=1}^{n}X^{*u}_{t}G_\alpha(X^*_{t})\right.\right.\nonumber\\
&\phantom{\left(\frac{1}{n}\right)}\left.\left.\phantom{\frac{1}{2}}-E[X^{u}_{t}I(X_{t}\leq r)]+E[X^{u}_{t}G_\alpha(X_{t})]\right|>\eta\right)\label{eq:bound1}\\
&+P^*\left(\left|\frac{1}{n}\sum_{t=1}^{n}X^{*u}_{t}G_\alpha(X^*_{t})-E[X^{u}_{t}G_\alpha(X_{t})]\right|>\eta\right).\label{eq:bound2}
\end{align}
Markov's inequality implies that, in order to prove that Eq.~(\ref{eq:bound1}) is $o_{p^*}(1)$, it suffices to show that the following two expectations vanish in probability:
\begin{align}
&E^*\left[\left|E[X^{u}_{t}I(X_{t}\leq r)]-E[X^{u}_{t}G_\alpha(X_{t})]\right|\right],\label{eqn_main2.1}\\
&E^*\left[\left|\frac{1}{n}\sum_{t=1}^{n}X^{*u}_{t}I(X^*_{t}\leq r)-\frac{1}{n}\sum_{t=1}^{n}X^{*u}_{t}G_\alpha(X^*_{t})\right|\right].\label{eqn_main2.2}
\end{align}
As for  Eq.~(\ref{eqn_main2.1}): let $f_X(\cdot)$ be the stationary probability density function of the AR$(p)$ process $\{X_t\}$. Since it is continuous \citep[see e.g.][theorem 1.3]{And00} and $E[|X_t|^u]<\infty$, by using the same argument developed in \citet{Lin05}, it is possible to show that there exists a positive finite constant, say $M$,  such that $\sup_{x\in\mathds{R}}|x|^uf_X(x)<M$. It holds that:
\begin{align*}
&E^*\left[\left|E[X^{u}_{t}I(X_{t}\leq r)]-E[X^{u}_{t}G_\alpha(X_{t})]\right|\right]=\left|E[X^{u}_{t}I(X_{t}\leq r)]-E[X^{u}_{t}G_\alpha(X_{t})]\right|\\
&\leq E[|X^{u}_{t}|\cdot|I(X_{t}\leq r)-G_\alpha(X_{t})|]\\
&=E[|X^{u}_{t}|\cdot|I(X_{t}\leq r)-G_\alpha(X_{t})|I(X_t\notin[L_{\alpha,\delta},U_{\alpha,\delta}])]\\
&+E[|X^{u}_{t}|\cdot|I(X_{t}\leq r)-G_\alpha(X_{t})|I(X_t\in[L_{\alpha,\delta},U_{\alpha,\delta}])]\\
&\leq \delta E[|X_t|^u]+ M(U_{\alpha,\delta}-L_{\alpha,\delta}),
\end{align*}
where the last inequality follows from Eq.~(\ref{eq:L_U_interval}), Eq.~(\ref{PWC_ind_2}) and Eq.~(\ref{PWC_ind_3}). Hence, Eq.~(\ref{eqn_main2.1}) can be made arbitrarily small in probability by choosing $\alpha$ and $\delta$ sufficiently small. A similar argument handles Eq.~(\ref{eqn_main2.2}):
\begin{align*}
&E^*\left[\left|\frac{1}{n}\sum_{t=1}^{n}X^{*u}_{t}I(X^*_{t}\leq r)-\frac{1}{n}\sum_{t=1}^{n}X^{*u}_{t}G_\alpha(X^*_{t})\right|\right]\\
&\leq E^*\left[\frac{1}{n}\sum_{t=1}^{n}|X^{*u}_{t}|\cdot|I(X^*_{t}\leq r)-G_\alpha(X^*_{t})|\right]\\
&=E^*\left[\frac{1}{n}\sum_{t=1}^{n}|X^{*u}_{t}|\cdot|I(X^*_{t}\leq r)-G_\alpha(X^*_{t})|I(X^*_t\notin[L_{\alpha,\delta},U_{\alpha,\delta}])\right]\\
&+E^*\left[\frac{1}{n}\sum_{t=1}^{n}|X^{*u}_{t}|\cdot|I(X^*_{t}\leq r)-G_\alpha(X^*_{t})|I(X^*_t\in[L_{\alpha,\delta},U_{\alpha,\delta}])\right]\\
&\leq \delta\frac{1}{n}\sum_{t=1}^{n}E^*[|X^{*u}_{t}|]+\mathcal{M}\frac{1}{n}\sum_{t=1}^{n}P^*(X^*_t\in[L_{\alpha,\delta},U_{\alpha,\delta}]),
\end{align*}
where $\mathcal{M}=\max\{|L_{\alpha,\delta}|^u,|U_{\alpha,\delta}|^u,1\}$. Lemma~\ref{lemma:prob} implies that $n^{-1}\sum_{t=1}^{n}P^*(X^*_t\in[L_{\alpha,\delta},U_{\alpha,\delta}])\leq n^{-1}$ with probability 1, hence Eq.~(\ref{eqn_main2.2}) is $o_{p^*}(1)$. Lastly, in order to show that also Eq.~(\ref{eq:bound2}) is $o_{p^*}(1)$, we use the following two expansions:
\begin{align}
  &G_\alpha(X_t^*)=G_\alpha(X_t)+g_\alpha(Y_t^*)(X_t^*-X_t)\label{eq:taylor1};\\
  &G_\alpha(X_t)=G_\alpha(q_{\alpha,\delta}+2r)+g_\alpha(Y_t)(X_t- q_{\alpha,\delta} -2 r)\label{eq:taylor2}
\end{align}
where $q_{\alpha,\delta}$ is defined in Eq.~(\ref{eq:L_U_interval}), $Y_t^*=\lambda_{1,t} X^*_t+(1-\lambda_{1,t})X_t$ and $Y_t=\lambda_{2,t} X_t+(1-\lambda_{2,t})(q_\delta+2r)$ for some $\lambda_{j,t}$ with $0\leq \lambda_{j,t}\leq 1$ and $j=1,2$; moreover,
$$
g_{\alpha}(y) = \frac{\partial G_{\alpha}(y)}{\partial y} =
\begin{cases}
  -\frac{\alpha}{\pi(\alpha^2 +(r-y)^2)} & \mbox{if } y \neq r \\
  0 & \mbox{if } y = r.
\end{cases}
 $$
Note that $g_{\alpha}(y) \xrightarrow[\alpha \rightarrow 0]{} 0\text{ for each } y$. Since the ergodicity of $\{X_t\}$ implies that
 $$\frac{1}{n}\sum_{t=1}^{n}X^u_{t}G_\alpha(X_{t}) \xrightarrow[n\to\infty]{p}E[X^u_{t}G_\alpha(X_{t})],$$
 it suffices to prove that
\begin{align}
 P^*\left(\left|\frac{1}{n}\sum_{t=1}^{n}X^{*u}_{t}G_\alpha(X^*_{t})-\frac{1}{n}\sum_{t=1}^{n} X^u_{t}G_\alpha(X_{t})\right|>\eta/2\right) \xrightarrow[n\to\infty]{p}0, \label{eq:taylor1a}
\end{align}
which can be achieved by using Markov's inequality. Indeed, by using Eq.~(\ref{eq:taylor1}), Eq.~(\ref{eq:taylor2}) and since $G_\alpha(q_\delta+2r)=\delta$  we have
\begin{align*}
&E^*\left[\left|\frac{1}{n}\sum_{t=1}^{n}X^{*u}_{t}G_\alpha(X^*_{t}) -\frac{1}{n}\sum_{t=1}^{n}X^u_{t}G_\alpha(X_{t})\right|\right]\\
&\leq E^*\left[\left|\frac{1}{n}\sum_{t=1}^{n}(X^{*u}_{t}-X^{u}_{t})\left\{\delta+g_\alpha(Y_t)(X_t-q_\delta-2r)\right\}\right|\right]\\
&+ E^*\left[\left|\frac{1}{n}\sum_{t=1}^{n}X^{*u}_{t}g_\alpha(Y_t^*)(X^*_{t}- X_{t})\right|\right]\\
\end{align*}
which can be made arbitrarily small in probability by taking $\alpha$ and $\delta$ sufficiently small and this completes the proof.\par
  \textit{Uniform convergence.} By deploying arguments similar to \citet{Cav17}, we show that for each $\eta>0$
\begin{equation}\label{eq:BLLN_unif}
  P^*\left(\sup_{r\in[r_L,r_U]}\left|\Delta_n^*(r)\right|>2\eta\right)\xrightarrow [n\to\infty]{p}0,
\end{equation}
where $\Delta_n^*(r)=n^{-1}\sum_{t=1}^{n}X^{*u}_{t}I(X^*_{t}\leq r)-E[X^{u}_{t}I(X_{t}\leq r)]$. Since $[r_L,r_U]$ is a compact subset of $\mathds{R}$, for any  $c>0$, there exists a finite coverage $\{[r_{i-1},r_i];i=1,\dots,m\}$, with $m$ being a constant, such that $r_L=r_0<r_1<\ldots < r_{m-1}<r_m=r_U$ ad $r_{i}-r_{i-1}\leq c$, for each $i=1,\ldots,m$. Therefore, it holds that
$$\sup_{r\in[r_L,r_U]}\left|\Delta_n^*(r)\right|\leq \max_{i=0,\ldots,m}\left|\Delta_n^*(r_{i})\right|+\max_{i=1,\ldots,m}\sup_{r\in[r_{i-1},r_i]}\left|\Delta_n^*(r)-\Delta_n^*(r_{i-1})\right|$$
which implies:
\begin{align}
   &P^*\left(\sup_{r\in[r_L,r_U]}\left|\Delta_n^*(r)\right|>2\eta\right)\nonumber\\
   &\leq P^*\left(\max_{i=0,\ldots,m}\left|\Delta_n^*(r_{i})\right|>\eta\right) +P^*\left(\max_{i=1,\ldots,m}\sup_{r\in[r_{i-1},r_i]}\left|\Delta_n^*(r)-\Delta_n^*(r_{i-1})\right|>\eta\right)\label{eq:prob_bound}
\end{align}
By combining Bonferroni's inequality, the pointwise convergence and the finiteness of $m$, we have that
$$P^*\left(\max_{i=1,\ldots,m}\left|\Delta_n^*(r_{i})\right|>\eta\right) \leq \sum_{i=1}^{m}P^*\left(\left|\Delta_n^*(r_{i})\right|>\eta\right)\xrightarrow[n\to\infty]{p}0.$$
It remains to show that the second term of the RHS of Eq.~(\ref{eq:prob_bound}) converges to zero in probability (in probability), which is the case because:
\begin{align*}
&E^*\left[\sup_{r\in[r_{i-1},r_i]}\left|\Delta_n^*(r)-\Delta_n^*(r_{i-1})\right|\right]\\
&\leq E^*\left[\sup_{r\in[r_{i-1},r_i]}\frac{1}{n}\sum_{t=1}^{n}|X_t^{*u}|I(r_{i-1}<X_t^*\leq r)\right.\\
&\phantom{E^*\left[\right.}\phantom{\frac{1}{2}}+\left.\sup_{r\in[r_{i-1},r_i]}E[|X_t|^uI(r_{i-1}<X_t\leq r)]\right]\\
&\leq E^*\left[\frac{1}{n}\sum_{t=1}^{n}|X_t^{*u}|I(r_{i-1}-c<X_t^*\leq r_{i-1}+c)+E[|X_t|^uI(r_{i-1}<X_t\leq r_i)]\right]\\
&\leq \frac{\mathcal{M}_1}{n}\sum_{t=1}^{n}P^*(r_{i-1}-c<X_t^*\leq r_{i-1}+c)+\mathcal{M}_2P(r_{i-1}<X_t\leq r),
\end{align*}
with $\mathcal{M}_1=\max\{ |r_{i-1}-c|^u, |r_{i-1}+c|^u,1 \}$ and $\mathcal{M}_2=\max\{ |r_{i-1}|^u, |r_{i}|^u,1 \}$.
By combining Lemma~\ref{lemma:prob} and Markov's inequality, the proof is completed since $c$ can be chosen arbitrarily small.
\par
\textbf{PART 2.} The proof follows via the same arguments used in 1. and, hence, it is omitted.
\subsubsection*{Proof of Proposition~\ref{prop:matrix_TARp}}
By routine algebra it holds that
$I^*_{n,22}(r)=I^*_{n,12}(r)=I^{*\;\intercal}_{n,21}(r)$ are $(p+1)\times (p+1)$ symmetric matrices whose $(i+1,j+1)$th element is

  \begin{align*}
    &\sum_{t=1}^{n}I(X^*_{t-d}\leq r),  &\mbox{if } i=0,j=0 \\
    &\sum_{t=1}^{n}X^*_{t-j}I(X^*_{t-d}\leq r),  &\mbox{if } i=0,j\neq0 \\
    &\sum_{t=1}^{n}X^*_{t-i}X^*_{t-j}I(X^*_{t-d}\leq r),  &\mbox{if } i\neq0,j\neq0
  \end{align*}
  and $I^*_{n,11}=I^*_{n,22}(\infty)$.
The results readily follows by combining Proposition~\ref{prop:UBLLN_TARp} with $u=0,1,2$ for point 1 and standard results of bootstrap asymptotic analysis.

\subsubsection*{Proof of Proposition~\ref{prop:score_TARp}}

The proof is based upon verifying the following two equalities:
\begin{align}
\sqrt{n}(\tilde{\boldsymbol\phi}-\tilde{\boldsymbol{\phi}}^*)&= -\left(\frac{I^*_{n,11}}{n}\right)^{-1}\frac{1}{\sqrt{n}}\frac{\partial\ell_n^*}{\partial\boldsymbol{\phi}} \label{score_1}\\
\frac{1}{\sqrt{n}}\frac{\partial\tilde\ell_n^*}{\partial\boldsymbol{\Psi}}(r)&=\frac{1}{\sqrt{n}}\frac{\partial\ell_n^*}{\partial\boldsymbol{\Psi}}(r)+ \frac{I^*_{n,21}(r)}{n}\sqrt{n}(\tilde{\boldsymbol{\phi}}-\tilde{\boldsymbol{\phi}}^*),\label{score_2}
\end{align}
where $\frac{\partial\ell_n^*}{\partial\boldsymbol{\phi}}$, $\frac{\partial\ell_n^*}{\partial\boldsymbol{\Psi}}(r)$ and $\frac{\partial\tilde\ell_n^*}{\partial\boldsymbol{\Psi}}(r)$ are defined in Eq.~(\ref{eq:boot_score2}) whereas $I^*_{n,11}$ and $I^*_{n,21}(r)$ in Eq.~(\ref{eq:inf_matrix_boot}).
As previously state we use $\boldsymbol\phi$ as to refer to a generic parameter and let $\frac{\partial\ell_n^*}{\partial\boldsymbol\phi}(\boldsymbol\phi)$ be the partial derivative of the bootstrap log-likelihood  computed under the null hypothesis, i.e.:
$$\left.\frac{\partial\ell_n^*}{\partial\boldsymbol\phi}(\boldsymbol\phi) =\frac{\partial\ell_n^*(\boldsymbol\eta,r)}{\partial\boldsymbol\eta}\right|_ {\boldsymbol\Psi=\boldsymbol 0,\sigma^2=\tilde{\sigma}^2}.$$
Next, we derive two first order Taylor expansions of the function $\frac{\partial\ell_n^*}{\partial\boldsymbol\phi}(\boldsymbol\phi)$: one at the true bootstrap value $\tilde{\boldsymbol\phi}$ ad the other at the bootstrap  MLE $\tilde{\boldsymbol\phi}^*$. Note that, since the $\nu'$ partial derivatives of $\ell_n^*(\boldsymbol\eta,r)$ are zero for $\nu>2$, the Taylor expansion of $\frac{\partial\ell_n^*}{\partial\boldsymbol\phi}(\boldsymbol\phi)$ coincides with its first-order Taylor polynomial; moreover, the Jacobian matrix of $\frac{\partial\ell^*}{\partial\boldsymbol\phi}(\boldsymbol\phi)$ is $-I^*_{n,11}$, defined in Eq.~(\ref{eq:inf_matrix_boot}), which does not depend on $\boldsymbol\phi$. Hence it results that:
\begin{align}
\frac{\partial\ell_n^*}{\partial\boldsymbol\phi}(\boldsymbol\phi)&= \frac{\partial{\ell_n}^*}{\partial\boldsymbol{\phi}}-I_{n,11}(\boldsymbol\phi-\tilde{\boldsymbol\phi}),\label{eq:Taylor1}\\
\frac{\partial\ell_n^*}{\partial\boldsymbol\phi}(\boldsymbol\phi)&= \frac{\partial\tilde{\ell}_n^*}{\partial\boldsymbol{\phi}}-I_{n,11}(\boldsymbol\phi-\tilde{\boldsymbol\phi}^*)\label{eq:Taylor2}
\end{align}
\noindent
with $\frac{\partial\tilde{\ell}_n^*}{\partial\boldsymbol{\phi}}$ and $\frac{\partial{\ell}_n^*}{\partial\boldsymbol{\phi}}$ being defined in Eq.~(\ref{eq:boot_score}). By subtracting Eq.~(\ref{eq:Taylor2}) from Eq.~(\ref{eq:Taylor1}) and dividing by $\sqrt{n}$, we get
\begin{equation}\label{eq:Taylor_diff}
  \frac{1}{\sqrt{n}}\frac{\partial\tilde{\ell}_n^*}{\partial\boldsymbol{\phi}}= \frac{1}{\sqrt{n}}\frac{\partial{\ell}_n^*}{\partial\boldsymbol{\phi}} +\frac{I_{n,11}^*}{n}\sqrt{n}(\tilde{\boldsymbol{\phi}}-\tilde{\boldsymbol{\phi}}^*).
\end{equation}
Since $\tilde{\boldsymbol{\phi}}^*$ is the bootstrap MLE obtained under the null hypothesis, $\frac{\partial\tilde{\ell}_n^*}{\partial\boldsymbol{\phi}}=0$ thence Eq.~(\ref{eq:Taylor_diff}) implies
\begin{align*}
&\frac{I_{n,11}^*}{n}\sqrt{n}(\tilde{\boldsymbol{\phi}}^*-\tilde{\boldsymbol{\phi}}) =-\frac{1}{\sqrt{n}}\frac{\partial\ell_n^*}{\partial\boldsymbol{\phi}}.
\end{align*}
and hence Eq.~(\ref{score_1}) follows. We prove Eq.~(\ref{score_2}) componentwise. We detail below the argument only for the  first component since it can be easily adapted to the other ones. Therefore, we show that:
\begin{align}
 \frac{1}{\sqrt{n}}\frac{\partial\tilde\ell_n^*}{\partial{\Psi_0}}(r)&= \frac{1}{\sqrt{n}}\frac{\partial\ell_n^*}{\partial{\Psi_0}}(r) +\sqrt{n}({\tilde\phi_0}-{\tilde\phi_0^*})\frac{1}{n}\sum_{t=1}^{n}I(X^*_{t-d}\leq r)\nonumber\\
 &+\sum_{i=1}^{p}\sqrt{n}({\tilde\phi_i}-{\tilde\phi_i^*})\frac{1}{n}\sum_{t=1}^{n}X^*_{t-i}I(X^*_{t-d}\leq r).\label{score_2_2}
\end{align}
Let $\tilde\eps_t^*$ be the residuals obtained from the ML fit upon the bootstrap sample $\{X^*_t,t=1\dots,n\}$, i.e.:
\begin{align*}
\tilde{\eps}^*_t&=X^*_t-\tilde{\phi}^*_0-\sum_{i=1}^{p}\tilde{\phi}^*_iX^*_{t-i} =(\tilde{\phi}_{0}-\tilde{\phi}^*_0)+\sum_{i=1}^{p}(\tilde{\phi}_{i}-\tilde{\phi}^*_i)X_{t-i}+\eps^*_t.
\end{align*}
Clearly:
\begin{equation}
\eps_t^*-\tilde\eps_t^*=({\tilde\phi_0^*}-{\tilde\phi_0})+\sum_{i=1}^{p}({\tilde\phi_i^*}-{\tilde\phi_i})X_{t-i}^*.\label{eq:eps_dif}
\end{equation}
Note that $\eps^*_t(\boldsymbol\eta,r)$, defined in Eq.~(\ref{eq:eps_boot}), does not depend on $\sigma^2$ and  $\tilde{\eps}^*_t$ and ${\eps}^*_t$ correspond to the function $\eps^*_t(\boldsymbol\eta,r)$ evaluated at $\boldsymbol\phi=\tilde{\boldsymbol\phi}^*$, $\bPsi=\boldsymbol 0$ and $\boldsymbol\phi=\tilde{\boldsymbol\phi}$, $\bPsi=\boldsymbol 0$, respectively. Consider the partial derivatives of the function $\eps^*_t(\boldsymbol\eta,r)$  and denote:
\begin{equation*}
\left.D_{\Psi_0t}^*(r) =\frac{\partial\eps^*_t(\boldsymbol\eta,r)}{\partial\Psi_0}\right|_ {\boldsymbol\phi=\tilde{\boldsymbol\phi},\bPsi=\boldsymbol 0}
,\quad
\left.\tilde{D}_{\Psi_0t}^*(r)=\frac{\partial\eps^*_t(\boldsymbol\eta,r)}{\partial\Psi_0}\right|_ {\boldsymbol\phi=\tilde{\boldsymbol\phi}^*,\bPsi=\boldsymbol 0}.
\end{equation*}
Note that:
$$D_{\Psi_0t}^*(r)=\tilde{D}_{\Psi_0t}^*(r)=-I(X^*_{t-d}\leq r)$$
therefore, we get:
\begin{align*}
\frac{1}{\sqrt{n}}\frac{\partial\tilde\ell_n^*}{\partial{\Psi_0}}(r)&= -\frac{1}{\sqrt{n}}\sum_{t=1}^{n}\tilde{\eps}_t^*\tilde{D}_{\Psi_0t}^*(r) =-\frac{1}{\sqrt{n}}\sum_{t=1}^{n}\tilde{\eps}_t^*{D}_{\Psi_0t}^*(r)\\
&=-\frac{1}{\sqrt{n}}\sum_{t=1}^{n}\tilde{\eps}_t^*{D}_{\Psi_0t}^*(r)
 -\frac{1}{\sqrt{n}}\sum_{t=1}^{n}{\eps}_t^*{D}_{\Psi_0t}^*(r) +\frac{1}{\sqrt{n}}\sum_{t=1}^{n}{\eps}_t^*{D}_{\Psi_0t}^*(r)\\
&=\frac{1}{\sqrt{n}}\frac{\partial\ell_n^*}{\partial{\Psi_0}}(r) +\frac{1}{\sqrt{n}}\sum_{t=1}^{n}({\eps}_t^*-\tilde{\eps}_t^*){D}_{\Psi_0t}^*(r).
\end{align*}
The expression of $({\eps}_t^*-\tilde{\eps}_t^*)$ in Eq.~(\ref{eq:eps_dif}) implies that
\begin{align*}
\frac{1}{\sqrt{n}}\frac{\partial\tilde\ell_n^*}{\partial{\Psi_0}}(r)&= \frac{1}{\sqrt{n}}\frac{\partial\ell_n^*}{\partial{\Psi_0}}(r) +\frac{1}{\sqrt{n}}\sum_{t=1}^{n}\left\{({\tilde\phi_0^*}-{\tilde\phi_0}) +\sum_{i=1}^{p}({\tilde\phi_i^*}-{\tilde\phi_i})X_{t-i}^*\right\}{D}_{\Psi_0t}^*(r)\\
&=\frac{1}{\sqrt{n}}\frac{\partial\ell_n^*}{\partial{\Psi_0}}(r) \\ &+\frac{1}{\sqrt{n}}\sum_{t=1}^{n}\left\{({\tilde\phi_0^*}-{\tilde\phi_0})+\sum_{i=1}^{p}({\tilde\phi_i^*}-{\tilde\phi_i})X_{t-i}^*\right\} \left\{-I(X^*_{t-d}\leq r)\right\}\\
&=\frac{1}{\sqrt{n}}\frac{\partial\ell_n^*}{\partial{\Psi_0}}(r)+ \sqrt{n}({\tilde\phi_0}-{\tilde\phi_0^*})\frac{1}{n}\sum_{t=1}^{n}I(X^*_{t-d}\leq r)\\
& +\sum_{i=1}^{p}\sqrt{n}({\tilde\phi_i}-{\tilde\phi_i^*})\frac{1}{n}\sum_{t=1}^{n}X^*_{t-i}I(X^*_{t-d}\leq r)
\end{align*}
and this completes the proof.

\subsubsection*{Proof of Proposition~\ref{prop:BCLT_TARp}}
Since $\frac{\partial\ell_n^*}{\partial\boldsymbol\eta}(r)=-\sum_{t=1}^{n}\eps^*_t D^*_{t-1}(r)$, with $D^*_{t}(r)$ being defined in Eq.~(\ref{eq:epsder}), forms a sequence of martingale difference arrays with respect to the filtration $\mathcal{F}_{t-1}^*:=\sigma\{X^*_{t-1}, X^*_{t-2},\dots\}$, the result holds upon proving, uniformly on $r$, the following two conditions:
  \begin{align}
&\frac{1}{n}\sum_{t=1}^{n}E^{*}\left[\eps^{*2}_t(D^*_{t-1}(r))(D^*_{t-1}(r))^\intercal|\mathcal{F}^{*}_{t-1}\right] \xrightarrow[n\to\infty]{p^*}_p\sigma^2 I_\infty(r);\label{BCLT_1}\\
&\frac{1}{n}\sum_{t=1}^{n}E^{*}\left[\eps^{*2}_t\Lambda^{*2}_{t-1}(r) I\left(\left|\eps^{*}_t\Lambda^{*}_{t-1}(r)\right|>\eta\sqrt{n}\right)|\mathcal{F}^{*}_{t-1}\right]\xrightarrow[n\to\infty]{p^*}_p0,\label{BCLT_2}
  \end{align}
  with $\Lambda^{*}_{t}(r):=(\lambda_1,\dots,\lambda_{2(p+1)})\cdot D_t^*(r)$, with $\lambda_i$, $i=1,\dots,2(p+1)$, being real numbers. In order to prove Eq.~(\ref{BCLT_1}) note that the independence between $\eps_t^*$ and $X_{t-j}^*$, $j\geq1$ implies that
  \begin{align*}
  &\frac{1}{n}\sum_{t=1}^{n}E^{*}\left[\eps^{*2}_t(D^*_{t-1}(r))(D^*_{t-1}(r))^\intercal|\mathcal{F}^{*}_{t-1}\right] 
  =E^{*}\left[\eps_t^{*2}\right]\frac{1}{n}\sum_{t=1}^{n}D_{t-1}^*(r)(D_{t-1}^*(r))^\intercal,
  \end{align*}
  which converges in probability (in probability) to $\sigma^2 I_\infty(r)$ uniformly on $r$ by Lemma~\ref{lemma:LLN_eps} and Proposition~\ref{prop:matrix_TARp}. As for Eq.~(\ref{BCLT_2}) first observe that, by using Jensen's inequality and Proposition~\ref{prop:UBLLN_TARp}, $n^{-1}\sum_{t=1}^{n}\Lambda^{*2}_{t-1}(r)$ is bounded by
  \begin{equation}\label{eq:lambda2}
   \frac{2(p+1)}{n}\sum_{t=1}^{n}\left[\left(\lambda_1^2+\lambda_{p+2}^2\right) +\sum_{i=2}^{p+1}\left(\lambda_{i}^2+\lambda_{i+p+1}^2\right)X^{*2}_{t-i+1}\right]=O_{p^*}(1),
  \end{equation}
  whereas $n^{-1}\sum_{t=1}^{n}\Lambda^{*4}_{t-1}(r)$ is bounded by
  \begin{equation}\label{eq:lambda4}
   \frac{8(p+1)^3}{n}\sum_{t=1}^{n}\left[\left(\lambda_1^4+\lambda_{p+2}^4\right) +\sum_{i=2}^{p+1}\left(\lambda_{i}^4+\lambda_{i+p+1}^4\right)X^{*4}_{t-i+1}\right]=O_{p^*}(1).
  \end{equation}
  Now, since $|xy|\leq x^2+y^2$, it follows that
  \begin{align*}
  &\frac{1}{n}\sum_{t=1}^{n}E^{*}\left[\eps^{*2}_t\Lambda^{*2}_{t-1}(r) I\left(\left|\eps^{*}_t\Lambda^{*}_{t-1}(r)\right|>\eta\sqrt{n}\right)|\mathcal{F}^{*}_{t-1}\right]\\
  &\leq \frac{1}{n}\sum_{t=1}^{n}E^{*}\left[\eps^{*2}_t\Lambda^{*2}_{t-1}(r) I\left(\Lambda^{2*}_{t-1}(r)>2^{-1}\eta\sqrt{n}\right)|\mathcal{F}^{*}_{t-1}\right]\\
  &+\frac{1}{n}\sum_{t=1}^{n}E^{*}\left[\eps^{*2}_t\Lambda^{*2}_{t-1}(r) I\left(\eps^{*2}_t>2^{-1}\eta\sqrt{n}\right)|\mathcal{F}^{*}_{t-1}\right]\\
  &\leq \frac{2}{\eta\sqrt{n}}\left\{\frac{1}{n}\sum_{t=1}^{n}\Lambda^{*4}_{t-1}(r)E^*[\eps^{*2}_t] +\frac{1}{n}\sum_{t=1}^{n}\Lambda^{*2}_{t-1}(r)E^*[\eps^{*4}_t]\right\}
  \end{align*}
  which is $o_{p^*}(1)$ by combing Eq.~(\ref{eq:lambda2}), Eq.~(\ref{eq:lambda4}) and Lemma~\ref{lemma:LLN_eps}.
\subsubsection*{Proof of Theorem~\ref{thm:boot}}
In view of Proposition~\ref{prop:BCLT_TARp} and Theorem 18.14, p. 261 of \citet{Vaa98}, it suffices to prove the stochastic equicontinuity of $\frac{\partial\ell_n^*}{\partial\boldsymbol\eta}(r)=-\sum_{t=1}^{n}\eps^*_t D^*_{t-1}(r)$, where $D^*_{t-1}(r)$ is defined in Eq.~(\ref{eq:epsder}). The envelope of $\eps^*_t D^*_{t-1}(r)$ is $\mathcal{L}^2$ integrable in probability:
  \begin{align*}
  &\frac{1}{n}\sum_{t=1}^{n}E^*\left[\sup_{r\in[r_L,r_U]}\|\eps^*_t x^*_{t-1}(r)\|^2\right]= \frac{1}{n}\sum_{t=1}^{n}E^*\left[\sup_{r\in[r_L,r_U]}(\eps^*_t x^*_{t-1}(r))^\intercal (\eps^*_t x^*_{t-1}(r))\right]\\
  &=\frac{1}{n}\sum_{t=1}^{n}E^*\left[\sup_{r\in[r_L,r_U]}\eps_t^{*2}\left\{1+\sum_{i=1}^{p}X^{*2}_{t-i}+I(X^{*}_{t-d}\leq r)+\sum_{i=1}^{p}X^{*2}_{t-i}I(X^{*}_{t-d}\leq r)\right\}\right]\\
  &\leq\frac{2}{n}\sum_{t=1}^{n}E^*\left[\eps_t^{*2}\left\{1+\sum_{i=1}^{p}X^{*2}_{t-i}\right\}\right] =\frac{2}{n}\sum_{t=1}^{n}E^*\left[E^*\left[\eps_t^{*2}\left\{1+\sum_{i=1}^{p}X^{*2}_{t-i}\right\}|\mathcal{F}^{*}_{t-1}\right]\right]\\
  &=\frac{2}{n}\sum_{t=1}^{n}E^*\left[1+\sum_{i=1}^{p}X^{*2}_{t-i}\right]E^*\left[\eps_t^{*2}|\mathcal{F}^{*}_{t-1}\right] =E^*\left[\eps_t^{*2}\right]E^*\left[\frac{2}{n}\sum_{t=1}^{n}\left\{1+\sum_{i=1}^{p}X^{*2}_{t-i}\right\}\right]\\
  &=O_{p^*}(1).
  \end{align*}
Define the norms:
\begin{align*}
\rho^*_n(r_1,r_2)&=\left\|\frac{1}{\sqrt{n}}\left(\frac{\partial\ell^*}{\partial\boldsymbol\eta}(r_2) -\frac{\partial\ell^*}{\partial\boldsymbol\eta}(r_1)\right)\right\|_2\\
\text{and } \rho(r_1,r_2)&=\left\|\eps_tD_{t-1}(r_2)-\eps_tD_{t-1}(r_1)\right\|_2,
\end{align*}
where, in analogy with Eq.~(\ref{eq:epsder}), $D_t(r)$ is the first-order derivative of the function $\eps_t(\boldsymbol\eta,r)$ defined in Eq.~(\ref{eq:eps}), i.e.:
\begin{align*}
D_t(r)&=\left(-1,-X_t,\dots,-X_{t-p+1},\right.\\
&\left.-I(X_{t-d+1}\leq r),-X_tI(X_{t-d+1}\leq r),\dots,-X_{t-p+1}I(X_{t-d+1}\leq r)\right)^\intercal.
\end{align*}
  It holds that
  \begin{align*}
  &\rho^{*2}_n(r_1,r_2)=E^*\left\|\frac{1}{\sqrt{n}}\left(\frac{\partial\ell^*}{\partial\boldsymbol\eta}(r_2) -\frac{\partial\ell^*}{\partial\boldsymbol\eta}(r_1)\right)\right\|\\
  &= E^*\left[\frac{1}{n}\sum_{t=1}^{n}\eps_t^{*2}\left(I(r_1<X^{*}_{t-d}\leq r_2)+\sum_{i=1}^{p}X^{*2}_{t-i}I(r_1<X^{*}_{t-d}\leq r_2)\right)\right].
  \end{align*}
  By using the law of iterated expectations and Proposition~\ref{prop:UBLLN_TARp}, $\rho^*_n(r_1,r_2)$ converges uniformly to
  \begin{align*}
  \left\{\sigma^2P(r_1<X_t\leq r_2)+\sigma^2\sum_{i=1}^{p}E[X_{t-i}^2I(r_1<X_{t-d}\leq r_2)]\right\}=\rho^2(r_1,r_2).
  \end{align*}
  Thence the same argument of Theorem 2 of \citet{Han96} holds, and this completes the proof.

\bibliographystyle{plainnat}
\bibliography{bootstrapTARMA_V3}


\newpage
\setcounter{table}{0}
\setcounter{section}{0}
\setcounter{page}{1}
\renewcommand{\thesection}{\Alph{section}}

\begin{center}
\section*{\Large Supplement for:\\ The validity of bootstrap testing in the threshold framework}
\end{center}

\bigskip

\begin{center}
\Large
{Simone Giannerini, Greta Goracci, Anders Rahbek}
\end{center}

\bigskip\bigskip

\noindent
\textbf{Abstract}\\
\smallskip
This Supplement has 3 sections. In Section~\ref{SMsec:lemmas} we present auxiliary technical lemmas used in the proofs. Section~\ref{SMsec:MC} contains supplementary results from the simulation study. Section~\ref{SMsec:applic} presents supplementary results on the analysis of larvae population dynamics under the effect of warming.

\bigskip\bigskip

\section{Auxiliary Lemmas}\label{SMsec:lemmas}

\begin{lemma}\label{lemma:prob}
  Let $\{X_t^*,t=1,\dots,n\}$ be defined in Eq.~(\ref{eq:boot_proc}) and assume $b_1(c)$ and $b_2(c)$ to be two continuous functions in $c$ such that
  $$\lim_{c\to 0}b_1(c)=\lim_{c\to 0}b_2(c)=\mathrm{C},$$
  with $\mathrm{C}$ being a real number. Then, for each $\gamma>0$ we can choose $c$ sufficiently small such that
  \begin{equation}\label{eq:lemma}
    P^*(b_1(c)\leq X^*_{t}\leq b_2(c))\leq \frac{1}{n}+\gamma
  \end{equation}
  with probability one.
\end{lemma}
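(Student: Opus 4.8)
The plan is to exploit the fact that, conditionally on the data, the bootstrap process $\{X_t^*\}$ has a purely atomic distribution: each innovation $\eps_t^*$ is drawn with equal probability $1/n$ from the $n$ recentred residuals $\tilde\eps_1^c,\dots,\tilde\eps_n^c$, so no single value can be attained with $P^*$-probability exceeding $1/n$. First I would condition on $\mathcal{F}_{t-1}^*:=\sigma\{X_{t-1}^*,X_{t-2}^*,\dots\}$ and write $X_t^*=\mu_{t-1}^*+\eps_t^*$, where $\mu_{t-1}^*:=\tilde\phi_0+\sum_{i=1}^p\tilde\phi_i X_{t-i}^*$ is $\mathcal{F}_{t-1}^*$-measurable. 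Since $\eps_t^*$ is resampled uniformly from the recentred residuals and is independent of $\mathcal{F}_{t-1}^*$ under $P^*$,
\begin{equation*}
P^*\!\left(b_1(c)\le X_t^*\le b_2(c)\,\middle|\,\mathcal{F}_{t-1}^*\right) =\frac{1}{n}\,\#\left\{s:\ \tilde\eps_s^c\in[b_1(c)-\mu_{t-1}^*,\,b_2(c)-\mu_{t-1}^*]\right\}.
\end{equation*}

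Next I would bound the cardinality on the right-hand side uniformly in $t$. The crucial observation is that the shifted interval has width $b_2(c)-b_1(c)$, which depends neither on $t$ nor on the random centring $\mu_{t-1}^*$, and which tends to $0$ as $c\to0$ because $b_1(c),b_2(c)\to\mathrm{C}$ by hypothesis. Since the innovation law is continuous (this is what underlies the continuity of the stationary density $f_X$ invoked in the proof of Proposition~\ref{prop:UBLLN_TARp}), with $P$-probability one the residuals $\tilde\eps_1^c,\dots,\tilde\eps_n^c$ are pairwise distinct and the minimal gap $\Delta:=\min_{s\ne s'}|\tilde\eps_s^c-\tilde\eps_{s'}^c|$ is strictly positive. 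Choosing $c$ small enough that $b_2(c)-b_1(c)<\Delta$ guarantees that any interval of that width contains at most one residual, so the count above is at most $1$ for every $t$ at once; hence $P^*(b_1(c)\le X_t^*\le b_2(c)\mid\mathcal{F}_{t-1}^*)\le 1/n$ almost surely. Taking $E^*[\cdot]$ via the tower property removes the conditioning and yields $P^*(b_1(c)\le X_t^*\le b_2(c))\le 1/n\le 1/n+\gamma$, valid on the $P$-almost-sure event $\{\Delta>0\}$.

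I expect the main obstacle to be the interplay between uniformity in $t$ and the ``with probability one'' qualifier: one must argue that a single threshold for $c$ works simultaneously for all $t=1,\dots,n$. This is exactly what the gap argument delivers, because the interval width is invariant under the translation by $\mu_{t-1}^*$, so the bound is insensitive to how the conditional mean wanders along the bootstrap trajectory. A softer route that avoids the explicit gap is to note that the nested neighbourhoods $[\mathrm{C}-\epsilon,\mathrm{C}+\epsilon]$ shrink to $\{\mathrm{C}\}$, whence $\limsup_{c\to0}P^*(b_1(c)\le X_t^*\le b_2(c))\le P^*(X_t^*=\mathrm{C})\le 1/n$; the $+\gamma$ slack in the statement then simply absorbs the $\limsup$ approximation for $c$ sufficiently small. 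Either way the lemma reduces to the atomic structure of the resampling distribution, the only genuine model input being that the recentred residuals admit no ties almost surely.
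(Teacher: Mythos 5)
Your proposal is correct at the same level of rigour as the paper's own argument, but it takes a genuinely different route. The paper proceeds in two steps: first, exploiting the fact that the conditional law of $X_t^*$ given the data is finitely atomic, it chooses $c$ small so that $P^*(b_1(c)\leq X^*_{t}\leq b_2(c))\leq P^*(X_t^*=\mathrm{C})+\gamma$ (this is where the $\gamma$ slack is consumed); second, it bounds the atom mass by total probability over the conditional support of $X_{t-1}^*$, writing $P^*(X^*_t=\mathrm{C})=\sum_{a}P^*(\eps^*_t=\mathrm{C}-\tilde\phi_0-\tilde\phi_1a)P^*(X^*_{t-1}=a)\leq 1/n$, using the assertion that $P^*(\eps_t^*=\kappa)\leq 1/n$ for every $\kappa$ with probability one. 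You instead condition on $\mathcal{F}_{t-1}^*$, rewrite the conditional probability as $n^{-1}$ times the number of recentred residuals in the shifted window $[b_1(c)-\mu_{t-1}^*,\,b_2(c)-\mu_{t-1}^*]$, and cap that count at one via the minimal-gap variable $\Delta$; the tower property then yields the bound $1/n$ outright. What your route buys: no $\gamma$ slack is needed, the bound is manifestly uniform in $t$ because the window width is translation invariant, the argument covers general $p$ directly (the paper's display is written for $p=1$), and, importantly, you surface the hidden hypothesis both proofs rely on, namely that the recentred residuals are a.s. pairwise distinct (equivalently, continuity of the innovation law) — the paper buries exactly this in its assertion that $P^*(\eps_t^*=\kappa)\leq 1/n$ simultaneously for the data-dependent $\kappa$ it plugs in, and Assumption~\ref{ass} does not formally contain it. Your ``softer route'' at the end is precisely the paper's proof. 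One caveat, which you share with the paper rather than introduce: the admissible threshold for $c$ is random — yours depends on $\Delta$, the paper's on the spacing of the atoms of the conditional law via its ``definition of limit'' step — so in both cases the lemma should be read as ``a.s., the bound holds for all sufficiently small, realization-dependent $c$'' rather than as a single deterministic $c$ working on an event of probability one; neither argument is more rigorous than the other on this point.
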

\begin{proof}
  From the definition of limit, for each $\gamma>0$ we can choose $c$ sufficiently small such that
  $$P^*(b_1(c)\leq X^*_{t}\leq b_2(c))\leq P^*(X_t^*=\mathrm{C})+\gamma;$$
  hence it remains to show that  $P^*(X_t^*=\mathrm{C})\leq 1/n$ in probability. Define $\mathcal{A}^*_t$ to be the set of values that $X_t^*$ can assume conditionally to the data. By using the fact that: $(i)$ for any real number $\kappa\in \mathds{R}$,  $P^*(\eps_t^*=\kappa)\leq 1/n$ with probability one  and  $(ii)$ $\sum_{a\in\mathcal{A}^*_{s}}P^*(X^*_{s} =a)=1$, for any integer $s$, it holds that
  \begin{align*}
  &P^*(X^*_t =\mathrm{C})\\
  &=\sum_{a\in\mathcal{A}^*_{t-1}}P^*(X^*_t =\mathrm{C}|X^*_{t-1}=a)P^*(X^*_{t-1} =a)\\
  &=\sum_{a\in\mathcal{A}^*_{t-1}}P^*(\eps^*_t =\mathrm{C}-\tilde{\phi}_0-\tilde{\phi}_1 a)P^*(X^*_{t-1} =a)\\
  &\leq \frac{1}{n}\sum_{a\in\mathcal{A}^*_{t-1}}P^*(X^*_{t-1} =a)=\frac{1}{n}
  \end{align*}
  and the proof is completed.
\end{proof}
\begin{lemma}\label{lemma:LLN_eps}
  \textbf{(LLN)} Let $\{\eps_t^{*}\}$ be defined in Section~\ref{sec:boot}. Under Assumption~\ref{ass} , it holds that:
  $$E^*[\eps_t^{*2}]\xrightarrow[n\to\infty]{p}\sigma^2 \quad \text{and} \quad E^*[\eps_t^{*4}]\xrightarrow[n\to\infty]{p}\kappa.$$
\end{lemma}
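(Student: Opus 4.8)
The plan is to reduce the statement to two purely sample-based (non-bootstrap) limits. Since $\{\eps_t^*\}$ is drawn with replacement from the re-centred residuals $\{\tilde\eps_s^c\}_{s=1}^n$, the conditional moments are exactly the empirical moments of those residuals:
$$E^*[\eps_t^{*2}]=\frac{1}{n}\sum_{s=1}^n(\tilde\eps_s^c)^2,\qquad E^*[\eps_t^{*4}]=\frac{1}{n}\sum_{s=1}^n(\tilde\eps_s^c)^4.$$
These are deterministic functions of the data, so no bootstrap-level randomness remains and it suffices to prove $n^{-1}\sum_s(\tilde\eps_s^c)^2\xrightarrow[n\to\infty]{p}\sigma^2$ and $n^{-1}\sum_s(\tilde\eps_s^c)^4\xrightarrow[n\to\infty]{p}\kappa$.

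Next I would strip off the re-centring. Writing $\bar{\tilde\eps}=n^{-1}\sum_t\tilde\eps_t$, setting $\boldsymbol a:=\boldsymbol\phi_0-\tilde{\boldsymbol\phi}$ and $\boldsymbol Z_{t-1}=(1,X_{t-1},\dots,X_{t-p})^\intercal$, Eq.~(\ref{eps_tilde}) gives $\tilde\eps_t=\eps_t+\delta_t$ with $\delta_t=\boldsymbol a^\intercal\boldsymbol Z_{t-1}$. By the standard $\sqrt n$-consistency of the restricted AR MLE under the null and Assumption~\ref{ass}, $\boldsymbol a=O_p(n^{-1/2})$, so $\bar{\tilde\eps}=\bar\eps+\boldsymbol a^\intercal(n^{-1}\sum_t\boldsymbol Z_{t-1})=O_p(n^{-1/2})$. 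Expanding $(\tilde\eps_t-\bar{\tilde\eps})^k$ for $k=2,4$ then shows that $n^{-1}\sum_t(\tilde\eps_t^c)^k$ and $n^{-1}\sum_t\tilde\eps_t^k$ differ only by terms involving positive powers of $\bar{\tilde\eps}=O_p(n^{-1/2})$, hence by $o_p(1)$; thus it is enough to analyse the raw residual moments $n^{-1}\sum_t\tilde\eps_t^k$.

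The core step uses the same decomposition $\tilde\eps_t=\eps_t+\delta_t$. Expanding $\tilde\eps_t^2=\eps_t^2+2\eps_t\delta_t+\delta_t^2$ and $\tilde\eps_t^4=\sum_{j=0}^4\binom{4}{j}\eps_t^{4-j}\delta_t^j$, the leading terms satisfy $n^{-1}\sum_t\eps_t^2\xrightarrow[n\to\infty]{p}\sigma^2$ and $n^{-1}\sum_t\eps_t^4\xrightarrow[n\to\infty]{p}\kappa$ by the ergodic theorem together with $E[\eps_t^2]=\sigma^2$ and $E[\eps_t^4]=\kappa$. Every remaining term carries at least one factor of $\boldsymbol a=O_p(n^{-1/2})$ multiplying an average of the form $n^{-1}\sum_t\eps_t^m\boldsymbol Z_{t-1}^{\otimes j}$; by the independence of $\eps_t$ from $\boldsymbol Z_{t-1}$ and ergodicity each such average converges to a finite limit and is therefore $O_p(1)$, so each cross term is $o_p(1)$.

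The main obstacle is guaranteeing that all the averages multiplying the powers of $\boldsymbol a$ are genuinely $O_p(1)$, which requires finite regressor moments up to order four; the binding case is $n^{-1}\sum_t\delta_t^4\le\|\boldsymbol a\|^4\,n^{-1}\sum_t\|\boldsymbol Z_{t-1}\|^4$, which needs $E[\|\boldsymbol Z_{t-1}\|^4]<\infty$. I would discharge this by recalling that, under the null and Assumption~\ref{ass}, $\{X_t\}$ is a stationary, ergodic AR$(p)$ driven by innovations with $E[\eps_t^4]=\kappa<\infty$, so its stationary law has finite fourth moment, i.e.\ $E[X_t^4]<\infty$. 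The remaining moment bookkeeping (for instance $E[\eps_t^2X_{t-i}X_{t-j}]=\sigma^2E[X_{t-i}X_{t-j}]$ and $E[\eps_t^3\boldsymbol Z_{t-1}]=E[\eps_t^3]E[\boldsymbol Z_{t-1}]$, both finite) is then routine, and collecting the estimates yields the two claimed limits.
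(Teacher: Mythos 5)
Your proof is correct and takes essentially the same route as the paper: both decompose $\tilde\eps_t=\eps_t+(\boldsymbol\phi_0-\tilde{\boldsymbol\phi})^\intercal(1,X_{t-1},\dots,X_{t-p})^\intercal$, invoke the $\sqrt{n}$-consistency of the restricted MLE, identify $E^*[\eps_t^{*k}]$ with the empirical $k$-th moment of the re-centred residuals, and conclude via the ergodic theorem. The paper compresses your binomial expansions, the cross-term bookkeeping, and the moment requirement $E[X_t^4]<\infty$ into the phrase ``by routine algebra,'' so your write-up simply supplies the details it omits.
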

\begin{proof}
Since
$$\tilde{\eps}_t  =(\phi_{0,0}-\tilde{\phi}_0)+\sum_{i=1}^{p}(\phi_{i,0}-\tilde{\phi}_i)X_{t-i}+\eps_t$$
and $(\tilde{\boldsymbol\phi}-\boldsymbol\phi_0)=O_p(n^{-1/2})$, it follows that
\begin{align*}
\bar{\tilde{\eps}}:=\frac{1}{n}\sum_{t=1}^{n}\tilde{\eps}_t =\frac{1}{n}\sum_{t=1}^{n}\left[(\phi_{0,0}-\tilde{\phi}_0) +\sum_{i=1}^{p}(\phi_{i,0}-\tilde{\phi}_i)X_{t-i}+\eps_t\right]
\end{align*}
converges in probability to zero. Similarly, by routine algebra, it is possible to show that $E^*[\eps_t^{*2}]=\frac{1}{n}\sum_{t=1}^{n}\left(\tilde{\eps}_t-\bar{\tilde{\eps}}\right)^2$ and $E^*[\eps_t^{*4}]=\frac{1}{n}\sum_{t=1}^{n}\left(\tilde{\eps}_t-\bar{\tilde{\eps}}\right)^4$ converge in probability to $\sigma^2$ and $\kappa$ respectively.
\end{proof}

\section{Supplementary Monte Carlo results}\label{SMsec:MC}
\subsection{Empirical (uncorrected) power of the tests}\label{SMsec:power}
Table~\ref{tab:p1} reports the (uncorrected) power of the tests at nominal level $\alpha=5\%$ for the TAR(1) process of Eq.~(\ref{eq:TAR1}).
\begin{table}[H]
  \centering
\begin{tabular}{rrrrrrr}
\multicolumn{1}{c}{ } & \multicolumn{2}{c}{$n=50$} & \multicolumn{2}{c}{$n=100$} & \multicolumn{2}{c}{$n=200$} \\
\cmidrule(l{3pt}r{3pt}){2-3} \cmidrule(l{3pt}r{3pt}){4-5} \cmidrule(l{3pt}r{3pt}){6-7}
$\Psi$ & ARa & ARi & ARa & ARi & ARa & ARi\\
\cmidrule(l{3pt}r{3pt}){2-3} \cmidrule(l{3pt}r{3pt}){4-5} \cmidrule(l{3pt}r{3pt}){6-7}
0.0 & 9.1 & 4.9 & 6.8 & 6.2 & 5.2 & 5.2\\
0.3 & 12.0 & 9.0 & 17.4 & 18.3 & 42.1 & 44.3\\
0.6 & 26.3 & 27.3 & 59.8 & 62.4 & 94.1 & 94.4\\
0.9 & 52.5 & 54.8 & 91.3 & 92.2 & 100.0 & 100.0\\
\addlinespace
0.0 & 3.5 & 4.4 & 3.8 & 5.4 & 3.9 & 5.2\\
0.3 & 4.3 & 5.0 & 5.9 & 8.2 & 7.3 & 8.6\\
0.6 & 6.5 & 8.0 & 12.4 & 14.9 & 33.8 & 36.8\\
0.9 & 10.1 & 12.1 & 32.1 & 34.1 & 65.8 & 68.1\\
\cmidrule(l{3pt}r{3pt}){2-3} \cmidrule(l{3pt}r{3pt}){4-5} \cmidrule(l{3pt}r{3pt}){6-7}
\end{tabular}
\caption{Empirical power at nominal level $\alpha=5\%$ for the TAR(1) process of Eq.~(\ref{eq:TAR1}).}\label{tab:p1}
\end{table}

\section{Supplementary results from the real application: the effect of warming on populations of larvae}\label{SMsec:applic}
The data come from the study published in \cite{Lau19} and are publicly available at \cite{Lau19b}.

\begin{figure}[H]
  \centering
\includegraphics[width=0.9\linewidth,keepaspectratio]{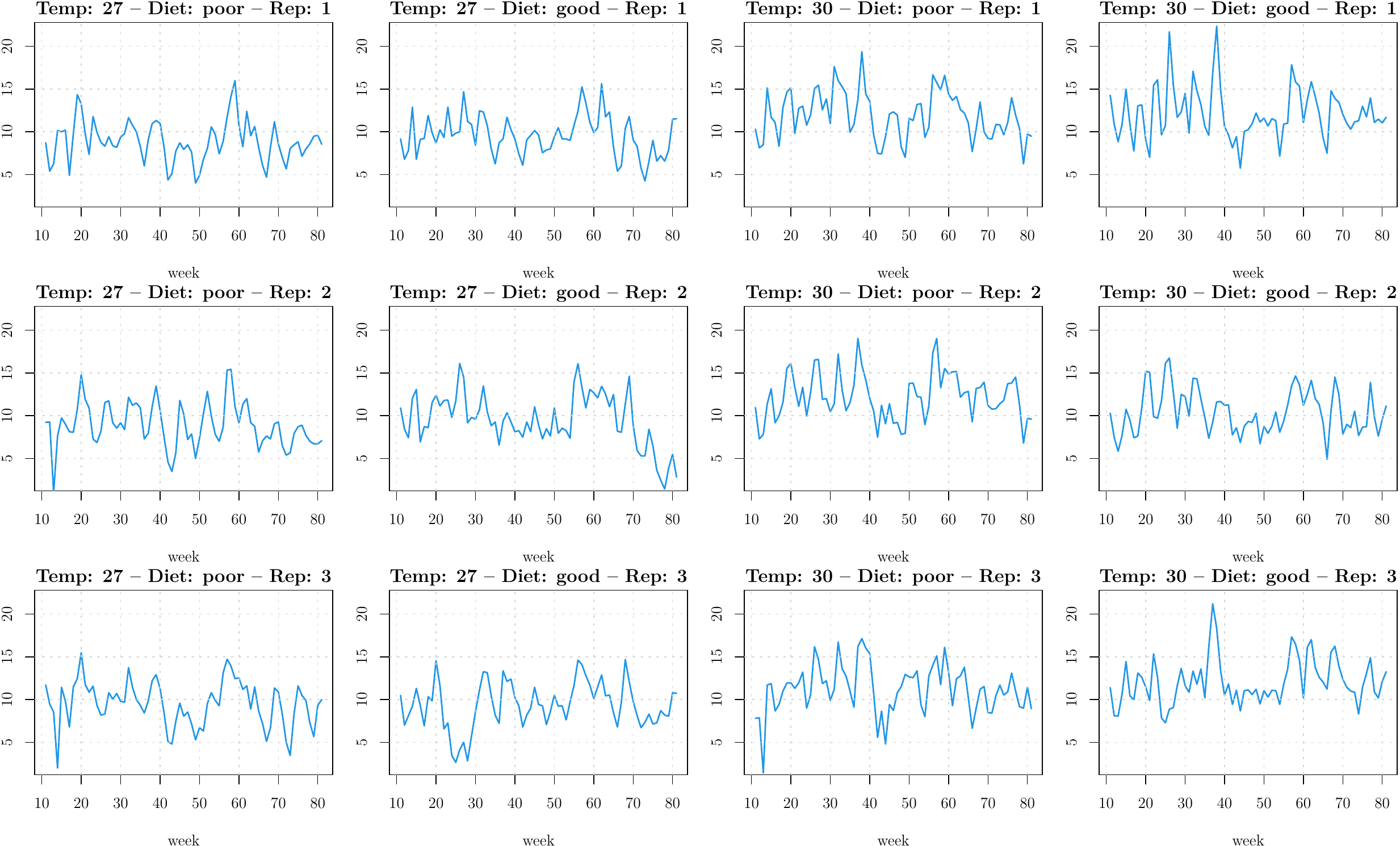}
\caption{Time series of 12 populations of \emph{P. interpunctella} from week 11 to 82 for different experimental conditions. The series have been square-root transformed.}\label{SMfig:1}
\end{figure}

\begin{figure}[H]
  \centering
\includegraphics[width=0.45\linewidth,keepaspectratio]{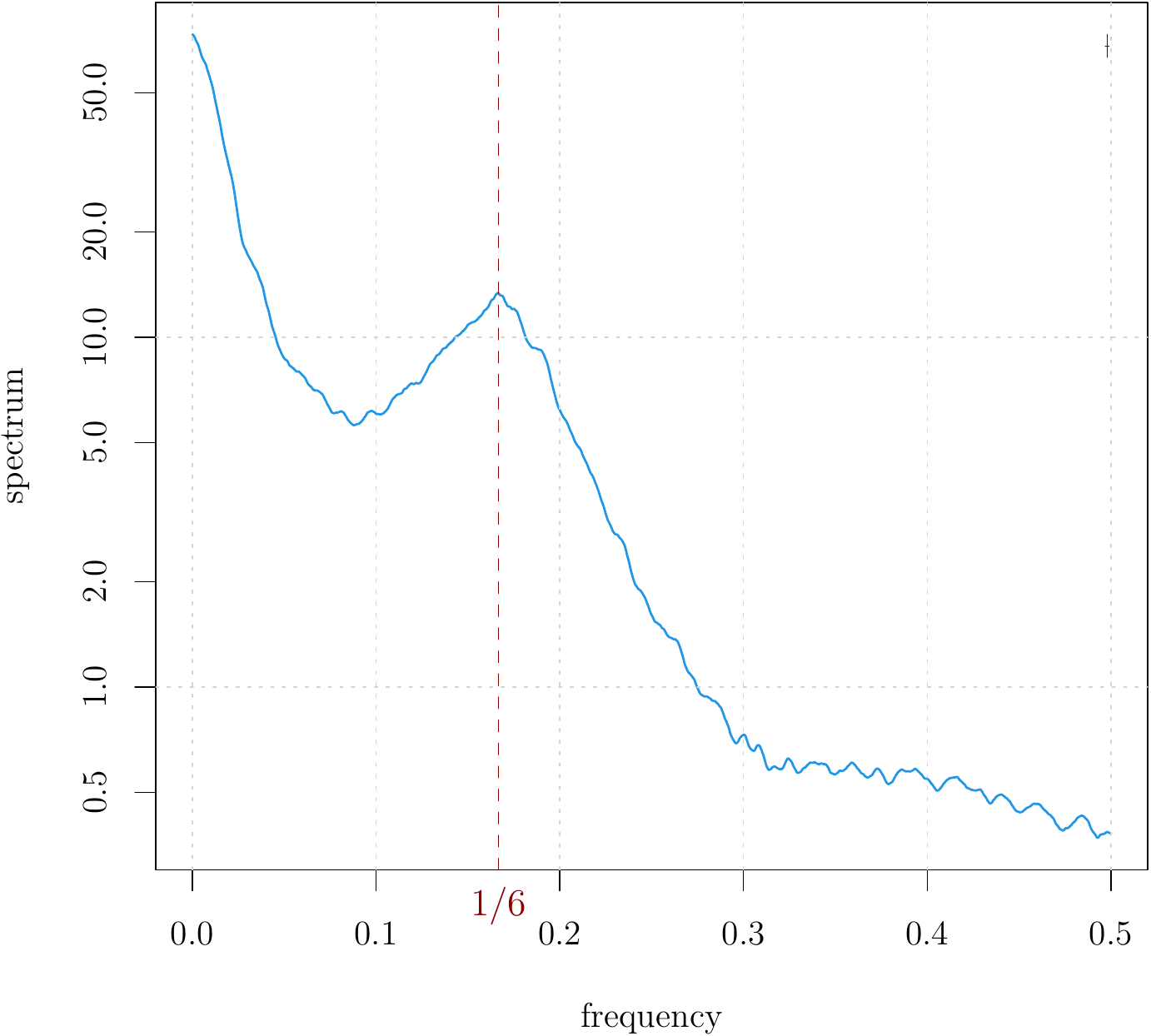}
\includegraphics[width=0.45\linewidth,keepaspectratio]{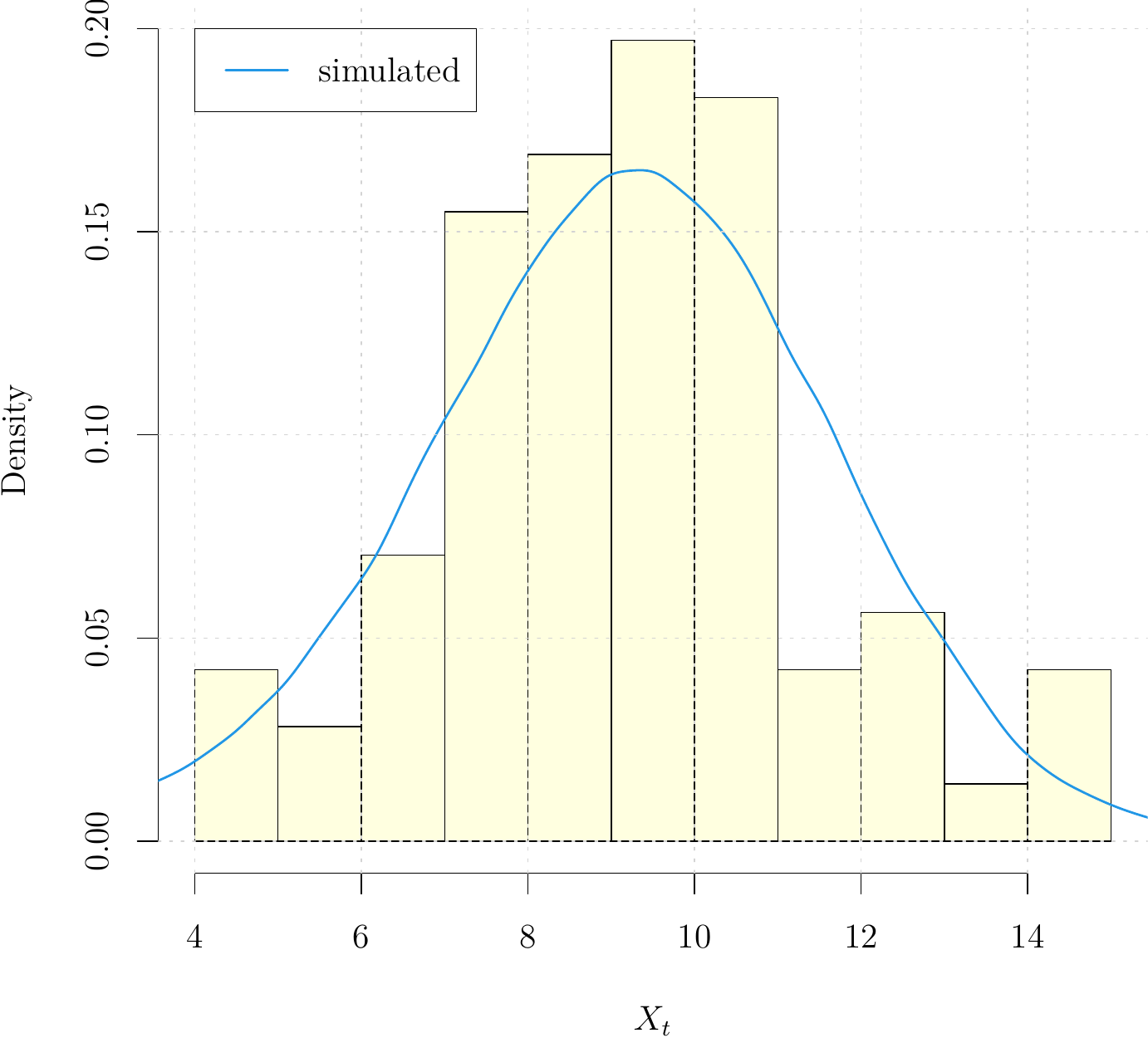}

\caption{(Left) Power spectral density of the simulated time series of 100k observations from the model fit of the first time series (temp: 27°, diet: poor). The frequency corresponding to the characteristic 6-week cycle is evidenced with a vertical dashed line.(Right) Histogram of the data (yellow) with the superimposed density of the fitted model, estimated upon the simulated series (blue line).}\label{SMfig:5}
\end{figure}

\begin{figure}[H]
  \centering
\includegraphics[width=0.45\linewidth,keepaspectratio]{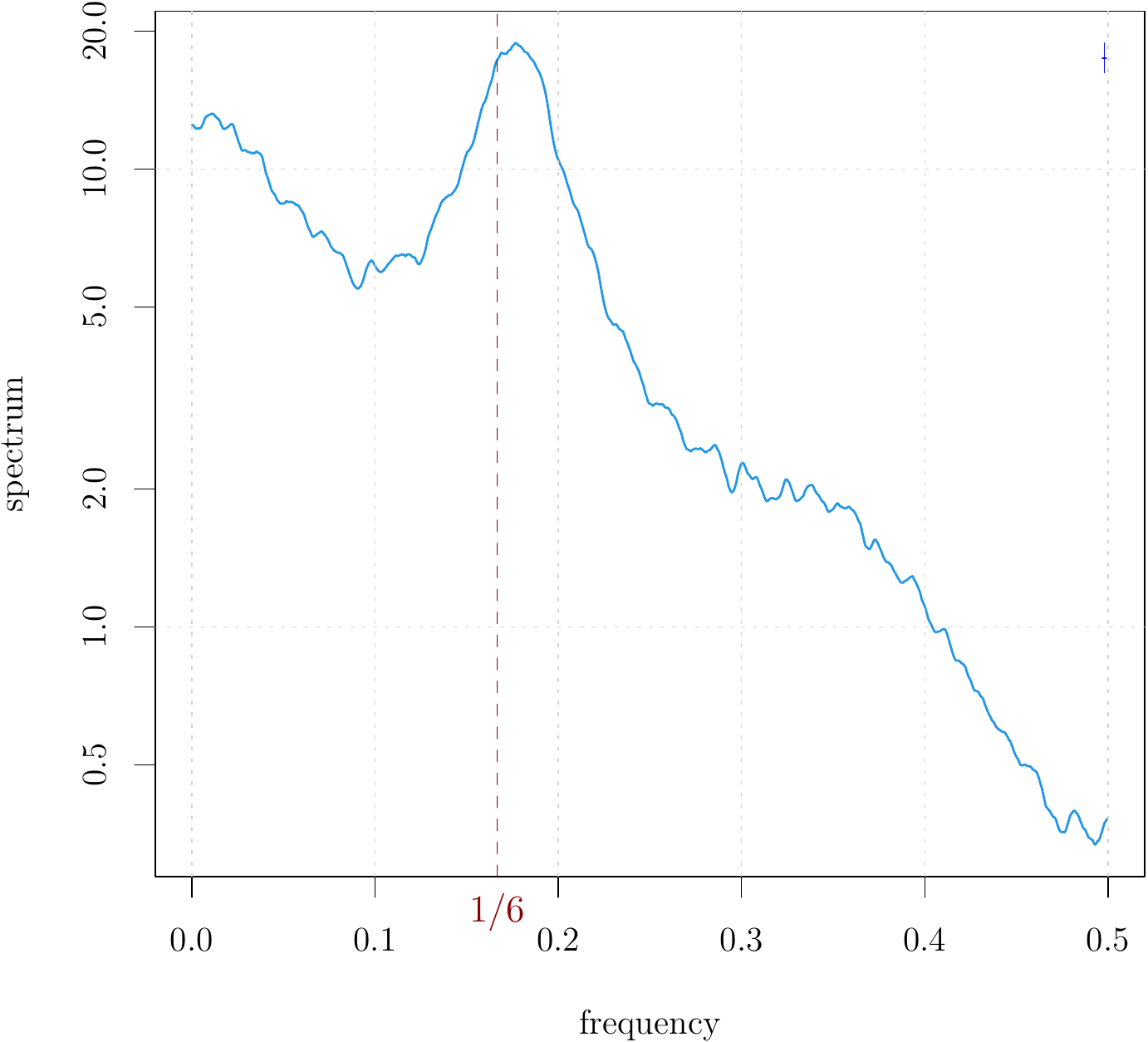}
\includegraphics[width=0.45\linewidth,keepaspectratio]{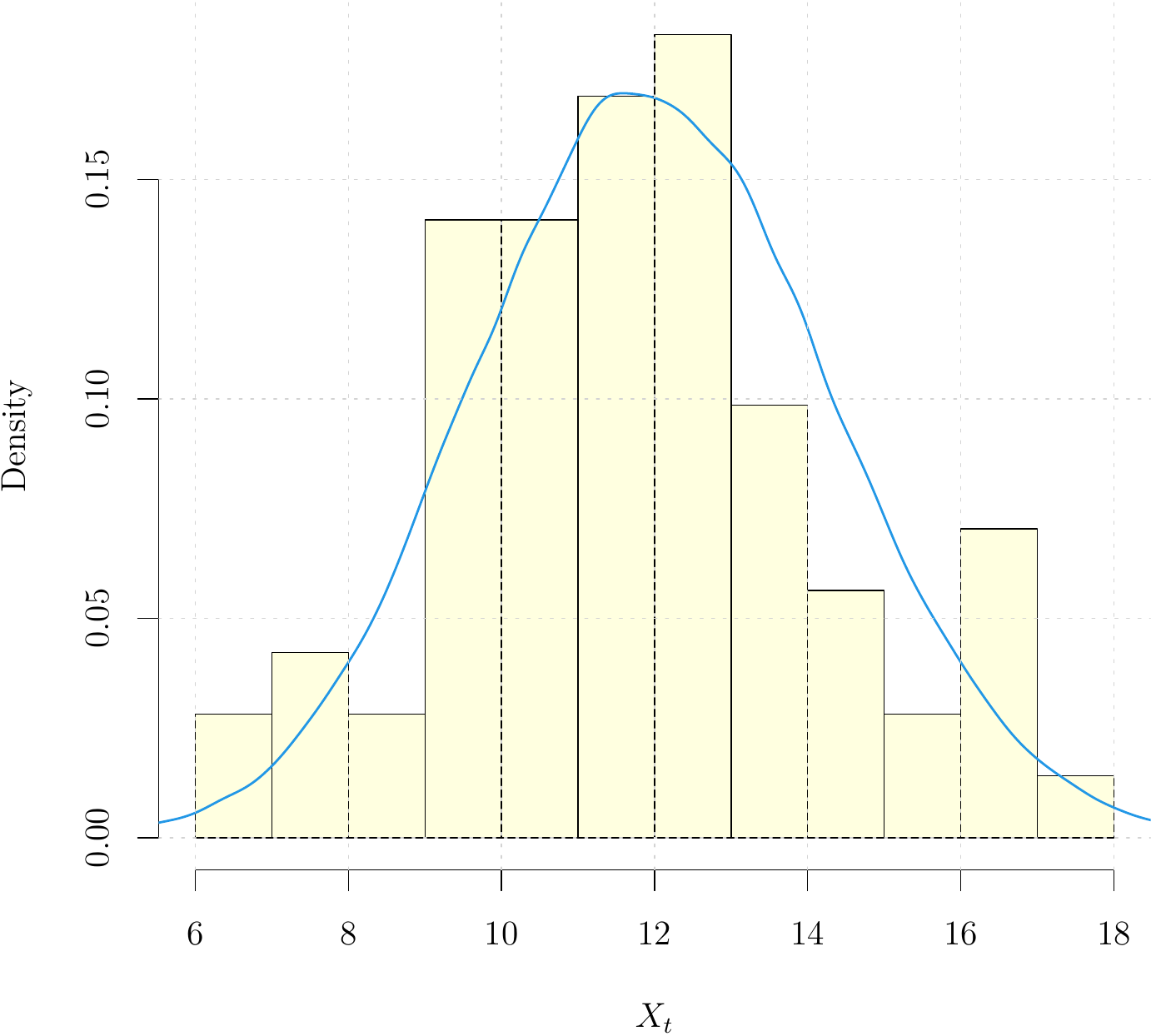}
\caption{(Left) Power spectral density of the simulated time series of 100k observations from the model fit of the third time series (temp: 30°, diet: poor). The frequency corresponding to the characteristic 6-week cycle is evidenced with a vertical dashed line.(Right) Histogram of the data (yellow) with the superimposed density of the fitted model, estimated upon the simulated series (blue line).}\label{SMfig:6}
\end{figure}

\begin{figure}[H]
  \centering
\includegraphics[width=0.45\linewidth,keepaspectratio]{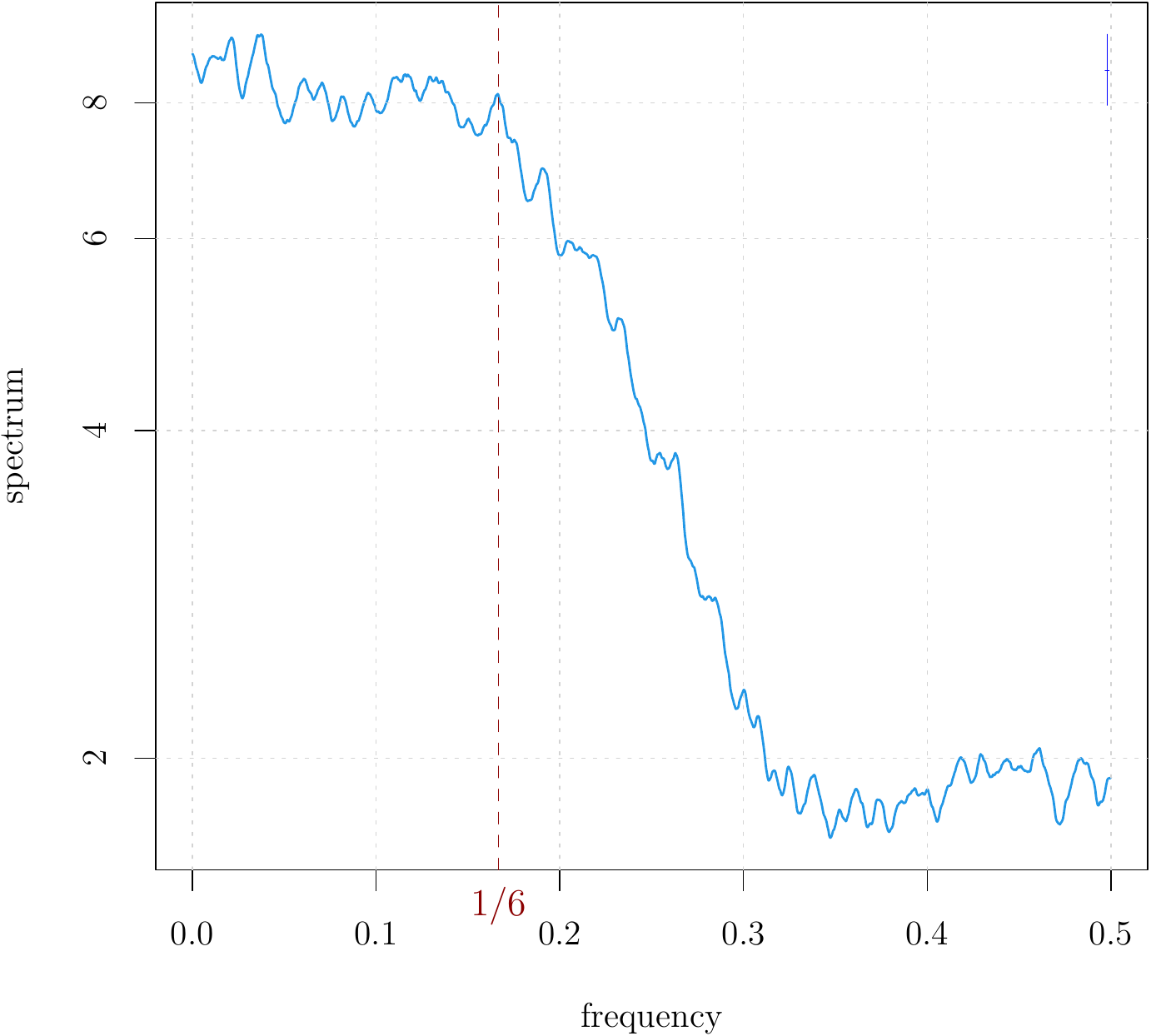}
\includegraphics[width=0.45\linewidth,keepaspectratio]{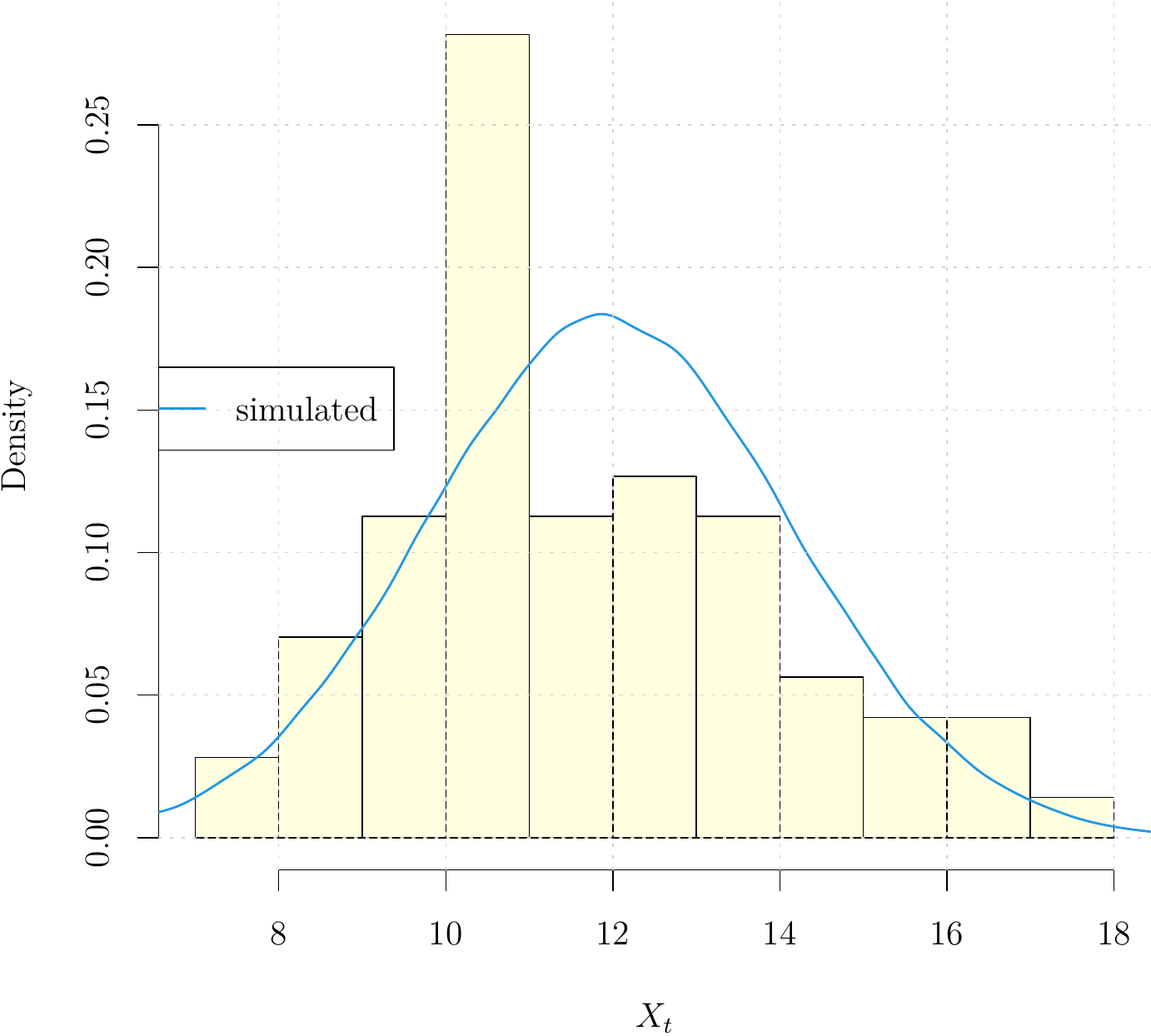}
\caption{(Left) Power spectral density of the simulated time series of 100k observations from the model fit of the third time series (temp: 30°, diet: poor). The frequency corresponding to the characteristic 6-week cycle is evidenced with a vertical dashed line.(Right) Histogram of the data (yellow) with the superimposed density of the fitted model, estimated upon the simulated series (blue line).}\label{SMfig:7}
\end{figure}

\begin{figure}[H]
  \centering
\includegraphics[width=0.45\linewidth,keepaspectratio]{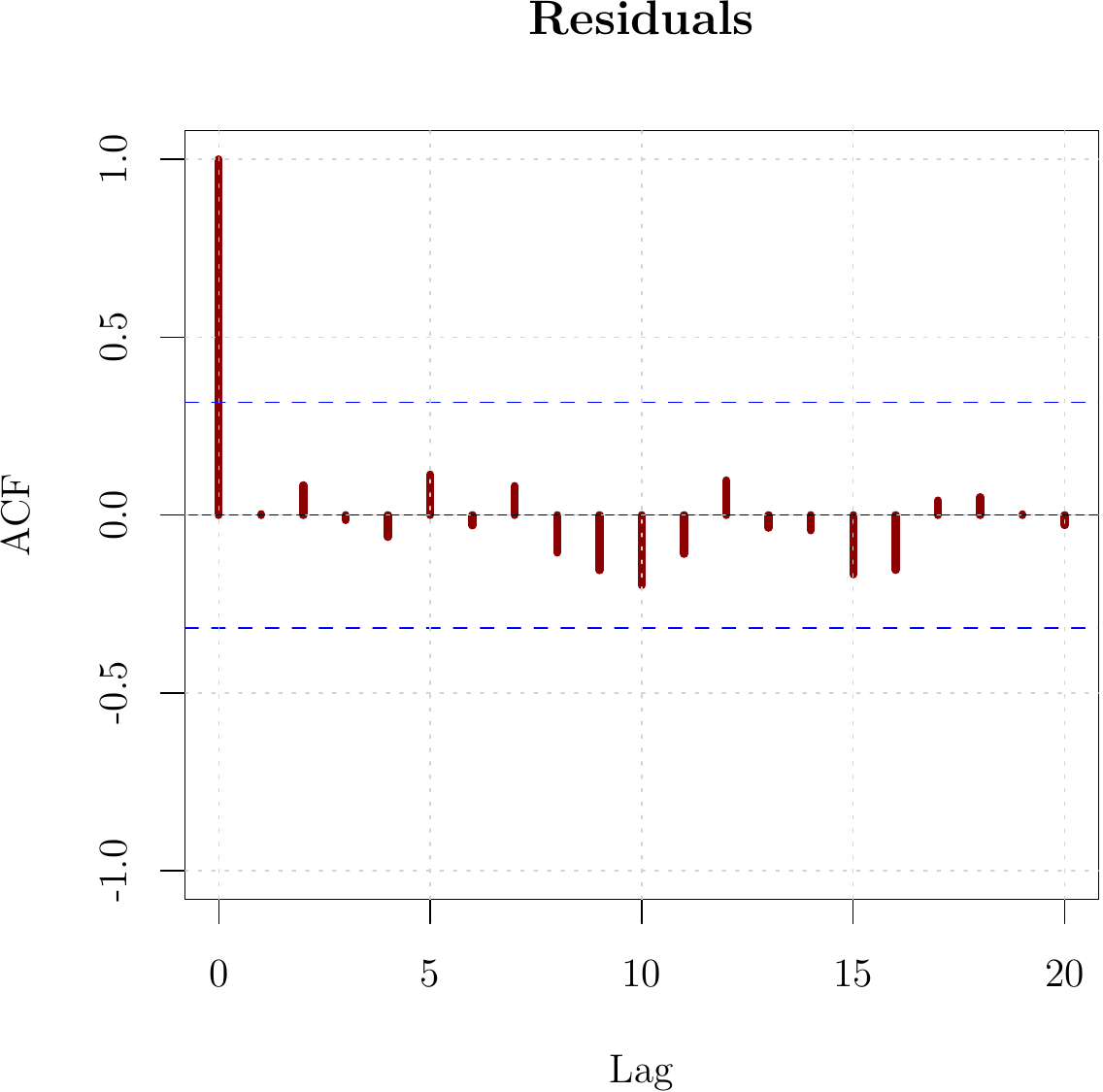}
\includegraphics[width=0.45\linewidth,keepaspectratio]{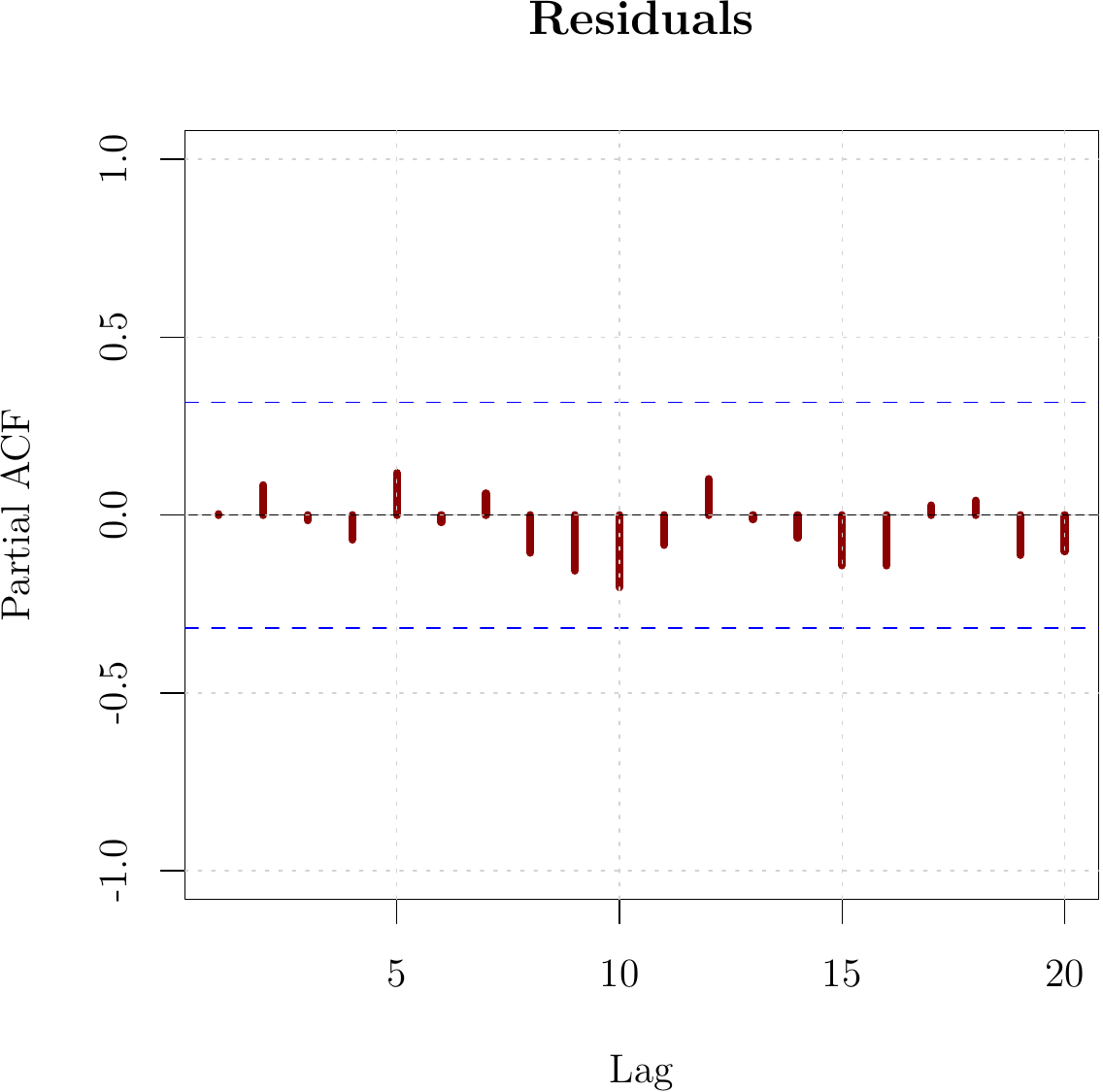}
\includegraphics[width=0.45\linewidth,keepaspectratio]{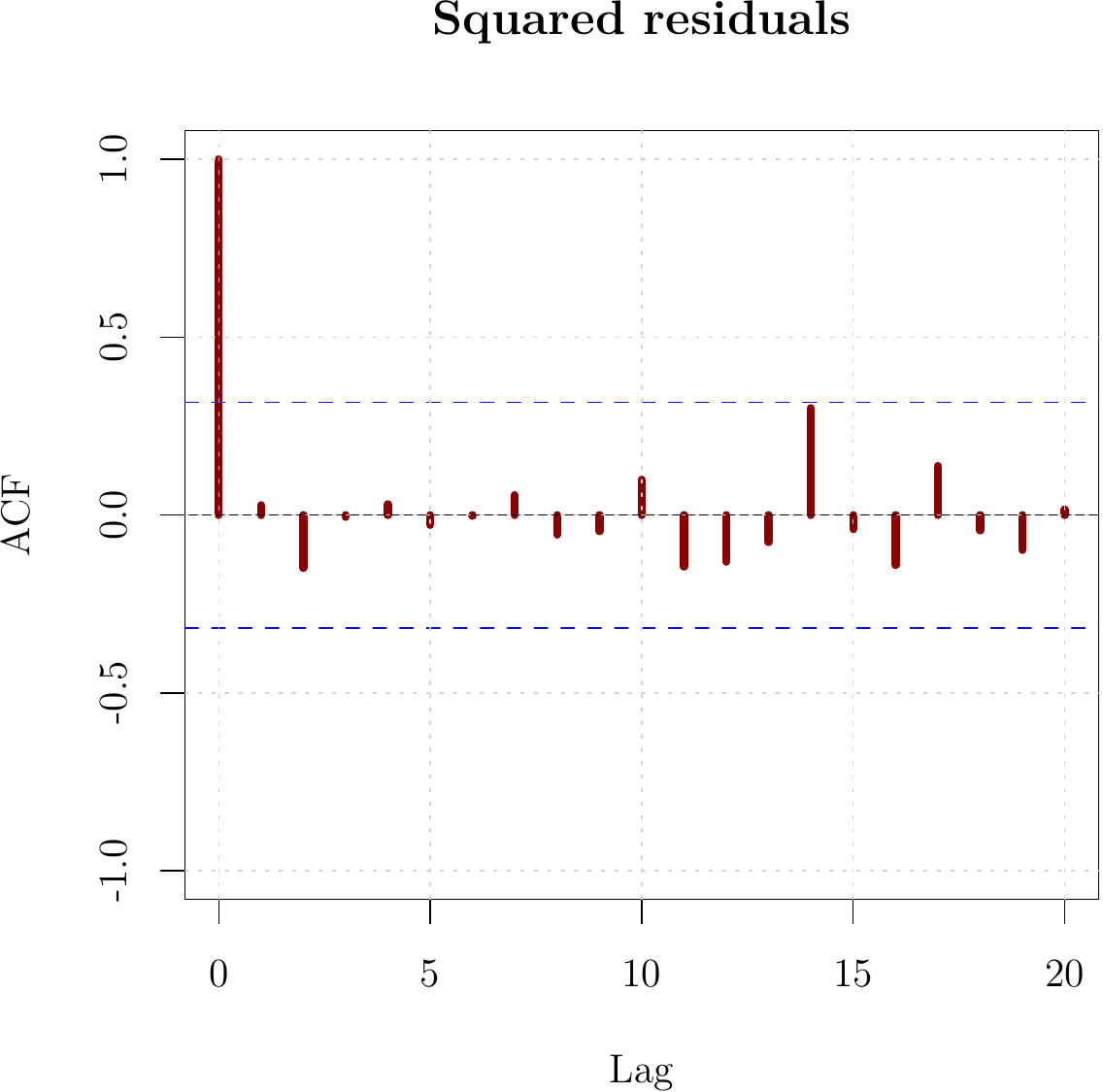}
\includegraphics[width=0.45\linewidth,keepaspectratio]{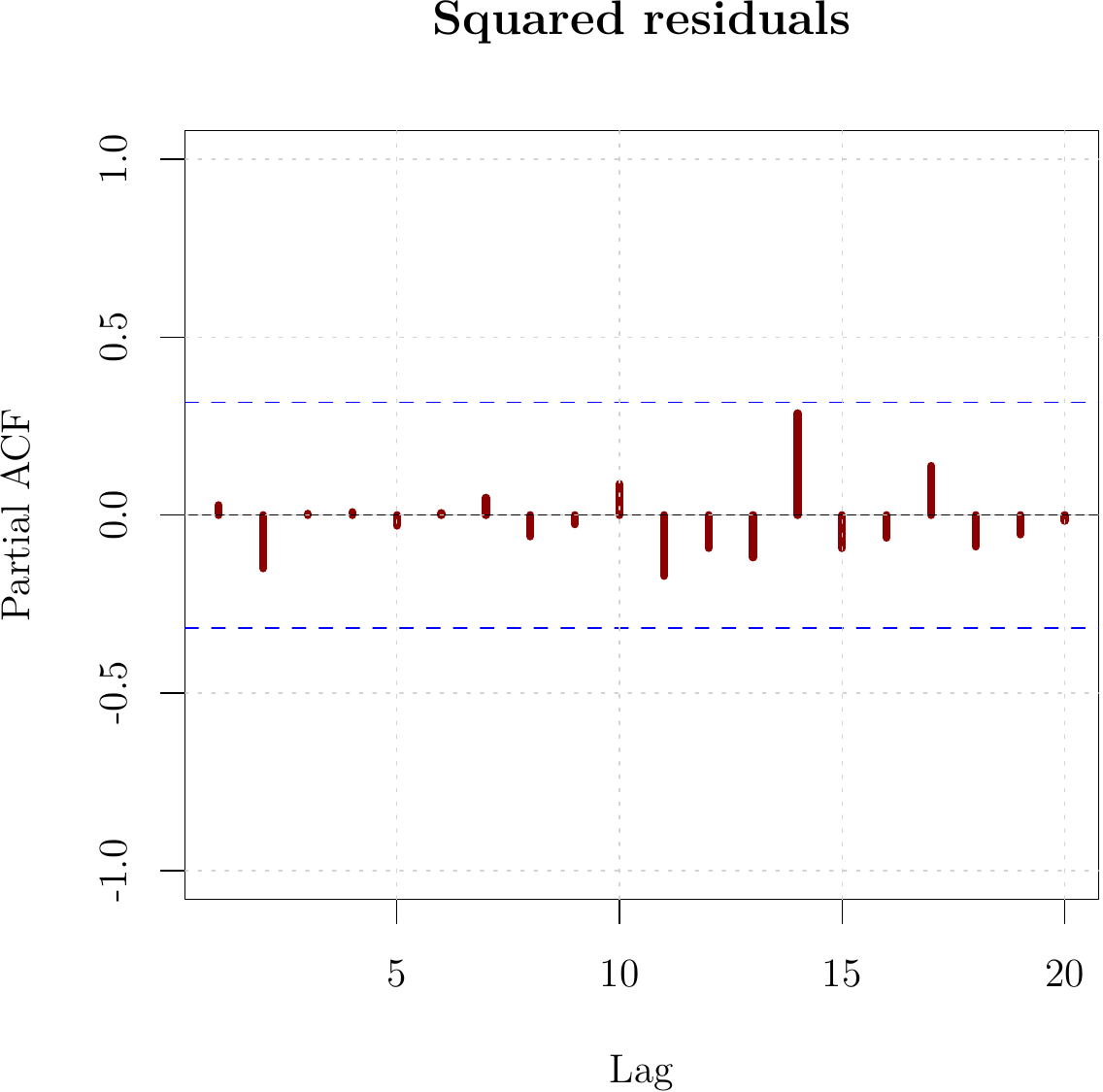}
\caption{Temperature: 27° - Diet: poor. Correlograms of the residuals of the TARMA model of Eq.~(\ref{eq:tarma}). Autocorrelation function (left) and partial autocorrelation function (right). The blue dashed lines indicate the rejection bands at 99\% level.}\label{SMfig:diag1}
\end{figure}

\begin{figure}[H]
  \centering
\includegraphics[width=0.45\linewidth,keepaspectratio]{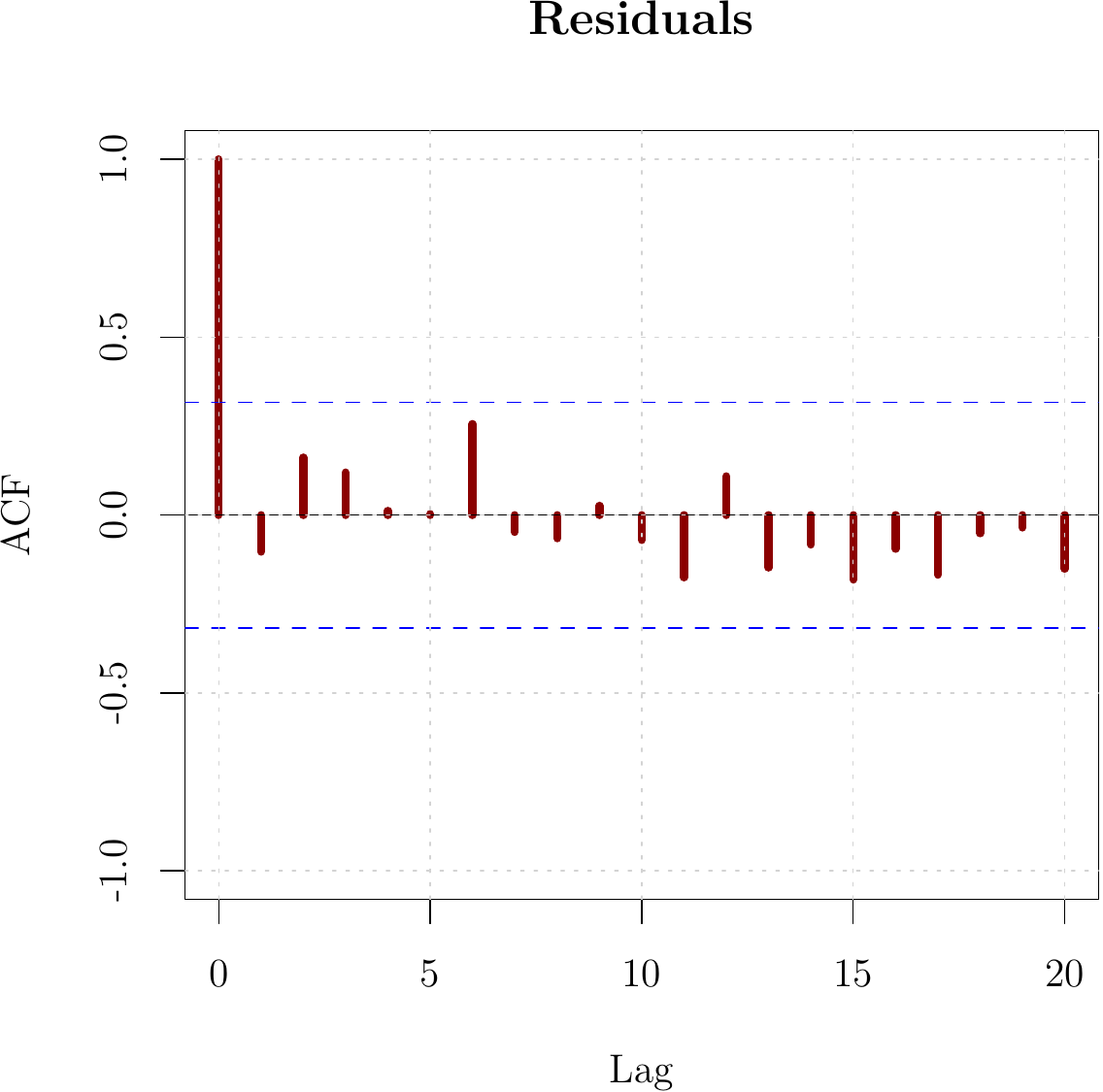}
\includegraphics[width=0.45\linewidth,keepaspectratio]{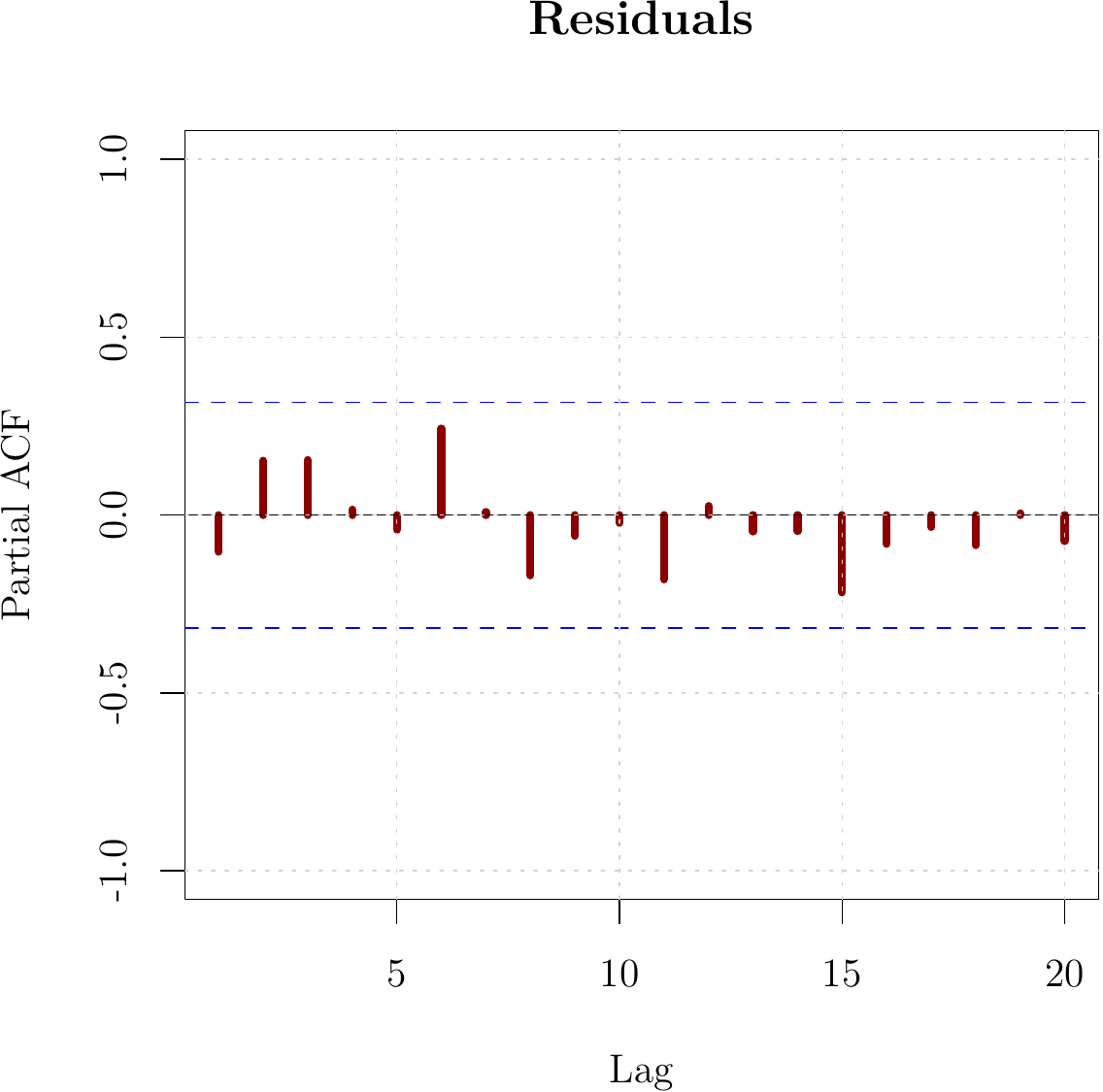}
\includegraphics[width=0.45\linewidth,keepaspectratio]{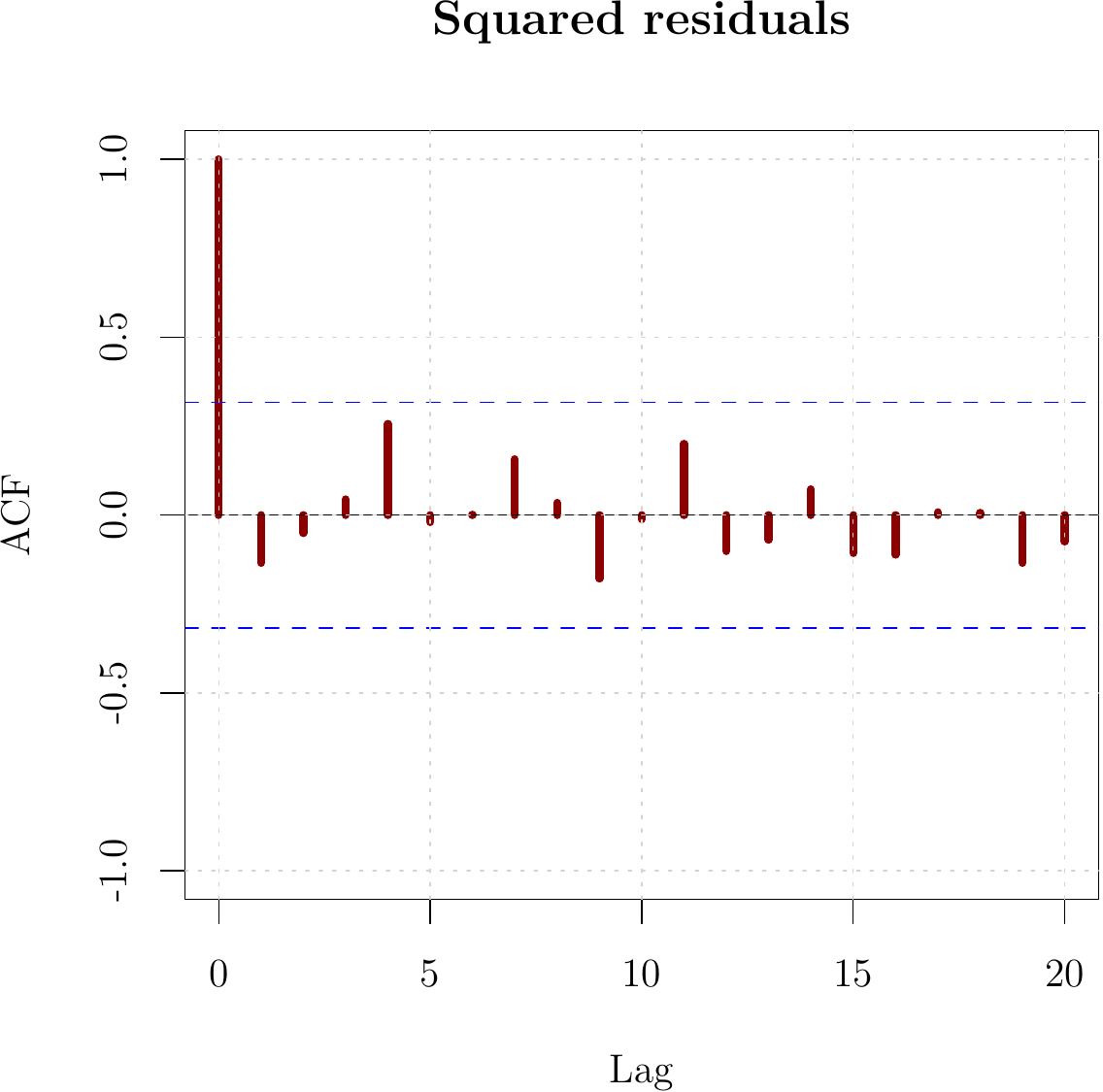}
\includegraphics[width=0.45\linewidth,keepaspectratio]{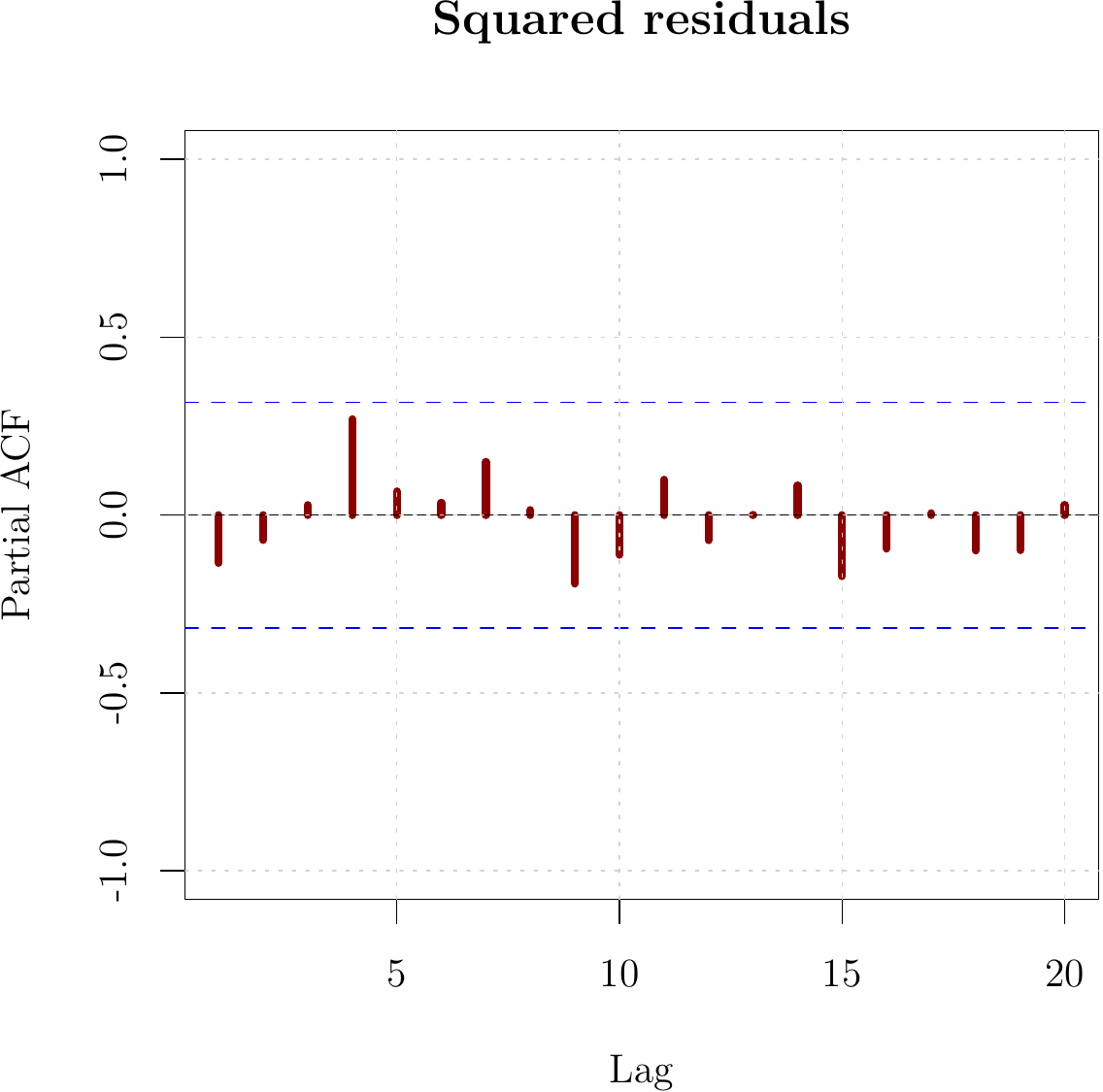}
\caption{Temperature: 27° - Diet: good. Correlograms of the residuals of the TARMA model of Eq.~(\ref{eq:tarma}). Autocorrelation function (left) and partial autocorrelation function (right). The blue dashed lines indicate the rejection bands at 99\% level.}\label{SMfig:diag2}
\end{figure}

\begin{figure}[H]
  \centering
\includegraphics[width=0.45\linewidth,keepaspectratio]{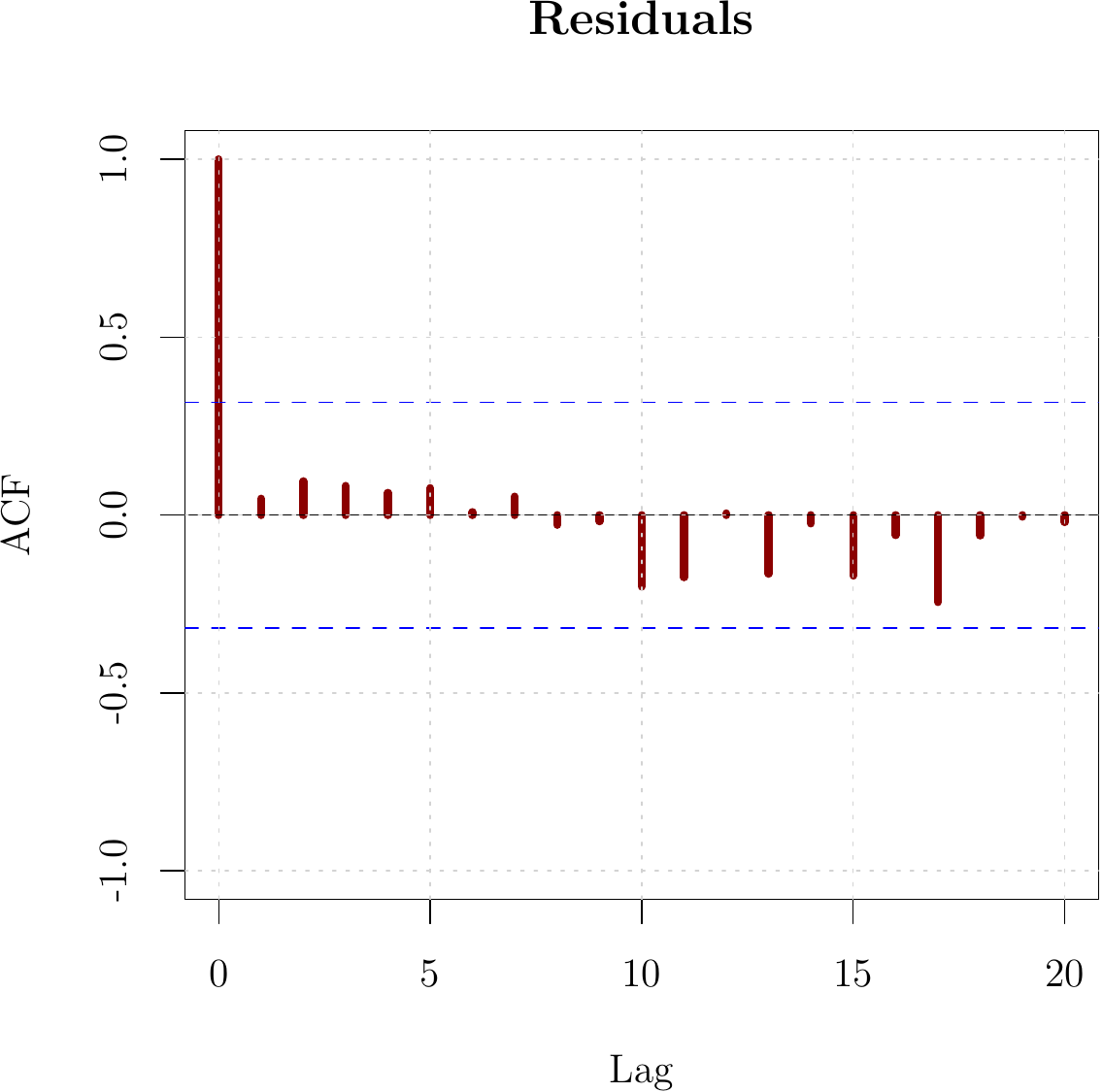}
\includegraphics[width=0.45\linewidth,keepaspectratio]{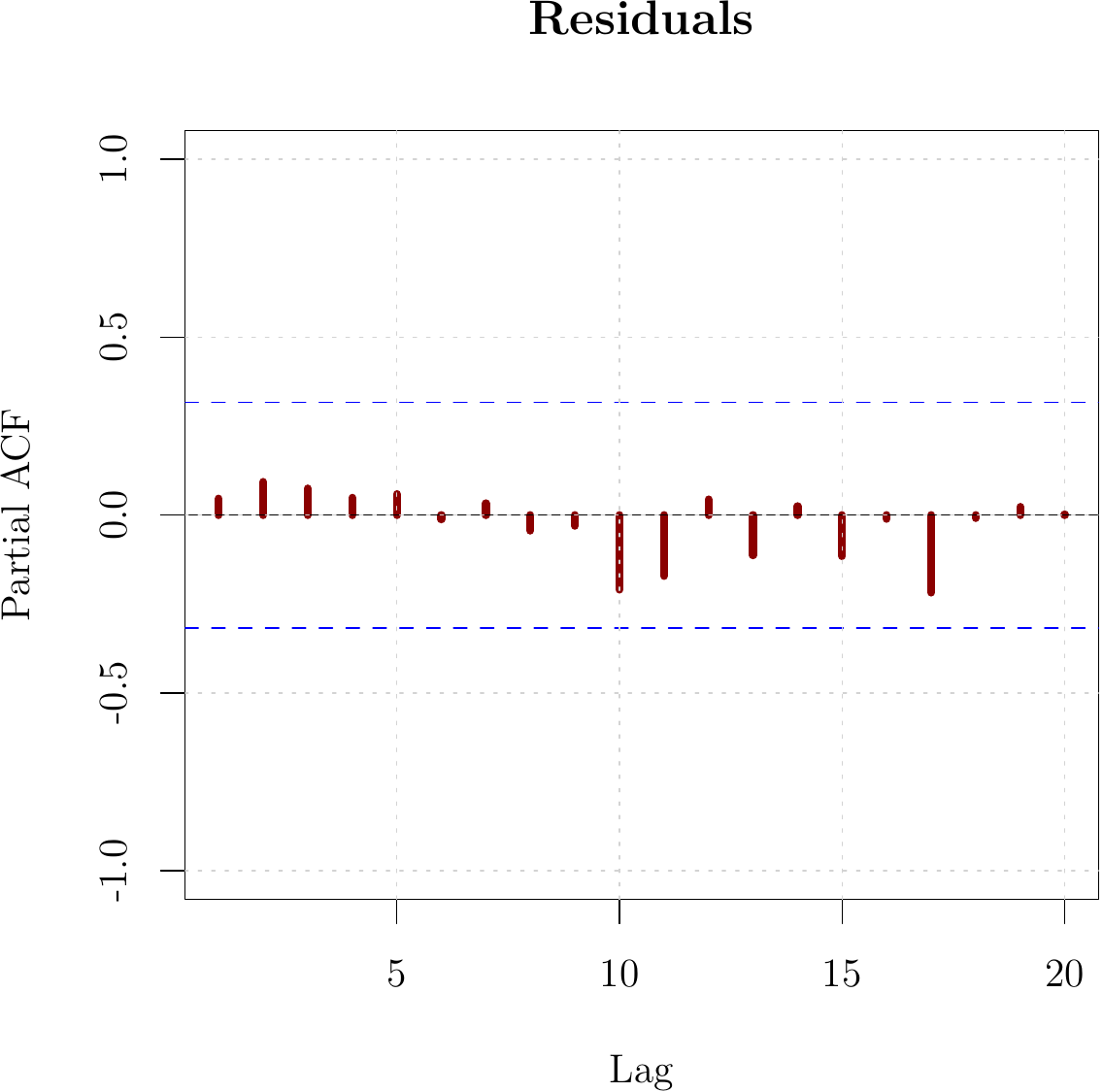}
\includegraphics[width=0.45\linewidth,keepaspectratio]{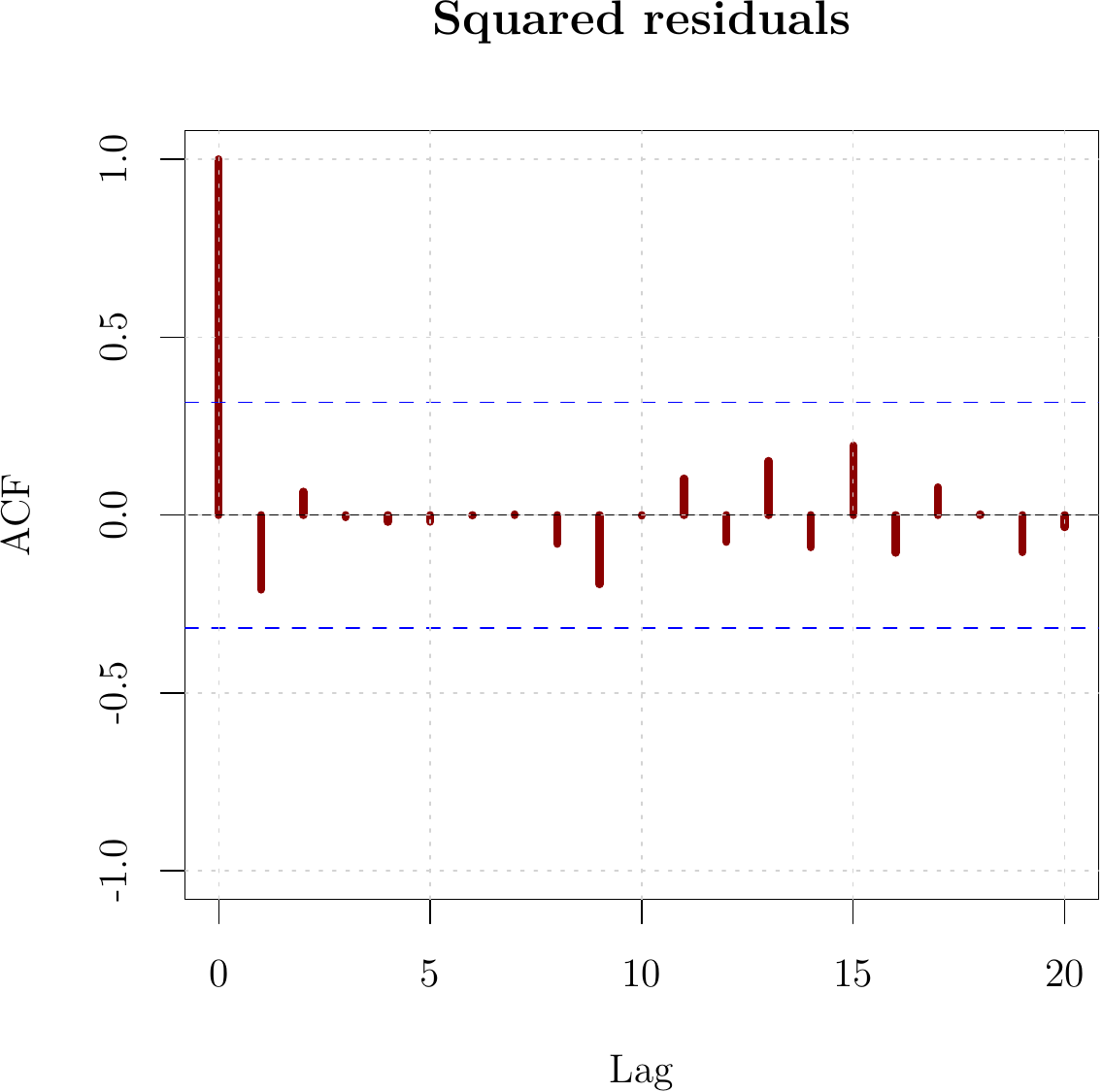}
\includegraphics[width=0.45\linewidth,keepaspectratio]{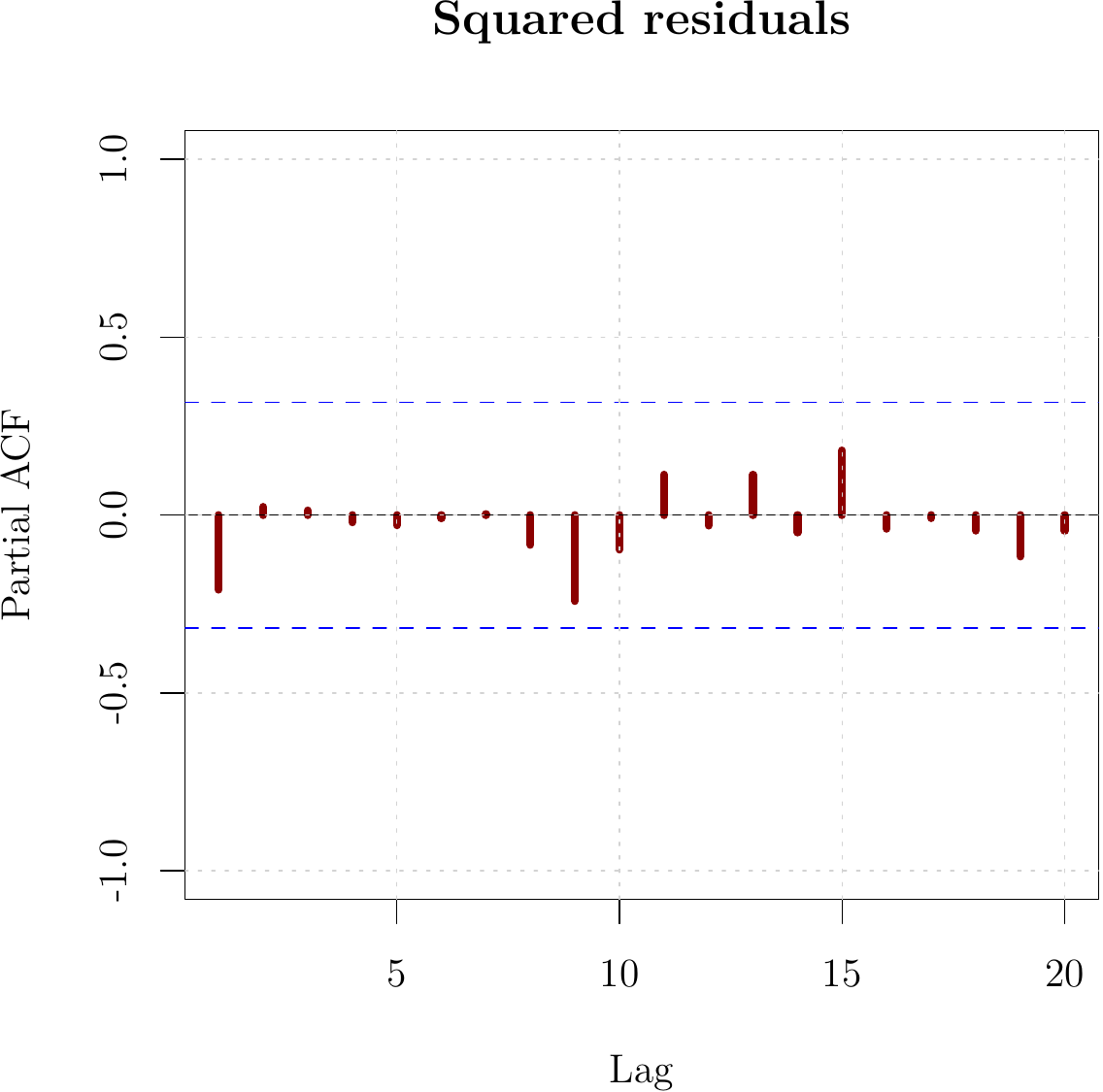}
\caption{Temperature: 30° - Diet: poor. Correlograms of the residuals of the TARMA model of Eq.~(\ref{eq:tarma}). Autocorrelation function (left) and partial autocorrelation function (right). The blue dashed lines indicate the rejection bands at 99\% level.}\label{SMfig:diag3}
\end{figure}

\begin{figure}[H]
  \centering
\includegraphics[width=0.45\linewidth,keepaspectratio]{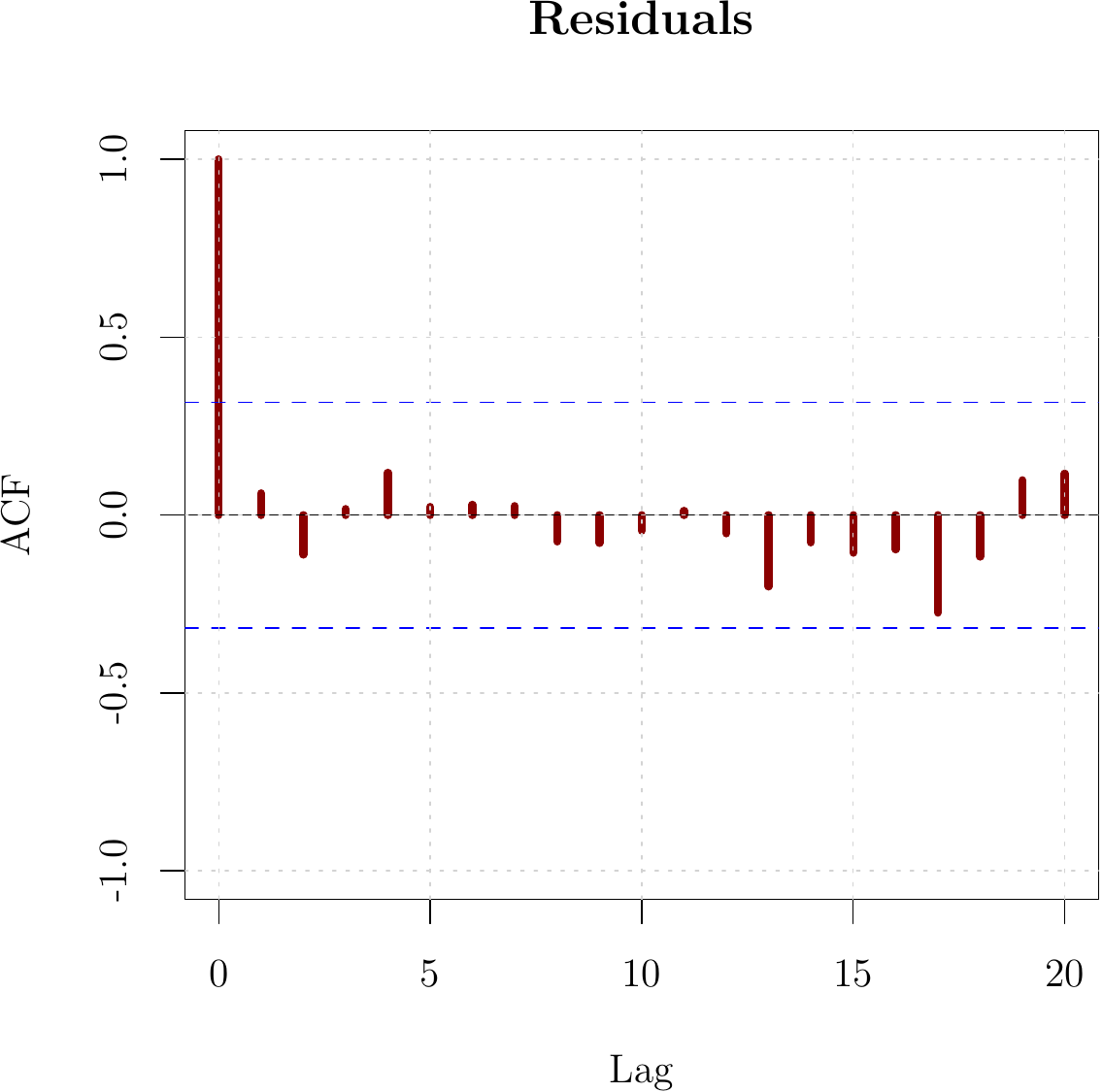}
\includegraphics[width=0.45\linewidth,keepaspectratio]{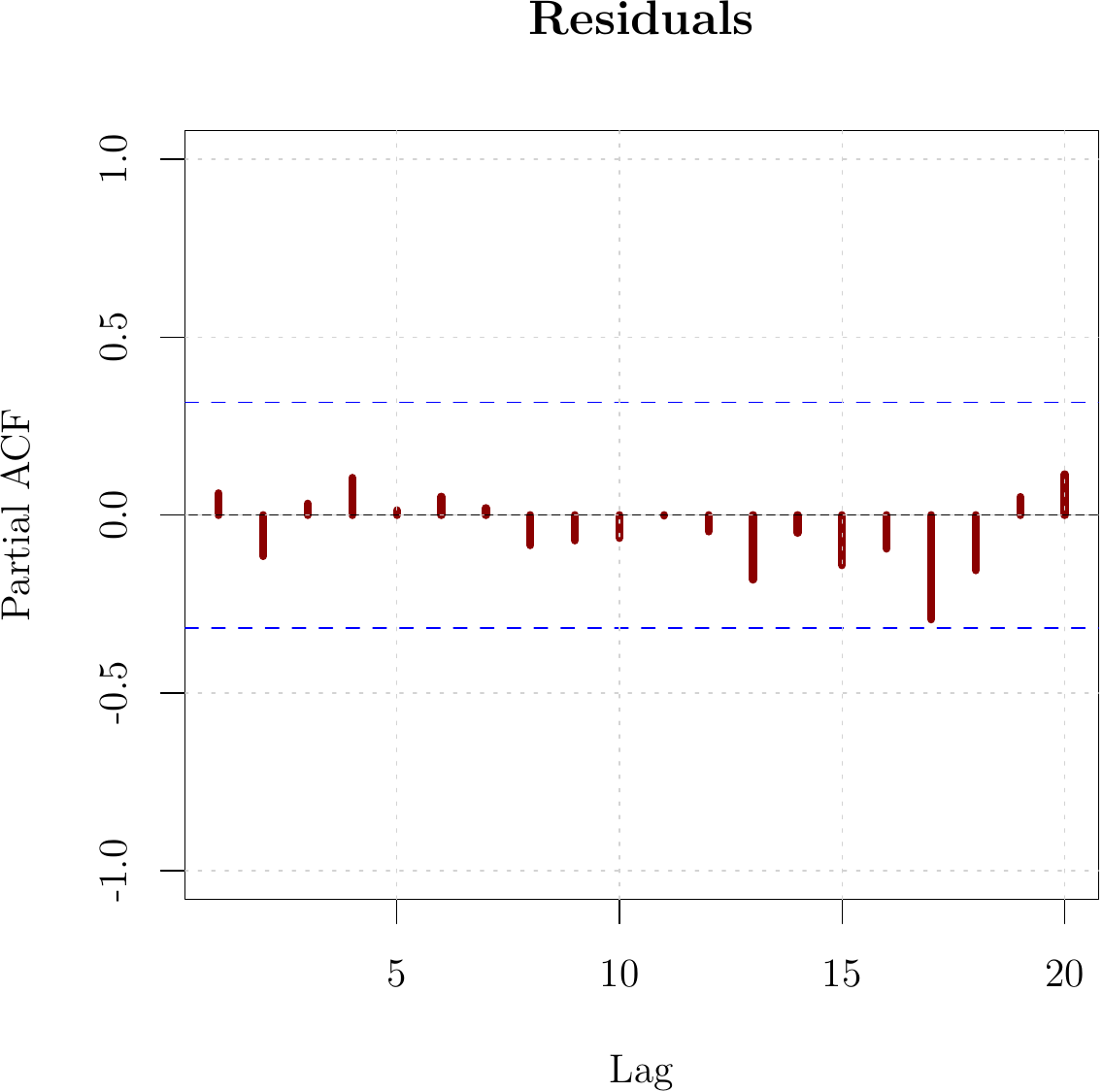}
\includegraphics[width=0.45\linewidth,keepaspectratio]{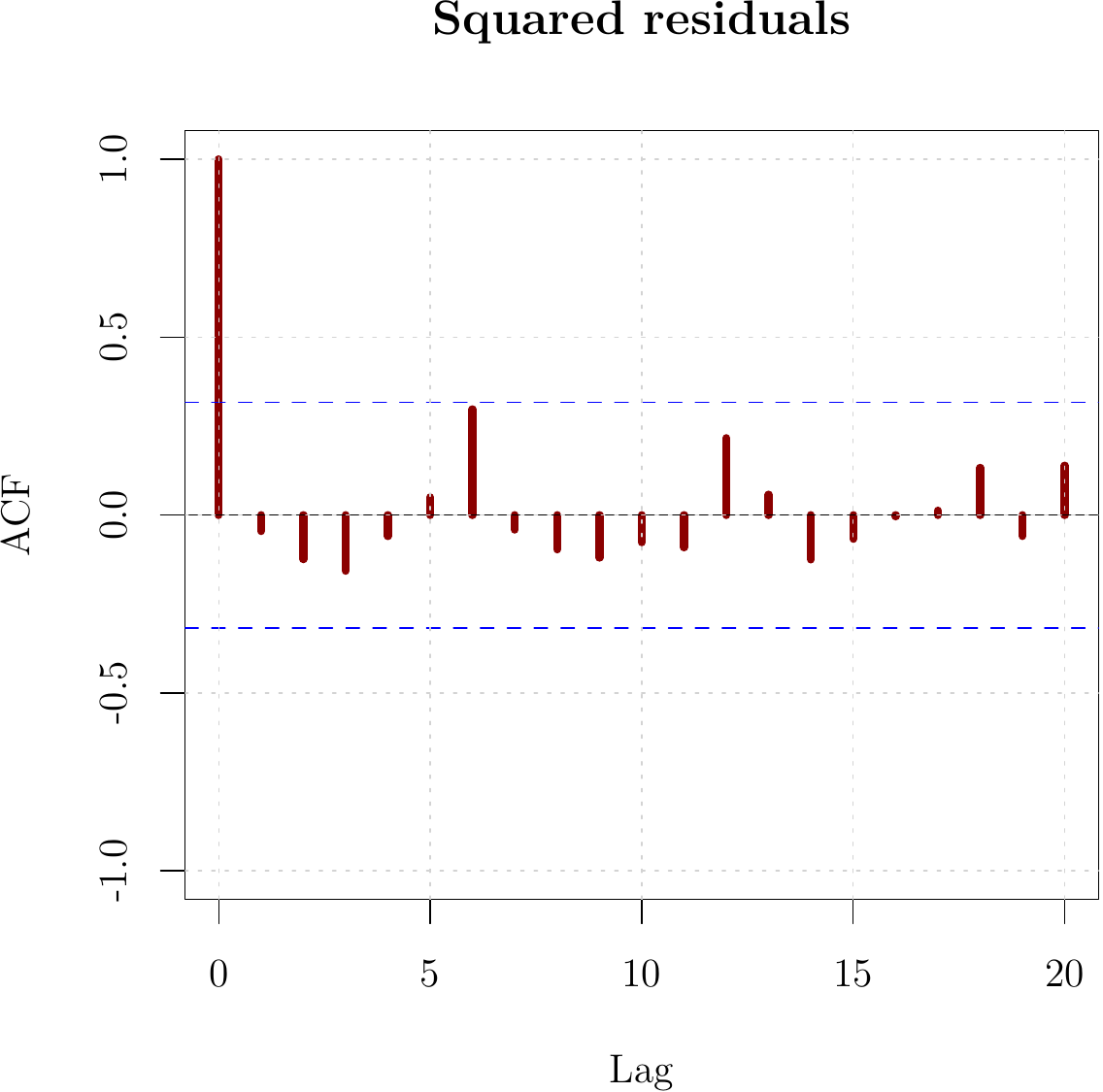}
\includegraphics[width=0.45\linewidth,keepaspectratio]{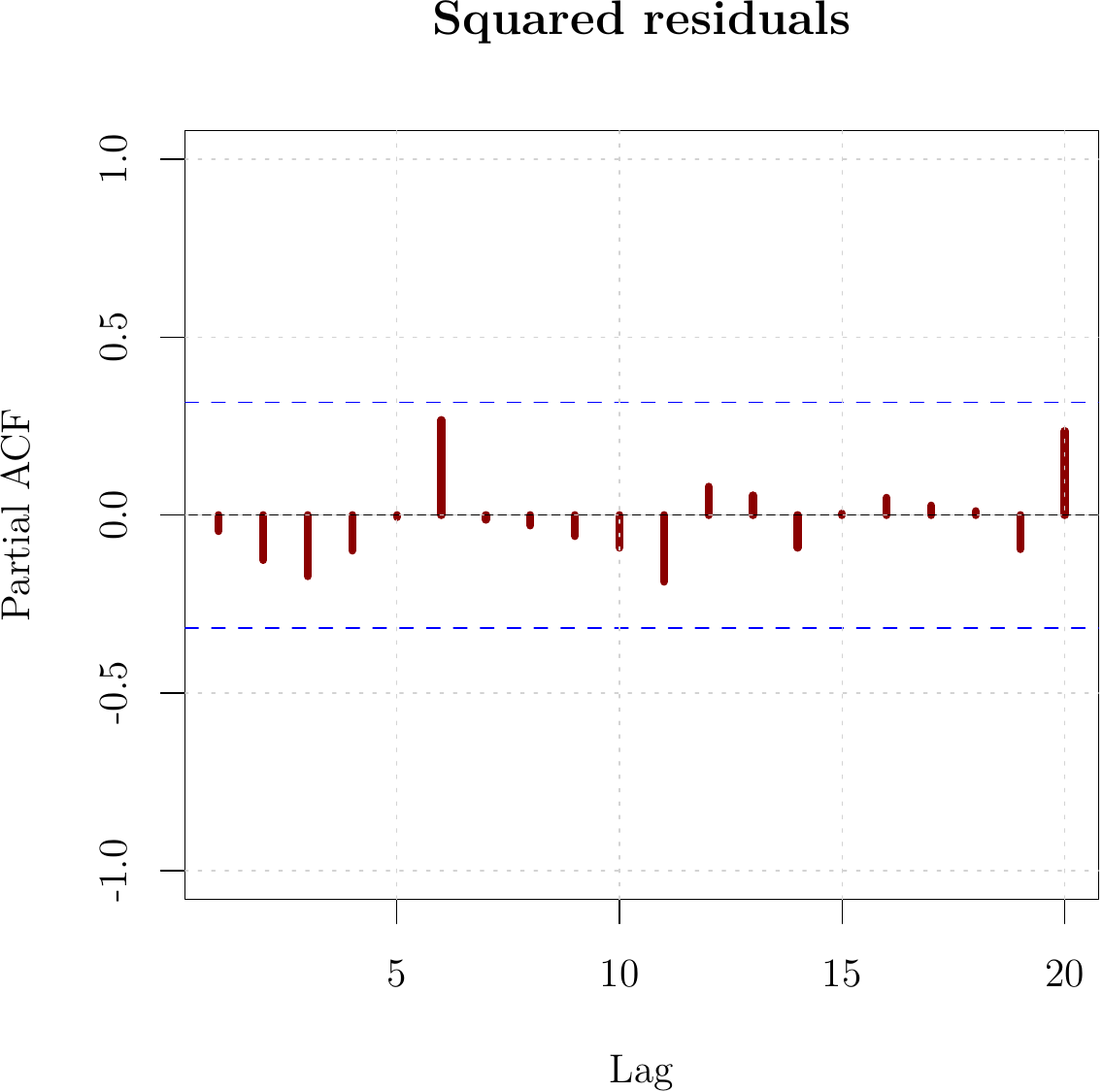}

\caption{Temperature: 30° - Diet: good. Correlograms of the residuals of the TARMA model of Eq.~(\ref{eq:tarma}). Autocorrelation function (left) and partial autocorrelation function (right). The blue dashed lines indicate the rejection bands at 99\% level.}\label{SMfig:diag4}
\end{figure}

\begin{table}[H]
  \centering
\begin{tabular}{rrrr}
\toprule
temp. & diet & W & p-value\\
\midrule
27 & poor & 0.989 & 0.849\\
27 & good & 0.984 & 0.578\\
\addlinespace
30 & poor & 0.987 & 0.730\\
30 & good & 0.963 & 0.046\\
\bottomrule
\end{tabular}
\caption{Shapiro Wilk normality test statistic and $p$-values for the residuals of the fitted threshold ARMA models.}\label{SMtab:ntest}
\end{table}

\end{document}